\documentclass[a4paper]{article}
\pdfoutput=1

\usepackage{amssymb}
\usepackage{amsmath}
\usepackage{amsfonts}
\usepackage{amsthm}
\usepackage{multirow}
\usepackage{wrapfig}
\usepackage{url}
\usepackage{authblk}

\usepackage{todonotes}

\newlength \reqlen
\setlength{\reqlen}{\parindent}

\usepackage{tikz}
\usetikzlibrary{calc,decorations.pathmorphing}

\usepackage{listings, multicol,mathtools,tocloft}

\lstdefinelanguage{algo}{%
  morekeywords={function,algorithm,push,pop,top,for,all,and,or,if,then,else,repeat,until,while,do,report,return,such,that,each,add,call,exit,let,Input,
  Output, procedure}
}

\lstset{%
  basicstyle=\footnotesize,
  mathescape=true,
  numbers=left,
  numberstyle=\tiny,
  numbersep=5pt,
  language=algo,
  frame=single
}

\newcommand{\Aa}{\mathcal{A}}
\newcommand{\Oo}{\mathcal{O}}
\newcommand{\Pp}{\mathcal{P}}

\newcommand{\Gg}{\mathcal{G}}
\newcommand{\Ga}{\mathcal{G}_\mathcal{A}}

\renewcommand{\d}{\delta}

\newcommand{\s}{\sigma}

\newcommand{\D}{\Delta}

\newcommand{\true}{\operatorname{true}}
\newcommand{\false}{\operatorname{false}}
\newcommand{\Land}{\bigwedge}
\newcommand{\Lor}{\bigvee}
\newcommand{\imp}{\Rightarrow}
\newcommand{\ttt}{\operatorname{true}}
\newcommand{\fff}{\operatorname{false}}

\newcommand{\incl}{\subseteq}

\newcommand{\Rpos}{\mathbb{R}_{\ge 0}}
\newcommand{\Nat}{\mathbb{N}}
\newcommand{\Int}{\mathbb{Z}}

\newcommand{\NP}{\operatorname{NP}}
\newcommand{\PSPACE}{\operatorname{PSPACE}}

\theoremstyle{plain}
\newtheorem{theorem}{Theorem}
\newtheorem{lemma}{Lemma}
\newtheorem{proposition}[theorem]{Proposition}
\newtheorem{definition}[theorem]{Definition}
\newcommand{\Repeat}[2]{\noindent{$\blacktriangleright$~~\textsc{\textbf{#1~\ref{#2}.\
      }}}}

\newcommand{\xra}{\xrightarrow}
\newcommand{\dra}{\dashrightarrow}
\newcommand{\vali}{\mathbf{0}}
\newcommand{\ZG}{ZG}
\newcommand{\tto}{\Rightarrow}
\newcommand{\abs}{\mathfrak{a}}
\newcommand{\simmA}{\fleq}

\newcommand{\fleq}{\preccurlyeq}
\newcommand{\alu}{\abs_{\scriptscriptstyle \fleq LU }}
\newcommand{\lleq}{\mathrel{\triangleleft}}

\newcommand{\Ggbar}{\overline{\Gg}}
\newcommand{\act}[1]{\stackrel{#1}{\longrightarrow}}
\newcommand\sem[1]{{[\![ #1 ]\!]}}
\newcommand{\es}{\emptyset}

\newcommand{\reg}[1]{[#1]^{\scriptscriptstyle d}_{\scriptscriptstyle
M}}
\newcommand{\MinSum}{\operatorname{MinSum}}
\newcommand{\regeq}{\simeq}
\newcommand{\gv}{G_{\langle v \rangle}^*}
\newcommand{\gu}{G_{\langle u \rangle}^*}
\newcommand{\tight}{\multimapboth}
\newcommand{\redarrow}{\leadsto}
\newcommand{\Prop}{\operatorname{Prop}}
\newcommand{\Vars}{\operatorname{Vars}}
\newcommand{\0}{\mathbf{0}}
\newcommand{\1}{\mathbf{1}}
\newcommand{\n}{\mathbf{n}}
\newcommand{\red}{{\operatorname{red}}}

\newcommand{\blue}{{\operatorname{blue}}}

\newcommand{\clauses}{\operatorname{Clauses}}
\newcommand{\var}{\operatorname{Var}}

\newcommand{\yellowedge}{\dra}

\newcommand{\lu}{\fleq_{\scriptscriptstyle LU}}
\newcommand{\lud}{\fleq_{\scriptscriptstyle LU}^{\scriptscriptstyle
d}}
\newcommand{\md}{\fleq_{\scriptscriptstyle M}^{\scriptscriptstyle
d}}

\newcommand{\gvlu}{G_{\langle v \rangle}^{\scriptscriptstyle LU}}
\newcommand{\gvm}{G_{\langle v \rangle}^{\scriptscriptstyle M}}
\newcommand{\gvplu}{G_{\langle v' \rangle}^{\scriptscriptstyle LU}}
\newcommand{\lureg}[1]{\langle #1 \rangle_{\scriptscriptstyle
LU}}
\newcommand{\alud}{\mathfrak{a}_{\scriptscriptstyle
\fleq LU}^{\scriptscriptstyle d}}

\newcommand{\down}[1]{{\downarrow}{#1}}

\newcommand{\ExtraLUp}{\operatorname{Extra}_{\scriptscriptstyle LU}^+}


\DeclareSymbolFont{symbolsC}{U}{txsyc}{m}{n}
\SetSymbolFont{symbolsC}{bold}{U}{txsyc}{bx}{n}
\DeclareFontSubstitution{U}{txsyc}{m}{n}
\DeclareMathSymbol{\multimapboth}{\mathrel}{symbolsC}{19}

\makeatletter
\newcommand{\xdashrightarrow}[2][]{\ext@arrow 0359\rightarrowfill@@{#1}{#2}}
\newcommand{\xdashleftarrow}[2][]{\ext@arrow 3095\leftarrowfill@@{#1}{#2}}
\newcommand{\xdashleftrightarrow}[2][]{\ext@arrow 3359\leftrightarrowfill@@{#1}{#2}}
\def\rightarrowfill@@{\arrowfill@@\relax\relbar\rightarrow}
\def\leftarrowfill@@{\arrowfill@@\leftarrow\relbar\relax}
\def\leftrightarrowfill@@{\arrowfill@@\leftarrow\relbar\rightarrow}
\def\arrowfill@@#1#2#3#4{%
  $\m@th\thickmuskip0mu\medmuskip\thickmuskip\thinmuskip\thickmuskip
   \relax#4#1
   \xleaders\hbox{$#4#2$}\hfill
   #3$%
}
\makeatother

\newcommand{\tightone}{\xrightarrow{1}}
\newcommand{\tighttwo}{\xrightarrow{2}}

\usepackage{rotating}
\newcounter{sarrow}










\begin{document}

\title{\textbf{Reachability in timed automata with diagonal constraints}}

\author[1]{Paul Gastin}
\author[2]{Sayan Mukherjee}
\author[2]{B Srivathsan}
\affil[1]{\small LSV, ENS Paris-Saclay, CNRS, France and UMI 2000 ReLaX}
\affil[2]{\small Chennai Mathematical Institute, India and
    UMI 2000 ReLaX}
\date{}

\maketitle

\begin{abstract}
We consider the reachability problem for timed automata having
diagonal constraints (like $x - y < 5$) as guards in transitions.  The
best algorithms for timed automata proceed  by enumerating reachable
sets of its configurations, stored in the form of a data structure
called ``zones''. Simulation relations between zones are essential to ensure
termination and efficiency. The algorithm employs a simulation test
of the form $Z \fleq Z'$ which ascertains that zone $Z$ does not
 reach more states than zone $Z'$, and hence further enumeration from
 $Z$ is not necessary. No effective simulations are
known for timed automata containing diagonal constraints as guards. In this paper,
we propose a simulation relation $\lud$ for timed automata with
diagonal constraints. On the negative side, we show that deciding $Z
\not \lud$  is NP-complete. On the positive side, we
identify a witness for $Z \not \lud Z'$ and propose an
algorithm to decide the existence of such a witness using an SMT 
solver. The shape of the witness reveals that the simulation test is
likely to be efficient in practice.  

\end{abstract}

\section{Introduction}
\label{sec:introduction}

Timed automata~\cite{Alur:TCS:1994} are models of real-time
systems. They are finite automata equipped with real valued variables
called \emph{clocks}. These clocks can be used to constrain the time
difference between events: for instance when event $a$ occurs a clock
$x$ can be set to $0$ in the transition reading $a$, and when an event
$b$ occurs, the transition reading $b$ can check if $x \le 4$. These
constraints on clocks are called \emph{guards} and clocks which are
made $0$ in a transition are said to be \emph{reset} in the
transition. Guards of the form $x - y > 5$ are called \emph{diagonal
  constraints}. They are convenient for checking conditions about
events in the past: when an event $c$ occurs, we want to check that
between events $a$, $b$ which occurred previously (in the said order),
the time gap is at least $5$. One can then reset a clock $x$ at $a$,
$y$ at $b$ and check for $x - y > 5$ at $c$. It is known that such
diagonal constraints do not add to the expressive power: each timed
automaton can be converted into an equivalent automaton with no
diagonal guards, that is, a \emph{diagonal-free}
automaton~\cite{Berard:1998:FundInf}. However, this conversion leads
to an exponential blowup in the number of states, which is unavoidable
in general~\cite{Bouyer:2005:Conciseness}.

State reachability is a basic question in timed automata
verification. The problem is to decide if there exists a run of the
automaton from the initial state to a given accepting state. This is
known to be PSPACE-complete~\cite{Alur:TCS:1994}. In practice, the
best algorithms for reachability proceed by a forward analysis of the
automaton: starting from its initial state, enumerate reachable sets
of its configurations stored in the form of a data structure called
\emph{zones}. Zones are conjunctions of difference constraints (like
$x - y < 6~\land~ w > 4$) which can be efficiently represented and
manipulated using Difference Bound
Matrices~\cite{Dill:1990:DBM}. \emph{Abstractions} of zones are
necessary for termination and efficiency of this enumeration. These
abstractions are functions with a finite range mapping each set of
configurations to a bigger set.  For diagonal free timed automata
various implementable abstraction functions are known
\cite{Behrmann:STTT:2006,Herbreteau:IandC:2016}. For timed automata
with diagonal constraints, no such abstraction functions are known and
such a forward analysis method does not work. A naive method would be
to analyze the equivalent diagonal free automaton, but then this
introduces a (systematic)
blowup. 

Abstractions of zones can be used in two ways during the forward
analysis: explicitly or implicitly. In the explicit case: each time a
new zone $Z$ appears, the abstraction function $\abs$ is applied on it
and $\abs(Z)$ is stored. Further enumeration starts from
$\abs(Z)$. For this explicit method to work, $\abs(Z)$ needs an
efficient representation. Hence only abstractions where $\abs(Z)$ is
also a zone (also called \emph{convex abstractions}) are
used. $\ExtraLUp$\cite{Behrmann:STTT:2006} is the best known convex
abstraction for diagonal free automata and is implemented in the
state-of-the-art tool UPPAAL~\cite{Bengtsson:2004:UPPAAL}. In the
implicit case, zones are not extrapolated and are stored as they
are. Each time a new zone $Z$ appears, it is checked if there exists
an already visited zone $Z'$ such that $Z \incl \abs(Z')$. Intuitively
this means that zone $Z$ cannot see more states than $Z'$ and hence
the enumeration at $Z$ can stop. Given that $\abs$ has finite range,
the computation terminates. Since abstractions of zones are not stored
explicitly, there is no restriction for $\abs$ to result in a zone,
but an efficient inclusion test $Z \incl \abs(Z')$ is necessary as
this test is performed each time a new zone appears. For diagonal-free
automata, the best known abstraction is $\alu$ and it subsumes
$\ExtraLUp$. The inclusion test $Z \incl \alu(Z')$ can be done in
$\Oo(|X|^2)$ where $X$ is the number of
clocks~\cite{Herbreteau:IandC:2016}. In both cases - explicit or
implicit - it is important to have an abstraction that transforms
zones into as big sets as possible, so that the enumeration can
terminate with fewer zone visits.

In this paper, we are interested in the implicit method for timed
automata with diagonal constraints. Since the abstractions that are
usually used are based on simulation relations, the inclusion test $Z
\incl \abs(Z')$ boils down to a simulation test $Z \fleq Z'$ between
zones. In particular, the $\alu$ abstraction is based on a simulation
relation $\lu$~\cite{Behrmann:STTT:2006}. We choose to view the use of
implicit abstractions as simulations between zones. From the next
section, we refrain from using abstractions and present them as
simulations instead. We propose a simulation $\lud$, an extension to
$\lu$ that is sound for diagonal constraints. Contrary to the diagonal
free case, we show that the simulation test $Z \not \lud Z'$ is
NP-complete. But on the positive side, we give a characterization of a
witness for the fact that $Z \not \lud Z'$ and encode the existence of
such a witness as the satisfiability of a formula in linear
arithmetic. This gives an algorithm for $Z \not \lud Z'$. The shape of
the witness shows that in practice the number of potential candidates
would be few and the simulation test is likely to be efficient. We
have implemented our algorithm in a prototype
tool.
Preliminary experiments demonstrate that the number of zones
enumerated using $\lud$ simulation drastically reduces compared to the
number of zones obtained by doing the diagonal free conversion
followed by a forward analysis using $\lu$. This simulation relation
$\lud$ and the associated simulation test also open the door for
extending optimizations studied for diagonal free
automata~\cite{Herbreteau:2013:CAV}, to the case of diagonal
constraints; and also extending analysis of priced timed automata with
diagonal constraints~\cite{Bouyer::2016:CAV,Reynier:toolreport}.

\textbf{Related work:} Convex abstractions used for diagonal free
timed automata had been in use also for diagonal constraints in tools
like UPPAAL and KRONOS~\cite{KRONOS}. It was shown in
\cite{Bouyer:2004:forwardanalysis} that this is incorrect: there are
automata with diagonal constraints for which using $\ExtraLUp$ will
give a yes answer to the reachability problem, whereas the accepting
state is not actually reachable in the automaton. This is because the
extra valuations added during the computation enable guards which were
originally not enabled in the automaton, leading to spurious
executions.  A non convex abstraction for diagonal constraints appears
in \cite{Bouyer:2004:forwardanalysis}, but the corresponding inclusion
test is not known. Current algorithm for diagonal constraints proceeds
by an abstraction refinement
method~\cite{Bouyer:2005:diagonal-refinement}.

\textbf{Organization of the paper:} Section~\ref{sec:preliminaries}
gives the preliminary definitions. In
Section~\ref{sec:new-simul-relat}, we propose a simulation relation
$\lud$ between zones and observe some of its properties. Section
\ref{sec:algor-zone-incl-LU} gives an algorithm for $Z \not \lud Z'$
via reduction to an SMT formula. Section \ref{sec:hardness} shows that
$Z \not \lud Z'$ is $\NP$-hard by a reduction from 3-SAT. We report
some experiments and conclude in
Section~\ref{sec:exper-concl}. Missing proofs can be found in 
the Appendix.

\section{Preliminaries}
\label{sec:preliminaries}


Let $\Nat$ denote the set of natural numbers, $\Int$ the set of
integers and $\Rpos$ the set of non-negative reals. We denote the
power set of a set $S$ by
$\Pp(S)$. 
A \emph{clock} is a variable that ranges over $\Rpos$. Fix a finite
set of clocks $X$. A \emph{valuation} $v$ is a function which maps
each clock $x \in X$ to a value in $\Rpos$. Let $\Phi(X)$ denote the
set of \emph{clock constraints} $\phi$ formed using the following
grammar: $\phi := x \sim c~ \mid ~ x - y \sim c ~\mid~ \phi \land
\phi$, where $x, y \in X$, $c \in \Nat$ and ${\sim} \in \{<, \le, = ,
\ge, >\}$
Constraints of the form $x - y \sim c$ are called \emph{diagonal
  constraints}.
For a clock constraint $\phi$, we write $v \models \phi$ if the
constraint given by $\phi$ is satisfied by replacing each clock $x$ in
$\phi$ with $v(x)$.  For $\d\in\Rpos$, we write $v+\d$ for the
valuation defined by $(v+\d)(x)=v(x)+\d$ for all clocks $x$.  For a
set $R$ of clocks, we write $[R]v$ for the valuation obtained by
setting each clock $x \in R$ to $0$ and each $x \notin R$ to $v(x)$.

\begin{definition}[Timed Automata] A \emph{timed automaton} $\Aa$ is a
  tuple $(Q, X, \D, q_0, F)$ where $Q$ is a finite set of states, $X$
  is a finite set of clocks, $q_0 \in Q$ is the initial state, $F
  \incl Q$ is a set of accepting states and $\Delta \incl Q \times
  \Phi(X) \times \Pp(X) \times Q$ is the transition relation. Each
  transition in $\Delta$ is of the form $(q, g, R, q')$ where $g \in
  \Phi(X)$ is called the \emph{guard} of the transition and $R \incl
  X$ is the set of clocks that are said to be \emph{reset} at the
  transition.
\end{definition}

Timed automata with no diagonal constraints are called
\emph{diagonal-free}.
The semantics of timed automata is described as a transition system
over the space of its \emph{configurations}. A configuration is a pair
$(q, v)$ where $q \in Q$ is a state and $v$ is a valuation. There are
two kinds of transitions. \emph{Delay} transitions are given by $(q,
v) \to^{\d} (q, v+\d)$ for each $\d \in \Rpos$, and \emph{action}
transitions are given by $(q, v) \to^{t} (q', v')$ for each transition
$t \in \Delta$ of the form $(q, g, R, q')$, if $v \models g$ and $v' =
[R]v$.
The initial configuration is $(q_0, \vali)$ where $\vali$ denotes the
valuation mapping each clock to $0$.  Note that the above transition
system is infinite. A \emph{run} of a timed automaton is an
alternating sequence of delay and action transitions starting from the
initial configuration: $(q_0, \vali) \to^{\d_0}\to^{t_0} (q_1, v_1)
\to^{\d_1}\to^{t_1} \cdots (q_n, v_n)$. A run of the above form is
said to be accepting if the last state $q_n \in F$.  The
\emph{reachability problem} for timed automata is the following: given
an automaton $\Aa$, decide if there exists an accepting run. This
problem is known to be $\PSPACE$-complete~\cite{Alur:TCS:1994}. As the
space of configurations is infinite, the main challenge in solving
this problem involves computing a finite (and as small as possible)
abstraction of the timed automaton semantics. 
In this section, we recall the reachability algorithm for this
diagonal free case. For the rest of the section we fix a timed
automaton $\Aa$.

Instead of working with configurations, standard solutions in timed
automata analysis work with sets of valuations. The ``successor''
operation is naturally extended to the case of sets. For every
transition $t$ of $\Aa$ and every set of valuations $W$, we have a
transition $\tto^t$ defined as follows: $(q, W) \tto^t (q',W') $ where
$W' = \{ v' ~\mid~ \exists v \in W, ~\exists \d \in \Rpos:~ (q,v)
\to^t \to^\d (q',v') \}$.  Note that in the definition we have a
$\to^\d$ following the $\to^t$. This ensures that the ${\tto}$
successors (where ${\tto} =\bigcup_{t\in\Delta}\tto^{t}$) are closed
under time successors. Moreover, the sets which occur during timed
automata analysis using the $\tto$ relation have a special structure,
and are called \emph{zones}. A zone is a set of valuations which can
be described using a conjunction of constraints of the form: $x \sim
c$ or $x - y \sim c$ where $x, y \in X$ and $c \in \Nat$. Zones can be
efficiently represented using Difference Bound Matrices (DBMs). To
each automaton $\Aa$, we associate a transition system consisting of
(state, zone) pairs: the \emph{zone graph} $\ZG(\Aa)$ is a transition
system whose nodes are of the form $(q,Z)$ where $q$ is a state of
$\Aa$ and $Z$ is a zone. The initial node is $(q_0, Z_0)$ with $Z_0 =
\{ \vali + \d~\mid~ \d \ge 0\}$. Transitions are given
by 
$\tto$.

\begin{lemma}
  \label{lem:zg-correct}
  The zone graph $\ZG(\Aa)$ is sound and complete for
  reachability~\cite{Daws:TACAS:1998}.
\end{lemma}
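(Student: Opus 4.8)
The plan is to establish the sharper statement behind Lemma~\ref{lem:zg-correct}: for every state $q$,
\[
\{\, v \mid (q_0,\vali)\to^{*}(q,v)\,\} \;=\; \bigcup\{\, Z \mid (q_0,Z_0)\tto^{*}(q,Z)\,\},
\]
where $\to^{*}$ denotes an alternating sequence of delay and action transitions starting at $(q_0,\vali)$ and $\tto^{*}$ the reflexive–transitive closure of $\tto$ in $\ZG(\Aa)$. Reachability soundness and completeness are the two inclusions: if a node $(q,Z)$ with $q\in F$ and $Z\neq\es$ is reachable then any $v\in Z$ witnesses an accepting run (soundness), and conversely an accepting run ending in $(q_n,v_n)$ forces a reachable node $(q_n,Z_n)$ with $v_n\in Z_n\neq\es$ (completeness). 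A key structural fact used throughout is that every zone occurring in $\ZG(\Aa)$ is closed under time successors: this holds for $Z_0$ by definition, and is preserved by $\tto$ because $\tto^{t}$ appends a delay after the action.

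For the ``$\supseteq$'' inclusion (soundness) I would prove by induction on the length of a $\ZG(\Aa)$-path that whenever $(q_0,Z_0)\tto^{*}(q,Z)$ and $v\in Z$, there is a run $(q_0,\vali)\to^{*}(q,v)$. The base case uses $v\in Z_0$, i.e.\ $v=\vali+\d$, so $(q_0,\vali)\to^{\d}(q_0,v)$. For the step, the last transition $(q_{n-1},Z_{n-1})\tto^{t_{n-1}}(q_n,Z_n)$ together with $v_n\in Z_n$ yields, by the very definition of $\tto^{t_{n-1}}$, some $v\in Z_{n-1}$ and $\d\ge 0$ with $(q_{n-1},v)\to^{t_{n-1}}\to^{\d}(q_n,v_n)$; prepend the run to $(q_{n-1},v)$ given by the induction hypothesis.

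For the ``$\subseteq$'' inclusion (completeness) I would take a run $(q_0,\vali)\to^{\d_0}\to^{t_0}(q_1,v_1)\to^{\d_1}\cdots\to^{t_{n-1}}(q_n,v_n)$ and show by induction on $i$ that $(q_0,Z_0)\tto^{t_0}\cdots\tto^{t_{i-1}}(q_i,Z_i)$ exists in $\ZG(\Aa)$ with $v_i\in Z_i$. The base case is $v_0=\vali\in Z_0$. For the step: from $v_i\in Z_i$ and time-closure of $Z_i$ we get $v_i+\d_i\in Z_i$; since the run exists we have $v_i+\d_i\models g_i$ and $v_{i+1}=[R_i](v_i+\d_i)$, so instantiating the definition of $\tto^{t_i}$ with the valuation $v_i+\d_i\in Z_i$ and delay $0$ places $v_{i+1}$ in $Z_{i+1}$.

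The only delicate point is the off-by-one in the placement of delays: runs of $\Aa$ are written as delay-then-action blocks, whereas a $\tto^{t}$-step performs the action first and a delay afterwards. This is reconciled exactly by time-closure of the zones in $\ZG(\Aa)$ — the ``missing'' leading delay of a $\tto$-step is absorbed into the source zone, and the trailing delay of a $\tto$-step is taken to be $0$ when matching a run that has already placed its delay before the action. Everything else is routine bookkeeping, and since the result is classical we may also simply cite~\cite{Daws:TACAS:1998}.
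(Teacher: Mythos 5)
The paper does not prove this lemma; it simply cites \cite{Daws:TACAS:1998}, so there is no in-paper argument to compare against. Your from-scratch proof is essentially correct, and you correctly identify the one invariant that does all the work: every zone appearing in $\ZG(\Aa)$ is closed under time successors, because $Z_0$ is and $\tto^{t}$ ends with a trailing delay.

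One point worth tightening. The opening set equality
\[
\{\, v \mid (q_0,\vali)\to^{*}(q,v)\,\} = \bigcup\{\, Z \mid (q_0,Z_0)\tto^{*}(q,Z)\,\}
\]
is not literally true if $\to^{*}$ is read as ``run'' in the paper's strict sense, i.e.\ an alternating sequence \emph{ending in an action transition}: the right-hand side is time-closed, while the set of valuations reachable by such a run need not be. What is true is that the right-hand side is the time-closure of the left, or equivalently the equality holds when $\to^{*}$ denotes the reflexive--transitive closure of $\to^{\d}\cup\to^{t}$ (so a trailing delay is permitted). You flag the off-by-one in your closing paragraph, and your soundness induction does cope with it — gluing $\to^{t_{n-1}}\to^{\d}$ onto the run from the induction hypothesis yields a bona fide run to $(q_n,v_n-\d)$, which reaches state $q_n$, and reachability is about states — so the conclusion of the lemma is unaffected. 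Just phrase the set-level claim to match, or state directly the weaker (and sufficient) fact that a node $(q,Z)$ with $Z\neq\es$ is reachable in $\ZG(\Aa)$ iff some $(q,v)$ is reachable in $\Aa$.
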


Although the zone graph is a more succinct representation than the
space of configurations, it could still be infinite. The reachability
algorithm employs \emph{simulation relations} between zones to obtain
a finite zone graph that is sound and complete\footnote{Existing
  reachability algorithms make use of what are known as abstraction
  operators~\cite{Behrmann:STTT:2006, Herbreteau:IandC:2016},
  which are based on simulation relations. Instead of abstractions, we choose to present the algorithm directly using
  simulations between zones.  }.





We start by defining this notion of simulations at the level of
configurations. A \emph{(time-abstract) simulation} between pairs of
configurations of $\Aa$ is a reflexive and transitive relation $(q, v)
\simmA (q',v')$ such that: $q = q'$; for every $(q, v) \to^\d (q,
v+\d)$ there exists $\d'$ such that $(q, v') \to^{\d'} (q, v'+\d')$
satisfying $(q,v + \d) \simmA (q,v'+\d')$; and if $(q,v) \to^t (q_1,
v_1)$, then there exists $(q,v') \to^t (q_1, v_1')$ satisfying $(q_1,
v_1) \simmA (q_1,v_1')$ for the same transition $t$. We say that
$(q,v)$ is simulated by $(q', v')$. We write $v \simmA v'$ if $(q, v)
\simmA (q, v')$ for all states $q$. Simulations can be extended to
relate zones in the natural way: we write $Z \simmA Z'$ if for all $v
\in Z$ there exists $v' \in Z'$ such that $v \simmA v'$.  A simulation
relation $\fleq$ is said to be finite 
if there exists $N \in \Nat$ such that for all $n > N$ and every
sequence of zones $\{ Z_1, Z_2, \dots, Z_n \}$,
there exists $i < j \le n$ such that $Z_j \fleq Z_i$.

\smallskip

\label{algo:reachability}
\textbf{Reachability algorithm.} The input to the algorithm is a timed
automaton $\Aa$. The algorithm maintains two lists Passed and Waiting,
and makes use of a finite simulation relation $\fleq$ between zones.
The initial node $(q_0, Z_0)$ is added to the Waiting list. The
algorithm repeatedly performs the following tasks:
\begin{description}
\item[Step 1.] If Waiting is empty, then return ``$\Aa$ has no
  accepting run''; else pick a node $(q, Z)$ from Waiting.
\item[Step 2.] For each successor $(q, Z) \tto (q_1, Z_1)$ such that
  $Z_1 \neq \emptyset$ perform the following operations: if $q_1$ is
  accepting, return ``$\Aa$ has an accepting run''; else check if
  there exists a node $(q_1, Z_1')$ in Passed or Waiting such that
  $Z_1 \fleq Z_1'$: if yes, ignore the node $(q_1, Z_1)$, otherwise
  add $(q_1, Z_1)$ to Waiting.
\item[Step 3.] Add $(q, Z)$ to Passed and proceed to Step 1.
\end{description}

\begin{theorem}
  \label{thm:reachability-algo-terminates}
  The reachability algorithm terminates with a correct answer.
\end{theorem}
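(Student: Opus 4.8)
The plan is to establish three things for an arbitrary execution of the algorithm: (i) it terminates, (ii) it is \emph{sound} (if it answers ``$\Aa$ has an accepting run'' then $\Aa$ does), and (iii) it is \emph{complete} (if $\Aa$ has an accepting run it answers so). Termination will use only the finiteness of $\fleq$; soundness and completeness will use Lemma~\ref{lem:zg-correct} together with the fact that $\fleq$, being a simulation, is a \emph{congruence} for the zone successor relation $\tto$: whenever $Z \fleq Z'$ and $(q,Z) \tto^{t} (q_1, Z_1)$, there is a successor $(q,Z') \tto^{t} (q_1, Z_1')$ with $Z_1 \fleq Z_1'$. This last property I would obtain by unwinding the definition of a time-abstract simulation: an arbitrary $v_1 \in Z_1$ arises from some $v \in Z$ via an action step $(q,v)\to^{t}(q_1,\bar v)$ followed by a delay $(q_1,\bar v)\to^{\d}(q_1,v_1)$; picking $v' \in Z'$ with $v \simmA v'$, pushing the action step through the simulation to get $\bar v'$ with $\bar v \simmA \bar v'$, then pushing the delay step through to get $\d'$ with $v_1 \simmA \bar v' + \d'$, and observing $\bar v' + \d' \in Z_1'$.

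For termination, I would first record two invariants of the algorithm that follow directly from its description: a node is never removed from the union of the Passed and Waiting lists (it only migrates from Waiting to Passed in Step~3), and at the moment a node $(q,Z)$ is added to Waiting in Step~2 there is no node $(q,Z')$ currently in Passed or Waiting with $Z \fleq Z'$ (otherwise the test in Step~2 would have discarded it; reflexivity of $\fleq$ in particular forbids re-adding an already present node). Now suppose the algorithm runs forever. Since $\Delta$ is finite, each processed node yields finitely many successors, so infinitely many nodes must be added to Waiting; as $Q$ is finite, infinitely many of them, say $(q,Z_1),(q,Z_2),\dots$ listed in insertion order, share a state $q$. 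Applying the definition of a finite simulation to $Z_1,\dots,Z_{N+1}$ yields $i < j \le N+1$ with $Z_j \fleq Z_i$; but $(q,Z_i)$ was inserted strictly before $(q,Z_j)$ and hence was present in Passed or Waiting when $(q,Z_j)$ was to be added, contradicting the second invariant. Thus only finitely many nodes are ever added, Step~2 is executed finitely often, and the algorithm halts.

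The heart of correctness is the following invariant, to be proved for any execution that halts with the Waiting list empty: for every node $(q,Z)$ with $Z \ne \es$ reachable from $(q_0,Z_0)$ in $\ZG(\Aa)$, the final Passed list contains a node $(q,Z'')$ with $Z \fleq Z''$. I would prove it by induction on the length of a path witnessing reachability, noting that along such a path, which ends in a non-empty zone, all zones are non-empty (a $\tto$-successor can be non-empty only if its source is). The base case is $(q_0,Z_0)$, placed in Waiting initially and hence ending in Passed. For the step, take $(q_0,Z_0)\tto^{*}(q,Z)\tto^{t}(q_1,Z_1)$ with $Z_1 \ne \es$; the induction hypothesis gives $(q,Z')$ in Passed with $Z \fleq Z'$. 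Being in Passed, $(q,Z')$ was processed in Step~2, and by the congruence property its $\tto^{t}$-successor $(q_1,Z_1')$ satisfies $Z_1 \fleq Z_1'$, so $Z_1' \ne \es$. In Step~2 either $(q_1,Z_1')$ was added to Waiting --- then it ends in Passed and $Z_1 \fleq Z_1'$ works --- or it was discarded because some $(q_1,Z_1'')$ with $Z_1' \fleq Z_1''$ was already present, in which case $Z_1 \fleq Z_1''$ by transitivity and $(q_1,Z_1'')$ ends in Passed since nodes are never removed.

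Soundness is then immediate: a trivial induction shows every node the algorithm handles is reachable in $\ZG(\Aa)$, and it answers ``accepting run'' only upon seeing a non-empty successor whose state is in $F$, so Lemma~\ref{lem:zg-correct} converts this into an accepting run of $\Aa$. For completeness, suppose $\Aa$ has an accepting run; by Lemma~\ref{lem:zg-correct} there is a path $(q_0,Z_0)\tto^{*}(q_{n-1},Z_{n-1})\tto^{t_{n-1}}(q_n,Z_n)$ in $\ZG(\Aa)$ with $q_n \in F$ and $Z_n \ne \es$ (hence $Z_{n-1}\ne\es$; we may take $n \ge 1$, since a run contains at least one action step, and otherwise $q_0 \in F$ can be checked at the start). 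If the execution ever prints ``accepting run'' we are done; otherwise it halts with Waiting empty, the invariant yields $(q_{n-1},Z')$ in Passed with $Z_{n-1} \fleq Z'$, and when this node was processed the congruence property produced a $\tto^{t_{n-1}}$-successor $(q_n,Z_n')$ with $Z_n \fleq Z_n'$, so $Z_n' \ne \es$ and $q_n \in F$, whence Step~2 must have printed ``accepting run'' --- a contradiction. The only genuinely technical step is the congruence lemma for $\fleq$ and $\tto$ (a routine unwinding of the simulation definition through one action step and one delay); the main conceptual obstacle is the covering bookkeeping --- recognising that the node certified by Step~2 may itself later be covered by yet another node, and that what rescues the argument is precisely that nodes are never deleted from $\mathrm{Passed} \cup \mathrm{Waiting}$.
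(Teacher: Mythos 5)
Your proof is correct and follows essentially the same route as the paper: the key invariant (every reachable non-empty node $(q,Z)$ of $\ZG(\Aa)$ is covered, in the sense $Z \fleq Z''$, by some $(q,Z'')$ in the final Passed list) is proved by the same induction on path length, and the case split in the inductive step (covering node added vs. itself covered, relying on nodes never being removed from $\text{Passed} \cup \text{Waiting}$) matches the paper's. You go further than the paper in two places: you actually spell out the termination argument via the insertion invariant plus the definition of a finite simulation (the paper asserts it in one sentence), and you explicitly unwind the simulation definition to get the successor-compatibility of $\fleq$ and $\tto$ and note that the relevant zones along the path are non-empty — both points the paper leaves implicit.
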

\begin{proof}
  Termination follows from the fact that the algorithm uses a finite
  simulation relation. We now focus on correctness. When the algorithm
  returns ``$\Aa$ has an accepting run'', it has detected a path in
  $\ZG(\Aa)$ leading to an accepting state. By soundness of the
  zone graph (Lemma \ref{lem:zg-correct}), the answer is correct. When
  the algorithm returns ``$\Aa$ has no accepting run'', we need to
  ensure that it has not missed any paths in the zone graph due to the
  pruning arising out of $\fleq$. We will now show that for
  every node $(q, Z)$ in $\ZG(\Aa)$ there is a node $(q, Z')$ in the
  Passed list such that $Z \fleq Z'$.

  We prove this by induction on the length of the path to $(q, Z)$
  starting from the initial node. The initial node $(q_0, Z_0)$ is
  added to the Waiting list as the initialization step. Step 1 and 2
  would be done for this node, and since we are in the case where the
  algorithm terminates due to Step 1, we infer that Step 3 was
  performed for $(q_0, Z_0)$. This shows that $(q_0, Z_0)$ is in the
  Passed list, thereby proving the base case. Suppose the hypothesis
  is true for some node $(q, Z)$ of $\ZG(\Aa)$. Consider a
  successor $(q, Z) \tto^t (q_1, Z_1)$ in $\ZG(\Aa)$. By induction
  hypothesis, there is a node $(q, Z')$ in Passed with $Z \fleq Z'$. 
  Hence Step 2 was performed on $(q, Z')$ and a successor
  $(q, Z') \tto^t (q_1, Z_1')$ was computed. From the definition of
  simulations, we get $Z_1 \fleq Z_1'$. If $(q_1, Z_1')$ was 
  added to Passed, we are done. Otherwise, we know that there exists
  a node $(q_1, Z_1'')$ in Passed or Waiting such that $Z_1' \fleq
  Z_1''$. By transitivity of $\fleq$, we get $Z_1 \fleq
  Z_1''$. If $(q_1, Z_1'')$ is in Passed, we are done. Else, it
  was in the Waiting list. Since the algorithm terminates due to Step
  1 where the Waiting list is empty, we can infer that $(q_1, Z_1'')$
  was removed from Waiting and added to Passed in its
  corresponding Step 3. 
\end{proof}

The reachability algorithm relies on an operation $Z_1 \fleq Z_1'$,
where $\fleq$ is some finite simulation relation as defined
earlier.  It has been shown that for the simulation relation $\lu$
 of \cite{Behrmann:STTT:2006} which works for diagonal free
  automata, checking $Z \lu Z'$ can be done in time
  $O(|X|^2)$~\cite{Herbreteau:IandC:2016}. Hence in diagonal free
  timed automata, this simulation test is as efficient as checking
  normal inclusion
  $Z \incl Z'$. The successor computation can also be implemented in
$O(|X|^2)$~\cite{Zhao:IPL:2005} using DBMs. These matrices can also be
viewed as graphs. We recall this graph-based representation of zones
and some of its properties.

\begin{definition}[Distance graph]
  A \emph{distance graph} $G$ has clocks as vertices, with an
  additional special vertex $x_0$ representing constant $0$. Between
  every two vertices there is an edge with a \emph{weight} of the form
  $({\lleq}, c)$ where $c\in \mathbb{Z}$ and ${\lleq} \in \{\le, < \}$
  or $({\lleq},c) = (<,
  \infty)$.
  An edge $x\act{{}\lleq c} y$ represents a constraint
  $y-x\mathrel{\lleq} c$: or in words, the distance from $x$ to $y$ is
  bounded by $c$. We let $\sem{G}$ be the set of valuations of clock
  variables satisfying all the constraints given by the edges of $G$
  with the restriction that the value of $x_0$ is $0$.
\end{definition}

We will sometimes write $0$ instead of $x_0$ for clarity. An
arithmetic over the weights $(\lleq, c)$ can be defined as
follows~\cite{Bengtsson:Springer:2004}.
\begin{description}
\item \emph{Equality} $(\lleq_1,c_1) = (\lleq_2,c_2)$ if $c_1 = c_2$
  and ${\lleq}_1 = {\lleq}_2$.
\item \emph{Addition} $(\lleq_1,c_1) + (\lleq_2,c_2) = (\lleq,c_1 +
  c_2)$ where ${\lleq} = {<}$ iff either $\lleq_1$ or $\lleq_2$ is $<$.
\item \emph{Total order} $(\lleq_1,c_1) < (\lleq_2,c_2)$ if either
  $c_1 < c_2$ or ($c_1 = c_2$ and ${\lleq}_1 = {<} $ and ${\lleq}_2 =
  {\le}$).
\end{description}
This arithmetic lets us talk about the weight of a path as the sum of
the weights of its edges.

A cycle in a distance graph $G$ is said to be \emph{negative} if the
sum of the weights of its edges is at most $(<,0)$.
A distance graph is in \emph{canonical form} if there are no negative
cycles and the weight of the edge from $x$ to $y$ is the lower bound
of the weights of paths from $x$ to $y$.  Given a distance graph, its
canonical form can be computed by using an all-pairs shortest paths
algorithm like Floyd-Warshall's~\cite{Bengtsson:Springer:2004} in time
$\Oo(|X|^3)$ where $|X|$ is the number of clocks. Note that the number
of vertices in the distance graph is $|X| + 1 $.
A folklore result is that: a distance graph $G$ has no negative cycles
iff $\sem{G}\not=\es$.  Given two distance graphs $G_1, G_2$ (not
necessarily in their canonical form), we define $\min(G_1, G_2)$ to be
the distance graph obtained by setting for each $x \to y$ the minimum
of the corresponding weights in $G_1$ and $G_2$. For two distance
graphs $G_1$ and $G_2$, we have $\sem{\min(G_1, G_2)} = \sem{G_1} \cap
\sem{G_2}$.
\label{distance-graphs}

A simulation relation for timed automata with diagonal constraints was
proposed in \cite{Bouyer:2004:forwardanalysis}, but it has not been
used in the reachability algorithm since no algorithm for
the zone simulation test was known. 



\section{A new simulation relation in the presence of diagonal
  constraints}
\label{sec:new-simul-relat}

In this section, we introduce a new simulation relation $\lud$ which
extends the $\lu$ simulation of \cite{Behrmann:STTT:2006}.  For this,
we first assume that all guards in timed automata are rewritten in the
form $ x - y \lleq c$ or $x \lleq c$, where $c \in \mathbb{Z}$ and
${\lleq} \in \{<, \le\}$. We will also assume that $X$ is a set of
clocks including the $0$ clock.

\begin{definition}[LU-bounds]
  An \emph{$LU$ bounds function} is a pair of functions $L: X \times X
  \mapsto \mathbb{Z} \cup \{ \infty \}$ and $U: X \times X \mapsto
  \mathbb{Z} \cup \{-\infty\}$ mapping each clock difference $x - y$
  to a constant or $\infty$ or $-\infty$ such that the conditions
  below are satisfied (we write $L(x-y), U(x-y)$ for $L(x,y)$ and
  $U(x,y)$ respectively):
  \begin{itemize}
  \item either $L(x-y) = \infty$ and $U(x-y) = -\infty$, or $L(x - y)
    \le U(x - y)$ for all distinct pairs of clocks $x, y \in X$, 
  \item $L(x - 0) = 0$ and $U(0 - x ) = 0$ for all non zero clocks $x
    \in X$
  \end{itemize}
\end{definition}

The $L$ stands for \emph{lower} and $U$ stands for
\emph{upper}. Intuitively, each $LU$-bounds function corresponds to a
set of guards given by $ x - y \lleq c $ with $L(x - y) \le c \le U(x-
y)$. We will now define a simulation relation $\lud$ between
valuations parameterized by $LU$-bounds. The idea is to give a
relation $v \lud v'$ such that $v'$ satisfies all guards compatible
with the parameter $LU$ that $v$ satisfies. To achieve this, the
situation as illustrated in
Figure~\ref{fig:new-lu-d-illustration}
needs to be avoided. This is formalized by the following definition
and the subsequent lemma.

\begin{figure}[t]
\centering

%
%
%
%
%
%
\begin{tikzpicture}
\shade [inner color = gray!30, outer color=gray!60] (2.5, -0.1)
rectangle (0, 0.1);
\draw [thick] (0,0) -- (5,0);


\draw (2.5, -0.2) -- (2.5, 0.2);
\node at (2.5, -0.4) {\tiny $d$};

\fill (1.3, 0) circle (2pt);
\node at (1.3, 0.3) {\tiny $v(x)-v(y)$};

\fill (3.5, 0) circle (2pt);
\node at (3.5, 0.3) {\tiny $v'(x)-v'(y)$};

\node at (2.5, -0.8) {\tiny Do not relate $v$ and $v'$ if there is a
  guard $x - y \le d$ or $x - y < d$};
\end{tikzpicture}

\medskip

%
%
%
%
%
%
\caption{Black dots illustrate the values of $v(x) - v(y)$ and $v'(x)
  - v'(y)$. The value $v(x) - v(y)$
  satisfies the guard $x - y\lleq d$ but $v'(x) - v'(y)$ does not
  satisfy the same guard.}
\label{fig:new-lu-d-illustration}
\end{figure}

\begin{definition}[LU-preorder $\lud$]\label{def:lu-d-preorder} Let
  $LU$ be a bounds function. A valuation $v'$ simulates a valuation
  $v$ with respect to $LU$, written as $v \lud v'$, if for every pair
  of distinct clocks $x, y \in X$ the following hold:
  \begin{itemize}
  \item $v'(x) - v'(y) < L(x-y)$ if $v(x) - v(y) < L(x - y)$
  \item $v'(x) - v'(y) \le v(x) - v(y)$ if $L(x-y) \le v(x) - v(y) \le
    U(x - y)$
  \end{itemize}
  For a valuation $v$, we write $\lureg{v}$ for the set of all $v'$
  such that $v \lud v'$.
\end{definition}

\begin{lemma}
  \label{lem:lu-satisfies-same-guards}
  Let $x, y$ be distinct clocks in $X$, and $x - y \lleq c$ with $c
  \in \mathbb{Z}$ be a guard. Let $LU$ be a bounds function such that
  $L(x-y) \le c \le U(x-y)$.  Then, for every pair of valuations $v,
  v'$ such that $v \lud v'$, if valuation $v \models x - y \lleq c$
  then $v' \models x - y \lleq c$.
\end{lemma}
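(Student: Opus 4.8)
The plan is to argue by a direct case distinction on where the value $v(x) - v(y)$ sits relative to the lower bound $L(x-y)$, feeding each case into the matching clause of Definition~\ref{def:lu-d-preorder}. Before that, I would dispatch a bookkeeping point: the hypothesis $L(x-y) \le c \le U(x-y)$ with $c \in \mathbb{Z}$ rules out the degenerate alternative in the definition of $LU$-bounds (the branch with $L(x-y) = \infty$ and $U(x-y) = -\infty$), since $L(x-y) \le c$ forces $L(x-y) \ne \infty$ and $c \le U(x-y)$ forces $U(x-y) \ne -\infty$. Hence both $L(x-y)$ and $U(x-y)$ are genuine integers and $L(x-y) \le c \le U(x-y)$ holds as an ordinary inequality of integers.

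Now assume $v \models x - y \lleq c$, i.e.\ $v(x) - v(y) \lleq c$ (in particular $v(x) - v(y) \le c$), and split on the position of $v(x)-v(y)$. In the first case, $v(x) - v(y) < L(x-y)$: the first clause of $v \lud v'$ gives $v'(x) - v'(y) < L(x-y) \le c$, so $v'(x) - v'(y) < c$, which implies $v'(x) - v'(y) \lleq c$ whether $\lleq$ is $<$ or $\le$; thus $v' \models x - y \lleq c$.

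In the second case, $v(x) - v(y) \ge L(x-y)$: together with $v(x) - v(y) \le c \le U(x-y)$ this yields $L(x-y) \le v(x) - v(y) \le U(x-y)$, so the second clause of $v \lud v'$ applies and gives $v'(x) - v'(y) \le v(x) - v(y)$. Chaining with $v(x) - v(y) \lleq c$ then gives $v'(x) - v'(y) \lleq c$ (transitivity of $\le$ when $\lleq$ is $\le$; and $a \le b < c \Rightarrow a < c$ when $\lleq$ is $<$), so again $v' \models x - y \lleq c$. Since the two cases are exhaustive, the lemma follows.

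There is no substantial obstacle here: the only thing that needs care is tracking the strict/non-strict flag $\lleq$ correctly in the final chaining steps and noticing up front that the degenerate $L{=}\infty,\,U{=}{-\infty}$ branch is vacuous under the hypothesis; everything else is the two-line case analysis above.
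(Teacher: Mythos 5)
Your proof is correct and follows essentially the same case analysis as the paper: the paper splits on three regions (below $L$, between $L$ and $U$, above $U$) and shows the third is contradicted by $v \models x - y \lleq c$, while you fold that observation into a two-way split so the $\ge L$ case lands directly in the middle region; this is a trivial reorganization of the same argument. The extra remark that the hypothesis rules out the degenerate $L=\infty,\,U=-\infty$ branch is a nice piece of bookkeeping the paper leaves implicit.
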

\begin{proof}
  Assume $v \lud v'$ and $ v \models x - y \lleq c$.  If $v(x)
  - v(y) < L(x-y)$ then $v'(x) - v'(y) < L(x-y)$. Hence, $v'(x) -
  v'(y) < c$ as $L(x - y) \le c$.  If $L(x-y) \le v(x) - v(y) \le U (x
  - y)$ then $v'(x) - v'(y) \le v(x) - v(y)$ and hence $v'(x) - v'(y)
  \lleq c$.  If $v(x) - v(y) > U (x-y)$ then as $U(x-y) \ge c$, we get
  $v \not\models x - y \lleq c$, contradicting our assumption on $v$.
 %
\end{proof}

The next lemmas show that time delay preserves $\lud$ from two
valuations $v$ and $v'$ with $v \lud v'$. In fact, it is strong in the
sense that if we delay $\d$ from $v$, then the same delay from $v'$
satisfies the $LU$ preorder conditions.

\begin{lemma}
  \label{lem:lud-preserves-delay}
  Let $LU$ be a bounds function. For every pair of valuations $v$ and
  $v'$, if $v \lud v'$, then $v + \d \lud v'+ \d$ for all $\d \ge 0$.
\end{lemma}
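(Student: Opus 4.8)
The plan is to verify Definition~\ref{def:lu-d-preorder} directly for the valuations $v+\d$ and $v'+\d$, exploiting the fact that a uniform delay does not change any clock difference. The key observation is that $(v+\d)(x) - (v+\d)(y) = v(x) - v(y)$ and similarly $(v'+\d)(x) - (v'+\d)(y) = v'(x) - v'(y)$ for every pair of clocks $x, y \in X$: the added $\d$ cancels out in the difference. So the conditions to be checked for $v+\d \lud v'+\d$ are, term by term, literally the same inequalities as the conditions that hold because $v \lud v'$.

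Concretely, I would fix a pair of distinct clocks $x, y \in X$ and split into the two cases of Definition~\ref{def:lu-d-preorder}. If $(v+\d)(x) - (v+\d)(y) < L(x-y)$, then by the difference-invariance $v(x) - v(y) < L(x-y)$, so the hypothesis $v \lud v'$ gives $v'(x) - v'(y) < L(x-y)$, and again by invariance $(v'+\d)(x) - (v'+\d)(y) < L(x-y)$, as required. Similarly, if $L(x-y) \le (v+\d)(x) - (v+\d)(y) \le U(x-y)$, then $L(x-y) \le v(x) - v(y) \le U(x-y)$, so $v \lud v'$ yields $v'(x) - v'(y) \le v(x) - v(y)$, and invariance on both sides turns this into $(v'+\d)(x) - (v'+\d)(y) \le (v+\d)(x) - (v+\d)(y)$. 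Since $x, y$ were arbitrary, $v+\d \lud v'+\d$.

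There is essentially no obstacle here: the statement is a one-line consequence of the fact that $\lud$ is defined purely in terms of clock differences, which are delay-invariant. The only thing to be a little careful about is that $X$ is stipulated to contain the $0$ clock, but since valuations by definition map $0$ to $0$ and delay does not affect it (we only add $\d$ to genuine clocks, keeping $x_0 = 0$), all the difference identities still hold when one of $x, y$ is the $0$ clock, so no special casing is needed. I would present the proof in two or three sentences along the lines above.
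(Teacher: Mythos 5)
There is a genuine gap. Your central claim — that $(v+\d)(x)-(v+\d)(y) = v(x)-v(y)$ for \emph{every} pair $x,y\in X$ — is false precisely when one of $x,y$ is the $0$ clock, which you then incorrectly dismiss. Under the paper's convention the $0$ clock represents the constant $0$ and is \emph{not} advanced by delay, while genuine clocks are. So for a genuine clock $x$ we have $(v+\d)(x)-(v+\d)(x_0)=v(x)+\d$, and $(v+\d)(x_0)-(v+\d)(y)=-v(y)-\d$: these differences \emph{do} shift with $\d$. Your parenthetical remark even states "we only add $\d$ to genuine clocks, keeping $x_0=0$", which directly contradicts your conclusion that the difference identities still hold for pairs involving $x_0$.

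This is not a cosmetic omission — it is the only interesting part of the lemma. For pairs of genuine clocks the statement is, as you say, a one-liner. For the pairs $(x,x_0)$ and $(x_0,y)$ the lemma would actually be \emph{false} without the structural requirements $L(x-0)=0$ and $U(0-y)=0$ built into the definition of an $LU$-bounds function, and your proof never invokes them. Concretely, to show $L(x-0)\le v(x)+\d \le U(x-0)$ implies $(v'+\d)(x)\le(v+\d)(x)$, one must observe that $L(x-0)=0\le v(x)\le v(x)+\d\le U(x-0)$ so that the hypothesis $v\lud v'$ becomes applicable at the \emph{undelayed} valuations; the first branch of the definition cannot arise because $L(x-0)=0$ forces $v(x)+\d\ge L(x-0)$. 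The symmetric argument for $x=x_0$ uses $U(0-y)=0$. These two cases are exactly what the paper's proof carries out, and they need to be added to make your argument correct.
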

\begin{proof}
  We need to show that the conditions of
  Definition~\ref{def:lu-d-preorder} hold for $(v+\d)(x) - (v+\d)(y)$ and
  $(v'+\d)(x) - (v'+\d)(y)$ using the fact that $v \lud v'$.

  When $x, y$ are distinct clocks with neither of them being $0$, we
  have $(v + \d)(x) - (v+\d)(y) = v(x) - v(y)$ and $(v'+\d)(x)-(v'+\d)(y) =
  v'(x)-v'(y)$. Conditions given in Definition \ref{def:lu-d-preorder}
  are automatically satisfied for such $x,y$ since $v \lud v'$.

  When $y$ is the $0$ clock, the first case in Definition
  \ref{def:lu-d-preorder} cannot arise as $L(x - 0) = 0$. Suppose $L(x
  - 0) \le (v+\d)(x) \le U(x-0)$. Then $L(x - 0) \le v(x) \le U(x - 0)$
  and hence $v'(x) \le v(x)$ as $v \lud v'$. This gives $(v' + \d)(x)
  \le (v+\d)(x)$.

  When $x$ is the $0$ clock, note that $-(v+\d)(y) = -v(y) - \d$,
  $-(v'+\d)(y) = -v'(y) - \d$.  Additionally, as $U(0-y) = 0$, either
  both $-v(y) < L(0-y)$ and $-v'(y)<L(0-y)$ or $-v'(y) \le -v(y)$.
  This entails that the conditions for $\lud$ hold for $-(v+\d)(y)$ and
  $-(v'+\d)(y)$.
\end{proof}

The next lemma shows that resets preserve $\lud$ under certain
conditions on $LU$.

\begin{lemma}
  \label{lem:lu-satisfies-resets}
  Let $LU$ be a bounds function satisfying $U(x-0)\ge U(x-y)$ for all
  $y \in X$ and $L(0-y) \le L(x-y)$ for all $x \in X$. Then, if $v
  \lud v'$, then $[R]v \lud [R]v'$ for every $R \subseteq X$.
\end{lemma}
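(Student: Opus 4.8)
The plan is to verify the two conditions of Definition~\ref{def:lu-d-preorder} for the valuations $[R]v$ and $[R]v'$ by a case analysis on which of the clocks $x, y$ lie in the reset set $R$. Write $w = [R]v$ and $w' = [R]v'$; recall that $w(z) = 0$ if $z \in R$ and $w(z) = v(z)$ otherwise, and similarly for $w'$. The four cases are: (i) both $x, y \in R$; (ii) both $x, y \notin R$; (iii) $x \in R$, $y \notin R$; (iv) $x \notin R$, $y \in R$.

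Cases (i) and (ii) are immediate. If $x, y \in R$ then $w(x) - w(y) = 0 = w'(x) - w'(y)$, and the $0$-clock constraints in the LU-bounds definition force $L(x-y) \le 0 \le U(x-y)$ (since $L(0-0)=0$ by $L(x-0)=0$ applied with $x=0$, or more directly by the antisymmetry convention), so the second condition holds with equality and the first is vacuous; in any case the conditions relating $0$ to $0$ hold trivially. If neither clock is in $R$ then $w(x)-w(y) = v(x)-v(y)$ and $w'(x)-w'(y) = v'(x)-v'(y)$, so the conditions are inherited verbatim from $v \lud v'$.

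The interesting cases are (iii) and (iv), and here the extra hypotheses on $LU$ are exactly what is needed. Consider case (iv): $x \notin R$, $y \in R$, so $w(x) - w(y) = v(x)$ and $w'(x) - w'(y) = v'(x)$; we must compare these against $L(x-y)$ and $U(x-y)$. The point is that $v \lud v'$ gives us information about $v(x) - v(0) = v(x)$ relative to $L(x-0) = 0$ and $U(x-0)$, and we must transfer it to bounds indexed by $x-y$. Since $L(x-y) \ge L(0-y)$ is not what we have — rather we are given $L(0-y) \le L(x-y)$, which I will use as follows. First suppose $w(x) - w(y) = v(x) < L(x-y)$; using $L(x-0) = 0 \le v(x)$ always holds... so in fact this subcase can only occur if $L(x-y) > 0$, and then I need $v'(x) < L(x-y)$; I will derive this from the hypothesis $L(0-y) \le L(x-y)$ together with the relation between $v$ and $v'$ on the pair $(0, y)$ — namely, $-v(y) = 0$ here is not the relevant quantity; instead I argue via $v(x) = v(x) - v(0)$ and the bound $U(x-0)$. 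The cleaner route: since always $0 = L(x-0) \le v(x)$, the first condition for $(w,w')$ on the pair $(x,y)$ requires $v(x) < L(x-y)$, and then since $L(x-y) \le U(x-y) \le U(x-0)$, we are in the range $L(x-0) \le v(x) \le U(x-0)$ only when $v(x) \le U(x-0)$; splitting on whether $v(x) \le U(x-0)$ and using $v \lud v'$ on $(x,0)$ yields $v'(x) \le v(x) < L(x-y)$ in the bounded subcase, and $v(x) > U(x-0) \ge U(x-y) \ge L(x-y)$ contradicts $v(x) < L(x-y)$ in the other. For the second condition, if $L(x-y) \le v(x) \le U(x-y)$ then in particular $v(x) \le U(x-y) \le U(x-0)$, so $v \lud v'$ on $(x,0)$ gives $v'(x) \le v(x)$, as required. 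Case (iii), with $x \in R$ and $y \notin R$, is symmetric: now $w(x)-w(y) = -v(y)$, we compare against $L(x-y), U(x-y)$, and the hypothesis $L(0-y) \le L(x-y)$ is used to feed off $v \lud v'$ on the pair $(0, y)$ (where $L(0-y) \le -v(y)$... wait, $U(0-y)=0$ so the relevant facts are about $-v(y)$ versus $L(0-y)$ and $0$), concluding $v'$-side bounds from $v$-side bounds exactly as above, with the roles of $L$-hypothesis and $U$-hypothesis swapped.

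The main obstacle is bookkeeping: getting the direction of each inequality right when substituting the $0$-clock into one coordinate, and checking that the two stated hypotheses ($U(x-0) \ge U(x-y)$ and $L(0-y) \le L(x-y)$) are applied to the correct reset case. I expect no deeper difficulty — once the case split is laid out, each subcase is a two-line chain of inequalities using $L \le U$, the boundary conditions $L(x-0)=0$, $U(0-x)=0$, one of the two new hypotheses, and $v \lud v'$ on a pair involving the $0$ clock.
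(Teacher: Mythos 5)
Your proof is correct and follows essentially the same approach as the paper: a case split on which of $x,y$ lie in $R$, using the hypothesis $U(x-0)\ge U(x-y)$ (together with $v\lud v'$ applied to the pair $(x,0)$) for the case $x\notin R$, $y\in R$, and the hypothesis $L(0-y)\le L(x-y)$ (with $v\lud v'$ on the pair $(0,y)$, noting $U(0-y)=0$) for the symmetric case. You also explicitly dispose of the trivial case $x,y\in R$ that the paper elides; the brief detour there (claiming the bounds would force $L(x-y)\le 0\le U(x-y)$, which need not hold) is harmless since you correctly fall back to the observation that the conditions of Definition~\ref{def:lu-d-preorder} hold automatically whenever $a'=a$.
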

\begin{proof}
  We need to show that the conditions of
  Definition~\ref{def:lu-d-preorder} hold for $a=([R]v)(x) - ([R]v)(y)$ and
  $a'=([R]v')(x) - ([R]v')(y)$ using the fact that $v \lud v'$.

  When $x \notin R$ and $y \notin R$, the conditions hold automatically since $a
  = v(x) - v(y)$ and $a' = v'(x) - v'(y)$.

  When $x \notin R$ and $y \in R$, then $a=v(x)-v(0)$, and $a' = v'(x)-v'(0)$.
  Since $L(x-0) = 0$, either $v'(x) \le v(x) \le U(x - 0)$ or $v(x) > U(x - 0)$.
  Since $U(x - 0) \ge U(x - y)$, we get $v'(x) \le v(x)$ if $v(x) \le U(x - y)$.
  This proves the conditions.

 When $x \in R$ and $y \notin R$,  then $a=v(0)-v(y)$, and $a' = v'(0)-v'(y)$.
 Then, either $a<L(0-y)$ and $a'<L(0-y)$ or $L(0-y)\leq a\leq U(0-y)=0$ and
 $a'\leq a$ (notice that $a>U(0-y)=0$ is not possible). Therefore, either
 $a'<L(x-y)$ or $a'\leq a$ and the conditions hold.
\end{proof}


The $LU$ preorder can be extended to configurations: $(q, v) \lud (q,
v')$ if $v \lud v'$. The above three lemmas give the necessary
ingredients to generate an $LU$ bounds function from a timed automaton
$\Aa$ such that the associated $LU$ preorder is a simulation on its
space of configurations.

Let $\Gg$ be a set of constraints.  We construct a new set $\Ggbar$
from $\Gg$ in the following way:
\begin{itemize}
\item Add all the constraints of $\Gg$ to $\Ggbar$
\item For each clock $x \in X$, add the constraints $x \le 0$ and $-x
  \le 0$ to $\Ggbar$
\item For each constraint $x - y \lleq c \in \Gg$, add the constraints
  $x \lleq c$ and $-y \lleq c$ to $\Ggbar$
\item Remove all constraints of the form $x \lleq c_1$ where $c_1 \in
  \mathbb{R}_{<0}$ and constraints of the form $ -x \lleq c_2$ where
  $c_2 \in \mathbb{R}_{>0}$ from $\Ggbar$.
\end{itemize}

We define an $LU$-bounds function on $\Ggbar$ in the natural way: for
each pair of clocks $x,y \in X$, we set $L (x-y) = \min\{c \mid x - y
\lleq c \in \Ggbar \}$ and $U (x-y) = \max\{c \mid x - y \lleq c \in
\Ggbar \}$.  If there are no guards of the form $x - y \lleq c$ in
$\Ggbar$, then we set $L (x - y)$ to be $\infty$ and $U(x - y)$ to be
$-\infty$.  Note that since $\Ggbar$ contains the constraints $x \le
0$ and has no constraints $x \lleq c$ where $c \in \mathbb{R}_{<0}$,
$L (x-0) = 0$ for all $x \in X$.  Similarly, $U(0 - x) = 0$ for all $x
\in X$.  For a timed automaton $\mathcal{A}$, let $\Ga$ be the set of
guards present in $\Aa$.  The \emph{$LU$-bounds of $\mathcal{A}$} is
the $LU$-bounds function defined on $\overline{\Ga}$. The next theorem
follows from Lemmas~\ref{lem:lu-satisfies-same-guards},
\ref{lem:lud-preserves-delay} and \ref{lem:lu-satisfies-resets}.

\begin{theorem}
  For every timed automaton $\Aa$, the relation $\lud$ obtained from
  the $LU$-bounds of $\Aa$ is a simulation relation on its
  configurations.
\end{theorem}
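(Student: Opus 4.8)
The plan is to unwind the definition of a time-abstract simulation given above and discharge its clauses one by one, leaning on the three preservation lemmas together with a short inspection of the construction of $\overline{\Ga}$. First I would dispatch the order-theoretic points. The $LU$-preorder relates only configurations with the same control state by definition, and it is reflexive since both clauses of Definition~\ref{def:lu-d-preorder} hold trivially when $v'=v$. Transitivity of $\lud$ needs a three-case split on where $v(x)-v(y)$ sits relative to $L(x-y)$ and $U(x-y)$: a value below $L(x-y)$ stays below it along the relation, and a value in $[L(x-y),U(x-y)]$ is matched either by a smaller value in the same interval or again by a below-$L(x-y)$ value, so composing $v\lud v'$ with $v'\lud v''$ yields $v\lud v''$ in each case. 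The delay clause is then immediate: for $(q,v)\to^{\d}(q,v+\d)$ I answer with the same delay $\d$ from $v'$, and Lemma~\ref{lem:lud-preserves-delay} gives $v+\d\lud v'+\d$, which is exactly what the simulation requires.

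The remaining clause --- matching an action transition --- is where the specific $LU$-bounds of $\Aa$ come into play. Fix a transition $t=(q,g,R,q_1)\in\Delta$ and valuations $v\lud v'$ with $v\models g$; I must produce $(q,v')\to^{t}(q_1,[R]v')$ together with $[R]v\lud[R]v'$. For $t$ to be enabled at $v'$ I need $v'\models g$. Writing $g$ as a conjunction of atoms of the form $x-y\lleq c$ (including the single-clock case $x-0\lleq c$), each atom lies in $\Ga$, hence in $\overline{\Ga}$ by the first step of the construction, so by the definition of the $LU$-bounds of $\Aa$ we get $L(x-y)\le c\le U(x-y)$ for that atom; applying Lemma~\ref{lem:lu-satisfies-same-guards} atom by atom gives $v'\models g$. (The only atoms not covered, namely $x\lleq c$ with $c<0$, describe unsatisfiable guards and make the clause vacuous.)

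For the reset part I invoke Lemma~\ref{lem:lu-satisfies-resets}, so the crux --- and the step I expect to be the most delicate --- is checking that the $LU$-bounds of $\Aa$ satisfy its hypotheses $U(x-0)\ge U(x-y)$ and $L(0-y)\le L(x-y)$ for all clocks. If $U(x-y)\le 0$ this is immediate from the mandatory constraint $x\le 0\in\overline{\Ga}$, which forces $U(x-0)\ge 0$; if $U(x-y)=c>0$, that maximum is witnessed by a diagonal guard $x-y\lleq c$ in $\Ga$, so the third step of the construction inserts $x\lleq c$ into $\overline{\Ga}$, and since $c>0$ the clean-up step (which only deletes $x\lleq c_1$ with $c_1<0$) leaves it in place, whence $U(x-0)\ge c$. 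The inequality $L(0-y)\le L(x-y)$ is the mirror image, using the mandatory $-y\le 0$ when $L(x-y)\ge 0$ and the inserted constraint $-y\lleq c$ when $L(x-y)=c<0$, which survives because the clean-up only removes $-x\lleq c_2$ with $c_2>0$; and the degenerate case $L(x-y)=\infty$, $U(x-y)=-\infty$ makes both inequalities trivially true. With the hypotheses established, Lemma~\ref{lem:lu-satisfies-resets} yields $[R]v\lud[R]v'$, which completes the action clause. Assembling the clauses --- same control state, reflexivity and transitivity, the delay clause from Lemma~\ref{lem:lud-preserves-delay}, and the action clause from Lemmas~\ref{lem:lu-satisfies-same-guards} and \ref{lem:lu-satisfies-resets} --- proves that $\lud$ is a simulation on the configurations of $\Aa$.
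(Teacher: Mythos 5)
Your proposal takes exactly the route the paper intends: the theorem is stated in the paper as following directly from Lemmas~\ref{lem:lu-satisfies-same-guards}, \ref{lem:lud-preserves-delay} and \ref{lem:lu-satisfies-resets}, with no further argument given, and you supply the details that make that claim rigorous --- in particular the case analysis verifying that the $LU$-bounds of $\Aa$ satisfy the hypotheses $U(x-0)\ge U(x-y)$ and $L(0-y)\le L(x-y)$ of Lemma~\ref{lem:lu-satisfies-resets}, which is the only non-mechanical step and is done correctly.

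One small imprecision in the parenthetical about atoms not covered by the $LU$-bounds: the clean-up step removes not only $x\lleq c_1$ with $c_1<0$ but also $-x\lleq c_2$ with $c_2>0$, so atoms of the form $0-y\lleq c$ with $c>0$ are likewise outside the range where Lemma~\ref{lem:lu-satisfies-same-guards} applies. They are not unsatisfiable but trivially satisfiable (every valuation satisfies them, since $-v(y)\le 0<c$), so the implication $v\models g\Rightarrow v'\models g$ still holds for them vacuously in the other direction. The conclusion is unaffected, but the sentence should say that the uncovered atoms are either unsatisfiable or universally satisfiable, not just the former.
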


We use this simulation relation extended to zones in the reachability
algorithm, as described in Page \pageref{algo:reachability}. To do so,
we need to give an algorithm for the simulation test $Z \lud Z'$, and
show that $\lud$ is finite. Correctness and termination follow from
Theorem~\ref{thm:reachability-algo-terminates}. We first describe the
simulation test, and then prove finiteness.
Observe that $Z \not \lud Z'$ iff there exists $v \in Z$ such that
$\lureg{v} \cap Z' = \emptyset$. We give a distance graph
representation for $\lureg{v}$.
\begin{definition}[Distance graph for $\lureg{v}$]
  \label{def:gvlu-distance-graph}
  Given a valuation $v$ and an $LU$ bounds function, we construct
  distance graph $\gvlu$ as follows. For every pair of distinct clocks
  $x,y \in X$, add the edges:
  \begin{itemize}
  \item $y \xra{} x$ with weight $(<, L(x - y))$, if $v(x) - v(y) <
    L(x-y)$,
  \item $y \xra{} x$ with weight $(\le, v(x) - v(y))$, if $L(x-y) \le
    v(x) - v(y) \le U(x-y)$.
  \end{itemize}
\end{definition}

Using Definition \ref{def:lu-d-preorder} we can show that
$\sem{\gvlu}$ equals $\lureg{v}$. The properties of distance graphs as
described in Page~\pageref{distance-graphs} then lead to the following
theorem. 

\begin{theorem}
  \label{thm:alu-inclusion}
  Let $Z, Z'$ be zones such that $Z'$ is non-empty, and let $LU$ be a
  bounds function. Let $G_{Z'}$ be the canonical distance graph of
  $Z'$. Then, $Z \not \lud Z'$ iff there is a valuation $v \in Z$ and
  a negative cycle in $\min(\gvlu, G_{Z'})$ in which no two
  consecutive edges are from $G_{Z'}$.
\end{theorem}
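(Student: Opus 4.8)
The plan is to start from the observation already recorded in the text, namely that $Z \not\lud Z'$ iff there exists $v \in Z$ with $\lureg{v} \cap Z' = \es$. First I would establish the claim promised just before the theorem: $\sem{\gvlu} = \lureg{v}$. This is a direct unwinding of Definition~\ref{def:lu-d-preorder} against Definition~\ref{def:gvlu-distance-graph}: an edge $y \act{(<, L(x-y))} x$ encodes exactly the constraint $v'(x) - v'(y) < L(x-y)$, and an edge $y \act{(\le, v(x)-v(y))} x$ encodes exactly $v'(x) - v'(y) \le v(x) - v(y)$; there is one such edge for each pair of distinct clocks, matching the two cases of the preorder, and the case $v(x)-v(y) > U(x-y)$ imposes no constraint, consistent with Definition~\ref{def:lu-d-preorder} placing no requirement there. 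Hence $v' \in \sem{\gvlu}$ iff $v \lud v'$.

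Next, using $\sem{\min(G_1,G_2)} = \sem{G_1} \cap \sem{G_2}$ and the fact that $G_{Z'}$ is a distance graph for $Z'$, we get $\lureg{v} \cap Z' = \sem{\min(\gvlu, G_{Z'})}$. By the folklore result on Page~\pageref{distance-graphs}, this set is empty iff $\min(\gvlu, G_{Z'})$ has a negative cycle. So far this gives: $Z \not\lud Z'$ iff there is $v \in Z$ such that $\min(\gvlu, G_{Z'})$ has a negative cycle. It remains to upgrade ``a negative cycle'' to ``a negative cycle in which no two consecutive edges come from $G_{Z'}$''. One direction is trivial. For the other direction, given any negative cycle, I would argue that whenever two consecutive edges $p \act{} q \act{} r$ both come from $G_{Z'}$, they can be replaced by the single $G_{Z'}$-edge $p \act{} r$: since $G_{Z'}$ is in canonical form, the weight of $p \to r$ in $G_{Z'}$ is at most the sum of the weights of $p \to q$ and $q \to r$ in $G_{Z'}$, so the replacement does not increase the total weight and the cycle stays negative (and strictly shorter). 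Iterating this shortcutting terminates and yields a negative cycle with no two consecutive $G_{Z'}$-edges. One has to check the degenerate case where the resulting cycle has length one or two: a single self-loop $p \act{} p$ from $G_{Z'}$ is impossible as $G_{Z'}$ is in canonical form with no negative cycle, so a surviving negative cycle has length at least two and, having no two consecutive $G_{Z'}$-edges, necessarily uses at least one $\gvlu$-edge.

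The main obstacle is the shortcutting argument in the last step: I need to make sure the alternation invariant is genuinely achievable and that the reduction does not accidentally destroy the cycle or get stuck in a two-edge cycle whose two edges are both from $G_{Z'}$ — this is where canonicity of $G_{Z'}$ is used twice, once to justify that contracting a $G_{Z'}$-pair cannot increase weight, and once to rule out a $G_{Z'}$-only negative cycle. Everything else is bookkeeping: translating the preorder into edge weights, intersecting distance graphs, and invoking the negative-cycle characterization of emptiness.
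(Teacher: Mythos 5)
Your proof is correct and follows the same route the paper sketches (the paper states the intermediate identity $\sem{\gvlu}=\lureg{v}$, invokes the standard distance-graph facts, and leaves the remaining details — in particular the shortcutting to an alternating negative cycle — implicit; the paper has no separate proof of this theorem in the appendix). Your shortcutting argument, using canonicity of $G_{Z'}$ to contract two consecutive $G_{Z'}$-edges $p\to q\to r$ into the single $G_{Z'}$-edge $p\to r$ without increasing weight, and ruling out the degenerate outcome of a $G_{Z'}$-only negative cycle via non-emptiness of $Z'$, is exactly the missing step and is handled correctly; the only micro-remark is that when $p=r$ the contraction would produce a self-loop, which distance graphs do not have, but in that situation $p\to q\to p$ is already a negative cycle entirely within $G_{Z'}$, so the contradiction you invoke kicks in before a self-loop could arise.
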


A witness to the fact that $Z \not \lud Z'$ is therefore a $v \in Z$
and a negative cycle of a certain shape given by
Theorem~\ref{thm:alu-inclusion}. Existence of such a witness can be
encoded as satisfiability of a formula in linear arithmetic.  This
gives an NP procedure. A satisfying assignment to the formula reveals
a valuation $v \in Z$ and a corresponding negative cycle across
$\gvlu$ and $G_{Z'}$. Although there is no fixed bound on the length
of this negative cycle (contrary to the diagonal free case), note that
each $y \to x$ edge from $\gvlu$ in the negative cycle needs to have a
finite $U(x-y)$ or $L(x-y)$ constant \textbf{(apart from $x \to 0$
  edges)}. If for an automaton, many of the edges have $\infty$ or
$-\infty$ as their $L$ or $U$ respectively (which we believe occurs
often in practice, as there could be no relevant diagonal constraint
over this edge) then this simulation test would need to enumerate only
a small number of negative cycles.

The final step is to show that $\lud$ is finite. We make use of a
notation: we write $\down{Z}$ to be the set of valuations $u$ such
that $u \lud v$ for some $v \in Z$. Note that $Z \not \lud Z'$ implies
$\down{Z} \neq \down{Z'}$.

\begin{theorem}
  \label{thm:lud-has-finite-range}
  The simulation relation $\lud$ is finite for every $LU$ bounds
  function.
\end{theorem}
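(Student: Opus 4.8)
The plan is to reduce the statement to a finiteness property of the downward‑closure operator $\down{\cdot}$ and then to establish that property by a region argument.

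\textbf{Reduction.} Since $\lud$ is reflexive on valuations, $Z\subseteq\down{Z}$ for every zone $Z$, and by the way $\lud$ is lifted to zones, $Z\lud Z'$ holds exactly when $Z\subseteq\down{Z'}$. Hence $\down{Z}=\down{Z'}$ forces $Z\subseteq\down{Z}=\down{Z'}$, i.e. $Z\lud Z'$ (this re‑proves the observation stated just before the theorem). Consequently, if the set $\{\down{Z}\mid Z\text{ a zone}\}$ has finite cardinality $K$, then $\lud$ is finite with constant $N=K$: in any sequence $Z_1,\dots,Z_n$ with $n>K$, two of $Z_1,\dots,Z_{K+1}$ share a downward closure, say $\down{Z_i}=\down{Z_j}$ with $i<j$, and then $Z_j\subseteq\down{Z_j}=\down{Z_i}$ gives $Z_j\lud Z_i$. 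So it suffices to bound the number of distinct sets $\down{Z}$.

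\textbf{A finite partition.} Let $M$ be the largest absolute value of a finite constant among the $L(x-y),U(x-y)$. Take the classical region equivalence relative to $M$ in the diagonal‑constraint variant: two valuations are equivalent when, for every ordered pair $x,y$, the values $u(x)-u(y)$ and $u'(x)-u'(y)$ lie in the same cell of the partition of $\mathbb{R}$ into the integers and the open unit intervals inside $[-M,M]$ together with the two outer half‑lines, and moreover $\operatorname{sign}(\{u(x)\}-\{u(y)\})$ and the status ``$\{u(x)\}=0$'' are preserved. This is a finite partition, which I would refine to a still‑finite partition $\mathcal{R}^{*}$ by additionally separating valuations according to the truth values of the finitely many predicates $\sum_{x\in A}\{u(x)\}-\sum_{x\in B}\{u(x)\}\lleq n$, where $A,B$ range over disjoint subsets of clocks, $n$ over integers with $|n|\le|X|+1$, and $\lleq$ over $\{<,\le\}$. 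I claim every $\down{Z}$ is a union of $\mathcal{R}^{*}$‑classes; since there are finitely many such classes, there are finitely many such unions, and we are done.

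\textbf{Proof of the claim.} Fix a zone $Z$ and $\mathcal{R}^{*}$‑equivalent valuations $u,u'$. Using the stated facts, $u\in\down{Z}$ iff $\lureg{u}\cap Z\neq\emptyset$ iff $\sem{\min(\gvlu,G_Z)}\neq\emptyset$ iff $\min(\gvlu,G_Z)$ has no negative cycle; and a shortest negative cycle, if any, is simple (splitting a non‑simple one at a repeated vertex yields two shorter cycles, at least one negative), hence has at most $|X|+1$ edges. Which ordered pairs carry an edge of $\gvlu$, and whether such an edge is of the first kind (weight $(<,L(x-y))$) or second kind (weight $(\le,u(x)-u(y))$), is determined by the region of $u$ — as is, for a second‑kind edge, the sign of $\{u(x)\}-\{u(y)\}$ and the integer $\lfloor u(x)\rfloor-\lfloor u(y)\rfloor$ (since there $u(x)-u(y)\in[-M,M]$) — so all of this agrees for $u$ and $u'$. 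Fix a ``shape'': a simple cyclic sequence of clocks with, per edge, a choice among ``$G_Z$‑edge'', ``first‑kind $\gvlu$‑edge'', ``second‑kind $\gvlu$‑edge''; it is realizable at $u$ iff at $u'$. Summing the weights, the weight of this cycle at $u$ equals $K^{*}+\sum_x c_x\{u(x)\}$, where $c_x\in\{-1,0,1\}$ (simple cycle) with $\sum_x c_x=0$, and $K^{*}\in\Int$ is assembled from the DBM entries of $Z$ on the $G_Z$‑edges, the constants $L(\cdot)$ on the first‑kind edges, and the region‑determined integers $\lfloor u(x)\rfloor-\lfloor u(y)\rfloor$ on the second‑kind edges — so $K^{*}$ is the same for $u$ and $u'$. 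Since $|\sum_x c_x\{u(x)\}|<|X|+1$: if $|K^{*}|\ge|X|+1$ the cycle's negativity is decided by the sign of $K^{*}$ alone; otherwise it is decided, modulo the region‑determined strictness, by whether $\sum_{x:\,c_x=1}\{u(x)\}-\sum_{x:\,c_x=-1}\{u(x)\}\lleq -K^{*}$, which is exactly one of the predicates defining $\mathcal{R}^{*}$. Either way the existence of a negative cycle of this shape has the same truth value at $u$ and $u'$; disjoining over the finitely many shapes, $u\in\down{Z}\iff u'\in\down{Z}$.

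\textbf{Main obstacle.} The genuinely new difficulty relative to the diagonal‑free case is precisely what forces the passage from ordinary regions to $\mathcal{R}^{*}$: there a witnessing negative cycle has length at most three and carries at most one edge whose weight is a clock value, so the fractional correction is a single term whose position relative to an integer is already region‑determined, whereas with diagonals a simple negative cycle may use several edges of weight $u(x)-u(y)$, producing a sum of fractional parts that the fractional orderings alone do not locate against an integer. The two points requiring care are (i) that a shortest negative cycle really is simple, hence of bounded length, so that only finitely many shapes and coefficient vectors $(c_x)$ occur, and (ii) that although the integer $K^{*}$ is unbounded (the DBM entries of $Z$ can be arbitrarily large), this is harmless because once $|K^{*}|$ exceeds the bound on the fractional correction the answer ceases to depend on $u$. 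Filling in the strictness bookkeeping and checking that realizability of a shape — including edges incident to the $0$ clock — is region‑determined is the remaining delicate but routine work.
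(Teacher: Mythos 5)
Your proposal is correct and follows essentially the same strategy as the paper: reduce finiteness of $\lud$ to finiteness of $\{\down{Z}\}$, introduce a refined finite region partition (your $\mathcal{R}^*$ corresponds to the paper's ``$d$-regions,'' with your fractional-part predicates encoding the same information as the paper's constraints on sums $\sum(x_i-y_i)$ over subsets of $\tighttwo$-pairs), and show each $\down{Z}$ is a union of classes via the negative-cycle characterization of $\lureg{u}\cap Z=\emptyset$. Your observation that a shortest negative cycle is simple, hence of length at most $|X|+1$ with coefficients $c_x\in\{-1,0,1\}$, is a nice explicit justification of why only boundedly many predicates are needed — a point the paper's proof leaves somewhat implicit when it quantifies over ``every subset $S$'' of $\tighttwo$-pairs.
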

\begin{proof}
  We will first show that for any zone $Z$, $\down{Z}$ is a union of
  \emph{$d$-regions} (parameterized by $LU$) which are defined below.
  We will subsequently show that there are only finitely many
  $d$-regions. The observation that $Z \not \lud Z'$ implies $\down{Z}
  \neq \down{Z'}$ then proves the theorem.

  Given a valuation $v$ and $LU$-bounds function, we define the
  following relations over pairs of clocks:
  \begin{itemize}
  \item $y \tightone x$ if $v(x) - v(y) < L(x-y)$
  \item $y \tighttwo x$ if $L(x-y) \le v(x) - v(y) \le U(x-y)$
  \end{itemize}

  A $d$-region $R$ is a set of valuations that satisfies the
  following:
  \begin{itemize}
  \item all valuations in $R$ have the same $\tightone$ and
    $\tighttwo$ relations.
  \item for every subset $S = \{ y_1 \tighttwo x_1, y_2 \tighttwo x_2,
    \dots, y_k \tighttwo x_k \}$ of ordered pairs of clocks, every
    valuation in $R$ satisfies one of the following constraints:
    either $\left (\sum\limits_{i = 1}^{i = k} x_i - y_i = c \right)$
    or $c-1 < \left (\sum\limits_{i = 1}^{i = k} x_i - y_i \right) <
    c$ for an integer $c$ satisfying $\sum\limits_{i = 1}^{i = k}
    L(x_i - y_i) \le c \le \sum\limits_{i = 1}^{i = k} U(x_i - y_i)$.
  \end{itemize}

  We will now show that if a $d$-region $R$ intersects $\down{Z}$ then
  $R \incl \down{Z}$. Let $v \in R$ be such that $v \in \down{Z}$. Let
  $v'$ be another valuation in $R$. Suppose $v' \notin \down{Z}$.  Then
  $\lureg{v'} \cap Z = \emptyset$.  That is, $\min(\gvplu, G_{Z})$ has
  a negative cycle; let us call it $N_{v'}$.  Let $N_{v}$ be the cycle
  $N_{v'}$ with the edges coming from $\gvplu$ replaced with the same
  edges from $\gvlu$.  We want to show that $N_{v}$ is negative.
  Since, $v$ and $v'$ come from the same region $R$, we have:
  \begin{itemize}
    \item The weight of a type 1 edge $y_i \tightone x_i$ is $(<,L(x_i - y_i))$
    in both $N_v$ and $N_{v'}$.  Let $(<,S_1)$ be the sum of the weights of the
    type 1 edges.  This sum is the same in $N_v$ and $N_{v'}$.

    \item We let $(\leq,S_2)=(\leq,\sum_{i} v(x_i) - v(y_i))$ and
    $(\leq,S'_2)=(\leq,\sum_{i} v'(x_i) - v'(y_i))$ be the sum of the weights of
    type 2 edges $y_i \tighttwo x_i$ in $N_v$ and $N_{v'}$ respectively.  Then,
    for some integer $c$, either $S_2=S'_2=c$ or $c-1 < S_2 < c$ and $c-1 < S'_2
    < c$.
  \end{itemize}
  Also the edges coming from $G_{Z}$ have the same weight in $N_v$ and $N_{v'}$.
  Call $(\lleq_3,S_3)$ the sum of the weights of the edges coming from $G_Z$.
  Finally, let $(\lleq,S=S_1+S_2+S_3)$ and $(\lleq,S'=S_1+S'_2+S_3)$ be the
  weights of $N_v$ and $N_{v'}$ respectively.  Since $N_{v'}$ is negative,
  $(\lleq,S')$ is at most $(<,0)$.  Now, $S_1$ and $S_3$ are integers, and using
  the relation between $S_2$ and $S'_2$, we deduce that $N_v$ is also negative.
  This entails $\lureg{v} \cap Z = \emptyset$, and
  contradicts the assumption that $v \in \down{Z}$.  Hence we get $R \subseteq \down{Z}$, thereby showing that each $\down{Z}$ is a union
  of $d$-regions.

  Each $d$-region depends only on the orientation of the $\tightone$
  and $\tighttwo$ relations and the value of $c$.  Since number of
  clocks is finite, the number of possible orientations of $\tightone$
  and $\tighttwo$ is finite.  For each such orientation, the possible
  values for $c$ is finite.  Thus there are only finitely many
  $d$-regions.
\end{proof}

\section{Algorithm for $Z \not \lud Z'$}
\label{sec:algor-zone-incl-LU}



Theorem~\ref{thm:alu-inclusion} gives a witness for the fact that
$Z \not \lud Z'$. In this section, we encode the existence of
this witness as an SMT formula over linear arithmetic. For clarity of
exposition, we will also restrict to timed automata having no strict
constraints as guards, that is, every guard is of the form $x - y \le
c$ or $x\leq c$. This would in
particular imply that in the zones obtained during the forward
analysis, there will be no strict constraints.

\begin{definition}[Satisfiability modulo Linear Arithmetic]
  Let $\Prop$ be a set of propositional variables, and $\Vars$ a set
  of variables ranging over reals. An atomic term is a constraint of
  the form $c_1 x_1 + c_2 x_1 + \dots + c_k x_k \sim d$ where $c_1,
  \dots, c_n, d \in \Int$ and $x_1, x_2,  \dots, x_k \in \Vars$ and
  ${\sim} \in \{ \le, <, =, >, \ge\}$. A formula in \emph{linear
    arithmetic} is a boolean combination of propositional variables
  and atomic terms. Formula $\phi$ is
  satisfiable if there exists an assignment of boolean values to
  propositions, and real values to variables in $\Vars$ s.t.
  replacing every occurence of the variables and propositions by the
  assigment evaluates $\phi$ to true.
\end{definition}

\begin{lemma}
\label{lem:linear-arithmetic-in-np}
  Satisfiability of a formula in linear arithmetic is in $\NP$.
\end{lemma}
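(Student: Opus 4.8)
The plan is to give a standard ``guess-and-check'' $\NP$ procedure. Given a formula $\phi$, the boolean skeleton of $\phi$ is a boolean combination of propositional variables from $\Prop$ and atomic linear terms $t_1, \dots, t_m$ occurring in $\phi$. First I would nondeterministically guess a truth assignment $\mu$ to the propositions in $\Prop$ and a truth value $b_i \in \{\true, \false\}$ for each atomic term $t_i$; checking that $\phi$ evaluates to $\true$ under $(\mu, b_1, \dots, b_m)$ is a purely syntactic evaluation of the boolean combination and takes time linear in $|\phi|$. It then remains to certify that the chosen assignment to the atoms is realizable, i.e. that there is a real assignment to $\Vars$ making every $t_i$ with $b_i = \true$ hold and every $t_j$ with $b_j = \false$ fail. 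Negating an atomic constraint keeps it linear (e.g. $\lnot(t \le d)$ is $t > d$), so this amounts to deciding feasibility over $\mathbb{R}$ of a finite system $\Sigma$ of linear (in)equalities with integer coefficients, some strict and some non-strict.

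The next step is the key quantitative fact: if such a system $\Sigma$ is feasible over the reals, then it has a \emph{rational} solution whose bit-size is polynomial in the bit-size of $\Sigma$. For non-strict systems this is the classical statement that a nonempty rational polyhedron has a rational point of polynomially bounded size, proved via basic solutions and Cramer's rule together with the Hadamard bound on determinants. Strict inequalities are reduced to this: introduce a single fresh variable $\varepsilon$, replace each strict constraint $t > d$ (resp. $t < d$) by $t \ge d + \varepsilon$ (resp. $t \le d - \varepsilon$), and add $\varepsilon \ge 0$; the original strict system is feasible iff the modified non-strict system has a solution with $\varepsilon > 0$, equivalently iff the maximum of $\varepsilon$ over the modified polyhedron is positive, which is witnessed by a vertex of polynomially bounded size. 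Hence a certificate consisting of rational values for the variables in $\Vars$ (and for $\varepsilon$) of polynomial size always exists when $\Sigma$ is feasible.

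Putting it together: the $\NP$ machine guesses $\mu$, the bits $b_1, \dots, b_m$, and a polynomially sized rational assignment $\nu$ to $\Vars$; it then verifies in polynomial time that (i) the boolean combination $\phi$ is satisfied by $(\mu, b_1, \dots, b_m)$, and (ii) $\nu$ satisfies every atom $t_i$ with $b_i = \true$ and violates every atom $t_j$ with $b_j = \false$ (each check is an integer arithmetic evaluation on numbers of polynomial size). Conversely, any satisfying assignment of $\phi$ induces such a guess, using the bit-size bound above to obtain a polynomially sized rational witness. This shows membership in $\NP$.

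\textbf{Main obstacle.} The only non-routine point is the bit-size bound for feasible rational linear systems, and in particular the reduction of strict inequalities to the non-strict case while preserving a polynomial size bound; everything else (the boolean guess, the polynomial-time verification) is immediate. In the write-up I would simply cite the standard polyhedral/LP fact rather than reprove it.
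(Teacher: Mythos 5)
Your proof is correct, and it takes a slightly different route from the paper's. Both start from the same nondeterministic guess of a truth assignment to the propositions and to the atomic linear terms, and both then have to certify realizability of that guess by a real assignment. The paper keeps the certificate to just the boolean guess and discharges realizability inside the verifier by invoking polynomial-time linear programming (citing Karmarkar), treating LP as a black box. You instead enlarge the certificate to also contain an explicit rational point $\nu$ for $\Vars$, so verification is mere substitution and integer arithmetic; this requires the quantitative small-witness fact that a feasible rational linear system has a solution of polynomial bit size, together with your $\varepsilon$-trick for strict inequalities, which you correctly flag as the one non-routine ingredient. Both routes are standard and yield the same $\NP$ bound; the paper's is shorter, while yours is more self-contained modulo the polyhedral size bound. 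One place where your write-up is actually more careful than the paper's: you insist that the guessed real point both \emph{satisfy} every atom guessed true and \emph{falsify} every atom guessed false, i.e.\ you add the negated constraints to the feasibility system. The paper's proof only speaks of "the conjunction of all atomic terms which are true", and as literally written that check is unsound (for $\phi = (x>1)\land\lnot(x>0)$ one can guess the first atom true and the second false, the boolean skeleton is satisfied, and the system $\{x>1\}$ alone is consistent, yet $\phi$ is unsatisfiable). Your inclusion of the negated false atoms is the correct formulation.
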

\begin{proof}
  Given a formula in linear arithmetic, a certificate would be an
  assignment to all atomic terms and propositional variables. The
  conjunction of all atomic terms which are true would form a system
  of linear inequalities. Deciding if this system is consistent can be
  done in polynomial time (can be seen as a linear program with a
  dummy objective function, and linear programming can be solved in
  polynomial time~\cite{Karmarkar:1984:NPA}).
\end{proof}

Fix two zones $Z, Z'$ and a bounds function $LU$. Zones $Z$
and $Z'$ are given by their canonical distance graphs $G_Z$ and
$G_{Z'}$. We write $c_{yx}$ for the weight of the edge $y \to x$ in
$G_Z$ and $c'_{yx}$ for the weight of $y \to x$ in $G_{Z'}$. Further
we assume that the set of clocks is $\{ \0, \1, \dots, \n \}$.
The final formula will be obtained by constructing suitable
intermediate subformulas as explained below:
\begin{description}
\item[Step 1.] Guess a $v \in Z$.
\item[Step 2.] Guess a subset of edges $y \to x$ which forms a cycle
  (or a disjoint union of cycles).
\item[Step 3.] Guess a colour for each edge $y \to x$ in the cycle:
  red or blue. No two consecutive edges in the cycle can both
  be red.  Red edges correspond to edges from $G_{Z'}$.
  Blue edges correspond to edges from $\gvlu$.
\item[Step 4.] Assign weights to each edge $y \to x$: if it is
  coloured red, the weight is $c'_{yx}$ (edge weight of $G_{Z'}$).
  If the edge $y\to x$ is blue, assign weight according to the following cases:
  \begin{itemize}
    \item $w_{yx} = (<, L(x - y))$ if $v(x) - v(y) < L(x - y)$ 
    \item $w_{yx} = (\le, v(x) - v(y))$ if $L(x - y) \le v(x) - v(y) \le U(x
      - y)$
  \end{itemize}
  Add up the weights of all the edges (the comparison $<$ or $\le$
  component of the weight can be maintained using a boolean). If there
  are no strict edges (that is with weight $<$) in the chosen cycle,
  check if the sum is $ < 0$. Else, check if the sum is $\le  0$.
\end{description}

\textbf{Formula for Step 1.} We first guess a valuation $v \in
Z$. We use real variables $v_0, v_1, \dots, v_n$ to denote a
valuation. These variables should satisfy the constraints given by
$Z$:
\begin{align}
  \label{eq:v-in-Z}
  v_0 = 0 ~\text{ and } \Land_{x,y \in \{0, \dots, n\}} v_x - v_y \le
  c_{yx}
\end{align}
Call the above formula $\Phi_1(\bar{v})$ where
$\bar{v}=(v_1,\ldots,v_n)$.  A satisfying assignment to $\Phi_1$
corresponds to a valuation in $Z$.

\textbf{Formula for Step 2.}
We now need to guess a set of edges of the form $y \to x$ which forms
a cycle, or a disjoint union of simple cycles. We will also ensure
that no vertex appears in more than one cycle. We will use boolean
variables $e_{ij}$ for $i, j \in \{0, \dots, n\}$ and $i \neq j$.

The cycle must be non-empty.
\begin{align}
  \bigvee_{0\leq i,j\leq n, j\neq i} e_{ij}
  \label{eq:cycle-non-empty}
\end{align}
If we pick an incoming edge to a clock, then we need to pick an
outgoing edge.
\begin{align}
  \label{eq:cycle-incoming-outgoing}
  \Land_{0 \le i \le n} ~~\Big(~\Lor_{0 \le j \le n, j \neq i}
  e_{ji}~\Big) ~\implies~ \Big(~\Lor_{0 \le j \le n, j \neq i}
  e_{ij}~\Big)
\end{align}
We do not pick more than one outgoing or incoming edges for each
clock.
\begin{align}
  \label{eq:cycle-no-two-incoming-outgoing}
  \Land_{0 \le i \le n} ~~ \Land_{\begin{scriptsize}\begin{array}{c} 0
        \le j,k \le n \\ j \neq k \\ i \neq j, i \neq
        k \end{array}\end{scriptsize}} \neg (e_{ij} \land e_{ik})~
  \land ~ \neg (e_{ji} \land e_{ki})
\end{align}

Conjunction of (\ref{eq:cycle-non-empty},
\ref{eq:cycle-incoming-outgoing},
\ref{eq:cycle-no-two-incoming-outgoing}) gives a formula
$\Phi_2(\bar{e})$ over variables $\bar{e} = \{e_{12}, \dots, e_{n
  n-1}\}$.

\begin{lemma}\label{lem:step2}
  Let $\s_2: \bar{e} \mapsto \{\true, \false\}$ be an assignment which
  satisfies $\Phi_2$. Then the set of edges $\{x \to y\}$ such that
  $\s_2(e_{xy})$ is true forms a vertex-disjoint union of cycles.
\end{lemma}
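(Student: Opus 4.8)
The plan is to verify directly that the three conjuncts of $\Phi_2$ impose exactly the structural conditions needed for a vertex-disjoint union of cycles in the directed graph whose edge set is $E_{\s_2} = \{ x \to y : \s_2(e_{xy}) = \true \}$. First I would record the in/out-degree constraints: conjunct \eqref{eq:cycle-no-two-incoming-outgoing} says that for every vertex $i$ there is at most one outgoing edge and at most one incoming edge in $E_{\s_2}$, so the out-degree and in-degree of every vertex are each at most $1$. Conjunct \eqref{eq:cycle-incoming-outgoing} says that if some vertex has an incoming edge then it also has an outgoing edge; combined with the degree bounds this means every vertex is either isolated (in-degree $0$, out-degree $0$) or has in-degree exactly $1$ and out-degree exactly $1$. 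Let $V'$ be the set of non-isolated vertices; then the restriction of $E_{\s_2}$ to $V'$ is a set of edges in which every vertex of $V'$ has in-degree $1$ and out-degree $1$.

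Next I would argue that any finite directed graph in which every vertex has in-degree and out-degree exactly $1$ is a vertex-disjoint union of (simple) cycles. This is a standard fact: starting from any vertex $v_0 \in V'$ and repeatedly following the unique outgoing edge produces an infinite walk $v_0 \to v_1 \to v_2 \to \cdots$; since $V'$ is finite, some vertex repeats, say $v_j = v_k$ with $j < k$ minimal, and minimality of $j$ together with uniqueness of incoming edges forces $j = 0$, so the walk closes into a simple cycle through $v_0$. Every vertex reached lies on this cycle, and since each such vertex already has its unique in-edge and out-edge used up by the cycle, the cycle is an entire connected component. Iterating over components partitions $V'$ into vertex-disjoint simple cycles, and conjunct \eqref{eq:cycle-non-empty} guarantees $E_{\s_2} \neq \es$ so there is at least one cycle.

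I do not expect any serious obstacle here: the lemma is essentially a translation lemma, and the only care required is bookkeeping — making sure the quantifier ranges in \eqref{eq:cycle-incoming-outgoing} and \eqref{eq:cycle-no-two-incoming-outgoing} (which exclude $j = i$, $k = i$, $j \neq k$) really do encode "at most one in-edge, at most one out-edge, and no dangling in-edge" without accidentally ruling out legitimate configurations. The mildly delicate point is the direction "$\Phi_2$ satisfied $\Rightarrow$ disjoint union of cycles" versus its converse; only the stated direction is needed, so I would present just that, and note in passing that the converse also holds (every vertex-disjoint union of non-empty cycles yields a satisfying assignment), which is what makes $\Phi_2$ the right formula for Step 2 of the overall construction.
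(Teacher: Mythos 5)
The paper states this lemma without proof, treating it as immediate from the construction of $\Phi_2$, so there is no official argument to compare against; your walk-following proof is the natural and correct justification. One step should be tightened, though. After reading off conjuncts \eqref{eq:cycle-no-two-incoming-outgoing} and \eqref{eq:cycle-incoming-outgoing} you assert that every non-isolated vertex has in-degree and out-degree exactly $1$. But those two conjuncts only give ``in-degree and out-degree each at most $1$'' and ``in-degree $\ge 1 \Rightarrow$ out-degree $\ge 1$''; they directly exclude the profile $(\text{in}=1,\text{out}=0)$ but not $(\text{in}=0,\text{out}=1)$, so the claimed dichotomy (isolated vs.\ in $=$ out $=1$) is not an immediate consequence of them as stated.

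The exclusion does hold, and you can get it in either of two ways. The first is a counting argument: the sum of out-degrees equals the sum of in-degrees equals $|E_{\s_2}|$, and since both degrees are bounded by $1$, the set $A$ of vertices with out-degree $1$ and the set $B$ of vertices with in-degree $1$ satisfy $|A|=|B|$; formula~\eqref{eq:cycle-incoming-outgoing} gives $B\subseteq A$, hence $A=B$. The second is that your own walk argument already proves it: starting from \emph{any} non-isolated $w$ (which has an out-edge, either directly or via \eqref{eq:cycle-incoming-outgoing}), the walk must close back at $w$ because, as you argue, minimality of the first repeat plus in-degree $\le 1$ forces $j=0$; but the closing edge $v_{k-1}\to v_0$ then gives $w$ an in-edge, so $w$ also has in-degree $1$. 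I would therefore either insert the counting remark where you currently say ``combined with the degree bounds this means\dots'', or reorder the exposition so that the dichotomy is stated as a consequence of the walk argument rather than as a premise for it. With that fix the proof is complete, and the concluding remark about \eqref{eq:cycle-non-empty} providing non-emptiness is appropriate.
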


\textbf{Formula for Step 3.}
To colour the edges of the cycle formed by $e_{ij}$, we will use
boolean variables $r_i$ for $0\leq i\leq n$ to color the source of the red
edges.
Once the red edges are determined, the blue edges are also uniquely
determined.
Only edges chosen by $\bar{e}$ are colored red, and no two consecutive edges can be coloured red.
\begin{align}
  \label{eq:LU-colour-red}
  \bigwedge_{0\leq i\leq n}
  \Big( r_i &\implies \bigvee_{0\leq j\leq n} e_{ij} \wedge \neg r_j \Big)
\end{align}
Then, red edges are edges with corresponding source $i$ satisfying $r_i$.
So for all $i,j\in\{0,\dots,n\}$ with $i\neq j$ we introduce the macro
$\red_{ij} := e_{ij} \wedge r_i$. 
Blue edges are those that have been chosen for the cycle and have
not been coloured red: $\blue_{ij} := e_{ij} \wedge \neg \red_{ij}$.
Each blue edge should satisfy one of the two conditions mentioned in
Definition~\ref{def:gvlu-distance-graph}.
\begin{align}
  \label{eq:LU-colour-blue}
  \Land_{i,j \in \{ 0, \dots, n\}, i \neq j} \blue_{ij} \implies  v_j - v_i \le U(j-i) 
\end{align}

Conjunction of (\ref{eq:LU-colour-red}) and
(\ref{eq:LU-colour-blue})
gives formula $\Phi_3$.

\begin{lemma}
  Let $\s_3$ be an assignment to variables $\bar{v}$, $\bar{e}$ and
  $\bar{r}$.
  Suppose $\s_3$ is a satisfying assignment for $\Phi_1 \land \Phi_2 \land
  \Phi_3$.  Then, the set of edges with $\s_3(e_{ij})$ being true forms a
  collection of vertex disjoint cycles formed from union of edges from
  $\gvlu$ and $G_{Z'}$ for some $v\in
  Z$.
\end{lemma}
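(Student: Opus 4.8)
**Proof plan for the final Lemma (that satisfying assignments of $\Phi_1 \wedge \Phi_2 \wedge \Phi_3$ yield vertex-disjoint cycles of edges from $\gvlu$ and $G_{Z'}$).**

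The plan is to compose the structural facts already established for the three subformulas and then pin down which distance graph each edge of the cycle comes from. First I would invoke Lemma~\ref{lem:step2}: since $\s_3$ restricted to $\bar e$ satisfies $\Phi_2$, the set of edges $\{i \to j : \s_3(e_{ij}) = \true\}$ is a vertex-disjoint union of (simple) cycles. This is the combinatorial skeleton; it remains only to argue that each chosen edge can legitimately be labelled by an edge of $\gvlu$ or of $G_{Z'}$, with the colouring $\red/\blue$ from $\Phi_3$ telling us which. From $\Phi_1$ and the satisfying assignment $\s_3$ we extract the valuation $v \in Z$ given by $v(i) = \s_3(v_i)$ (using \eqref{eq:v-in-Z}); this is the $v$ that indexes $\gvlu$.

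Next I would treat the two colours. For a red edge $i \to j$ (i.e.\ $\s_3(\red_{ij}) = \true$), by definition red edges carry the weight $c'_{ji}$ of the edge $i \to j$ in $G_{Z'}$, so such an edge is exactly an edge of $G_{Z'}$; nothing more is needed here since $G_{Z'}$ is a complete distance graph and every ordered pair of vertices has an edge in it. For a blue edge $i \to j$ ($\s_3(\blue_{ij}) = \true$), I must check that the edge $i \to j$ genuinely occurs in $\gvlu$. By Definition~\ref{def:gvlu-distance-graph}, $\gvlu$ contains the edge $i \to j$ precisely when $v(j) - v(i) < L(j-i)$ or $L(j-i) \le v(j) - v(i) \le U(j-i)$, i.e.\ exactly when $v(j) - v(i) \le U(j-i)$ (the first disjunct forcing $v(j)-v(i) < L(j-i) \le U(j-i)$, using $L \le U$ from the LU-bounds definition). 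Formula \eqref{eq:LU-colour-blue} asserts exactly $\blue_{ij} \imp v_j - v_i \le U(j-i)$, so every blue edge of the guessed cycle is indeed an edge of $\gvlu$. Finally, \eqref{eq:LU-colour-red} guarantees that the colouring is well-defined on the cycle: whenever $r_i$ holds there is a unique outgoing cycle edge $i \to j$ and $r_j$ is false, so no two consecutive edges of a cycle are both red; every cycle edge is red or blue, and every red edge has its successor blue.

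Putting these together: the edges with $\s_3(e_{ij}) = \true$ form vertex-disjoint cycles (Lemma~\ref{lem:step2}); each such edge, according to its colour, is an edge of $G_{Z'}$ (red) or of $\gvlu$ (blue) for the $v\in Z$ read off from $\Phi_1$; hence the claim. I do not expect a serious obstacle here — the lemma is essentially a bookkeeping/soundness statement assembling the per-step guarantees — but the one point that needs genuine care is the blue case, namely verifying that the single inequality $v_j - v_i \le U(j-i)$ in \eqref{eq:LU-colour-blue} correctly captures "the edge $i \to j$ exists in $\gvlu$" across both clauses of Definition~\ref{def:gvlu-distance-graph}; this rests on $L(j-i) \le U(j-i)$ for distinct clocks (or the degenerate case $L = \infty$, $U = -\infty$, in which $v_j - v_i \le -\infty$ is unsatisfiable and correctly forbids a blue edge there) and on the convention for edges involving the $0$ clock, where $L(x-0)=0$ and $U(0-x)=0$. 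I would state these as a short sub-claim and dispatch them by a direct case split.
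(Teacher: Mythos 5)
The paper does not give an explicit proof of this lemma (it is stated without argument, as a bookkeeping observation assembling the per-step guarantees of $\Phi_1,\Phi_2,\Phi_3$), and your proof supplies exactly the intended argument: invoke Lemma~\ref{lem:step2} for the cycle structure, read off $v\in Z$ from the $\Phi_1$ assignment, and verify that the two colours produced by $\Phi_3$ correspond to the two source graphs, with the only non-trivial check being that \eqref{eq:LU-colour-blue} exactly captures (over both disjuncts of Definition~\ref{def:gvlu-distance-graph}) that a blue edge $i\to j$ is a finite-weight edge of $\gvlu$.

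One small imprecision, which does not affect the validity of the conclusion: in the degenerate case $L(j-i)=\infty$, $U(j-i)=-\infty$, you say the constraint $v_j - v_i \le -\infty$ ``correctly forbids a blue edge there.'' In fact the first clause of Definition~\ref{def:gvlu-distance-graph} is vacuously satisfied (since $v(j)-v(i) < \infty$ always), so $\gvlu$ does contain the edge $i\to j$ --- just with the uninformative weight $(<,\infty)$. The reason the SMT formula's exclusion of such an edge is nonetheless sound, and indeed necessary, is that Step~4's weight assignment \eqref{eq:red-blue-weights} only covers the two finite-weight cases, and an edge of weight $(<,\infty)$ cannot contribute to the negative cycle sought by Theorem~\ref{thm:alu-inclusion}. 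So the formula restricts to the edges that are actually useful, which is what the lemma's phrasing ``formed from union of edges from $\gvlu$ and $G_{Z'}$'' is implicitly after. If you want to state the degenerate case carefully, argue from the side of what $\Phi_4$ and the negativity check require, rather than from $\gvlu$ lacking the edge.
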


\textbf{Formula for Step 4.} The last step is to add up weights of the red and blue edges. We
make use of real-valued variables $w_{i}$ for each source $i$ of an edge.
We associate weights of red and blue edges.
\begin{align}
  \label{eq:red-blue-weights}
  \Land_{i,j \in \{0, \dots, n\}, i\neq j} \hspace{-5mm}
  (\red_{ij} \implies w_{i} = c'_{ij}) \land &
  ((\blue_{ij} \wedge condition_1) \implies (w_i = L(j - i) \wedge \mathrm{~strict})) & \nonumber\\[-2ex] \land &
  ((\blue_{ij} \wedge condition_2) \implies w_i = v_j - v_i )
\end{align}
where, $condition_1 := v_j - v_i < L(j - i)$ and $condition_2 := L(j - i) \le v_j - v_i \le U(j - i)$.

Uncoloured edges take weight $0$,
\begin{align}
  \label{eq:no-colour-weight-0}
  \Land_{0\leq i\leq n}
  \Big( \bigwedge_{0\leq j\leq n} \neg e_{ij} \Big) \implies (w_{i} =
  0)
\end{align}
The final formula checks if the sum of the weights is at most $(<,0)$.
\begin{align}
  \label{eq:negative-cycle}
  \textstyle ((\sum_{0\leq i\leq n} ~w_{i}) < 0) \vee [ ~\mathrm{strict}~ \wedge ((\sum_{0\leq i\leq n} ~w_{i}) = 0) ]
\end{align}
Conjunction of (\ref{eq:red-blue-weights}),
(\ref{eq:no-colour-weight-0}) and (\ref{eq:negative-cycle}) gives
formula $\Phi_4$.
The final formula is $\Phi=\Phi_1 \land \Phi_2 \land \Phi_3 \land \Phi_4$.

\begin{theorem}
  Formula $\Phi$ as constructed above is satisfiable iff $Z \not \lud
  Z'$. 
\end{theorem}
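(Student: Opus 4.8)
The plan is to read the theorem off from Theorem~\ref{thm:alu-inclusion}, which already states that $Z \not\lud Z'$ holds iff there exist a valuation $v \in Z$ and a negative cycle in $\min(\gvlu, G_{Z'})$ with no two consecutive edges from $G_{Z'}$. So it suffices to argue that $\Phi$ is satisfiable exactly when such a $v$ and such a cycle exist, and this is done by matching each subformula $\Phi_1,\dots,\Phi_4$ to its corresponding step of the construction, reusing Lemma~\ref{lem:step2} and the lemma that describes satisfying assignments of $\Phi_1 \land \Phi_2 \land \Phi_3$.

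\emph{From a model to a witness.} Given a satisfying assignment $\s$ of $\Phi$, the lemma on $\Phi_1\land\Phi_2\land\Phi_3$ yields a valuation $v\in Z$ (namely the values of $\bar v$) and a vertex-disjoint union $C$ of simple cycles whose edges are coloured, the blue edges being genuine edges of $\gvlu$ carrying exactly the weights of Definition~\ref{def:gvlu-distance-graph} — here one uses that (\ref{eq:LU-colour-blue}) together with the $LU$-bounds axioms forces $v_j - v_i \le U(j-i)$ and hence rules out the degenerate $L(j-i)=\infty$ case, so that exactly one of $condition_1,condition_2$ holds for each blue edge — and the red edges being the $G_{Z'}$-edges $c'_{ij}$, with no two consecutive red edges by (\ref{eq:LU-colour-red}). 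Because $C$ is vertex-disjoint, the quantity $\sum_i w_i$ computed in $\Phi_4$ equals the total weight of $C$ under the weight arithmetic, with its strictness component recorded by the $\mathrm{strict}$ flag; hence satisfaction of (\ref{eq:negative-cycle}) says precisely that the weight of $C$ is at most $(<,0)$. Since the weight of a disjoint union of cycles is the sum of the component weights, the arithmetic forces at least one component to be a negative cycle on its own; that component lives in $\min(\gvlu, G_{Z'})$ and has no two consecutive $G_{Z'}$-edges, so Theorem~\ref{thm:alu-inclusion} gives $Z\not\lud Z'$.

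\emph{From a witness to a model.} Conversely, Theorem~\ref{thm:alu-inclusion} provides $v\in Z$ and a negative cycle $N$ in $\min(\gvlu, G_{Z'})$ with no two consecutive $G_{Z'}$-edges; we may assume $N$ is a simple cycle (decompose into simple sub-cycles, keep a negative one, and contract any two consecutive $G_{Z'}$-edges into one using canonicity of $G_{Z'}$, which has no negative cycle as $Z'\ne\emptyset$; this preserves negativity and the absence of consecutive $G_{Z'}$-edges and terminates since length strictly decreases). Now set $\bar v := v$, so $\Phi_1$ holds since $v\in Z$; set $e_{ij}$ true exactly on the edges of $N$, so $\Phi_2$ holds by Lemma~\ref{lem:step2}; set $r_i$ true exactly at the sources of the $G_{Z'}$-edges of $N$ — legitimate precisely because no two of these are consecutive, so that $\red_{ij}$ picks out the $G_{Z'}$-edges and $\blue_{ij}$ the $\gvlu$-edges of $N$ — so $\Phi_3$ holds, (\ref{eq:LU-colour-blue}) holding because every blue edge is an actual edge of $\gvlu$. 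Finally set each $w_i$ to the weight of the outgoing edge of $i$ in $N$ as prescribed by (\ref{eq:red-blue-weights}) (and $w_i=0$ off $N$ by (\ref{eq:no-colour-weight-0})). As $N$ is negative, its weight $(\lleq,\sum_i w_i)$ is at most $(<,0)$, i.e. either $\sum_i w_i<0$ or $\sum_i w_i=0$ with $\lleq$ strict, and the latter happens exactly when $N$ carries a type-$1$ blue edge, so the $\mathrm{strict}$ flag is set and (\ref{eq:negative-cycle}) is satisfied. Hence $\Phi$ is satisfiable.

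The main obstacle is the faithful bookkeeping of the weight arithmetic: reconciling the single negative cycle of Theorem~\ref{thm:alu-inclusion} with the vertex-disjoint union of cycles that $\Phi_2$ permits (so that in the model-to-witness direction one genuinely extracts one negative component), and correctly tracking the strict/non-strict component of the accumulated weight so that a cycle of total weight exactly $0$ is counted as negative iff it really contains a strict edge (the role of $\mathrm{strict}$ in (\ref{eq:negative-cycle})). A secondary delicate point is checking that (\ref{eq:LU-colour-blue}) together with the $LU$-bounds axioms characterizes \emph{exactly} when a blue edge is an actual edge of $\gvlu$, so that the formula admits neither spurious nor missing blue edges and the weight assigned by (\ref{eq:red-blue-weights}) always coincides with the one from Definition~\ref{def:gvlu-distance-graph}.
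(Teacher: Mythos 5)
The paper states this theorem without an explicit proof, so there is nothing to compare against directly; I evaluate your argument on its own. Your reduction to Theorem~\ref{thm:alu-inclusion} and the step-by-step matching of $\Phi_1,\dots,\Phi_4$ against the witness is the right idea, and the witness-to-model direction is essentially sound (including the reduction to a simple cycle via decomposition and contraction of consecutive $G_{Z'}$-edges). However, the model-to-witness direction has a genuine gap which you flag at the end as a ``delicate point'' but never close: you assert that satisfaction of (\ref{eq:negative-cycle}) ``says precisely that the weight of $C$ is at most $(<,0)$'', but that is not what the formula enforces. The boolean $\mathrm{strict}$ is constrained in (\ref{eq:red-blue-weights}) only by a one-sided implication --- it must be $\true$ if some blue edge is of type $condition_1$, but nothing forces it to be $\false$ otherwise. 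A satisfying assignment may therefore set $\mathrm{strict}$ to $\true$ spuriously; if the chosen cycle has only non-strict edges and total weight exactly $0$, the second disjunct of (\ref{eq:negative-cycle}) fires even though the cycle is \emph{not} negative, and no witness for $Z \not\lud Z'$ is produced.

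This is not a bookkeeping detail one can wave past: with the formula as written, the theorem is in fact false. Take two non-zero clocks $x,y$, $LU$-bounds with $L(x-y)=0$, $U(x-y)=1$, $L(y-x)=\infty$, $U(y-x)=-\infty$, and let $Z=Z'$ be the zone given by $x-y=1$ (so in the canonical graphs $c'_{xy}=-1$ and $c'_{yx}=1$). Since $Z\subseteq Z'$ we have $Z \lud Z'$. Yet setting $v_x=1$, $v_y=0$, $v_0=0$, the two-cycle $e_{xy}=e_{yx}=\true$, colours $r_x=\true$, $r_y=r_0=\false$ (so $x\to y$ is red of weight $-1$ and $y\to x$ is blue with $condition_2$ of weight $1$), $w_x=-1$, $w_y=1$, $w_0=0$, and $\mathrm{strict}=\true$ satisfies every conjunct of $\Phi$, because $\sum_i w_i = 0$ and (\ref{eq:negative-cycle}) holds. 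A correct proof must first repair the formula, e.g.\ by replacing the one-way constraint on $\mathrm{strict}$ with the biconditional $\mathrm{strict} \iff \bigvee_{i\neq j}\big(\blue_{ij}\wedge condition_1\big)$; only then does the equivalence between (\ref{eq:negative-cycle}) and ``the sum of edge weights is at most $(<,0)$'' hold, making the rest of your argument go through.
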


Note that there are $\Oo(n)$ real variables $v_i$, $w_i$, and
$\Oo(n^2)$ booleans $e_{ij}, r_i$. Given the representations of $Z,
Z'$ and the $LU$ bounds, the entire formula $\Phi$ can be computed in
$\Oo(n^3)$, with formula (\ref{eq:cycle-no-two-incoming-outgoing}) 
taking the maximum time. This gives an $\NP$ procedure for $Z \not
\lud Z'$ (c.f. Lemma~\ref{lem:linear-arithmetic-in-np}).



\section{Checking $Z \not\lud Z'$ is $\NP$-hard}
\label{sec:hardness}

We will consider a special case of $\lud$, which already turns out to
be hard. Let $M \ge 0$ be a natural number. Consider the $LU$ bounds
functions obtained as $L(x - y) = -M$, $U(x - y) = M$ for all non-zero
pairs of clocks $x, y$; and $L(0 - x) = -M$, $U(x - 0) = M$ for all
non-zero clocks $x$. For notational convenience we denote by $\md$ the
simulation arising out of these $LU$ bounds.

\begin{lemma}
  $\md$ is a bisimulation for every $M \ge 0$.
\end{lemma}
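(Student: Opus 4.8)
The statement claims that $\md$ — the $LU$-preorder arising from the symmetric bounds $L(x-y)=-M$, $U(x-y)=M$ (and $L(0-x)=-M$, $U(x-0)=M$) — is in fact a bisimulation, i.e. the relation $\md$ is symmetric (we already know it is reflexive, transitive and, by the theorems of Section~\ref{sec:new-simul-relat}, a simulation). So the only thing to prove is: if $v \md v'$ then $v' \md v$. I would therefore unfold Definition~\ref{def:lu-d-preorder} for this particular choice of $LU$ and show that the two defining conditions become symmetric in $v$ and $v'$.

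The plan is to do a small case analysis on the value of $v(x)-v(y)$ relative to the interval $[-M, M]$, keeping in mind the special cases when $x$ or $y$ is the $0$ clock (where one of $L$, $U$ is pinned to $0$). If $v(x)-v(y) < -M = L(x-y)$, the first clause of $v \md v'$ forces $v'(x)-v'(y) < -M$, and then symmetrically the first clause of $v' \md v$ holds because $v(x)-v(y) < -M$ as well. If $-M \le v(x)-v(y) \le M$, the second clause of $v\md v'$ gives $v'(x)-v'(y) \le v(x)-v(y) \le M$; I must then check that $v'(x)-v'(y)$ still lies in the interval $[-M,M]$ so that the second clause of $v'\md v$ applies and yields $v(x)-v(y)\le v'(x)-v'(y)$ — combined with the inequality in the other direction this forces $v(x)-v(y) = v'(x)-v'(y)$. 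The remaining case $v(x)-v(y) > M = U(x-y)$ is vacuous for the clauses defining $v\md v'$; here I would argue that it cannot actually occur when $v \md v'$, or rather handle it by symmetry — note that $v(x)-v(y) > M$ iff $v(y)-v(x) < -M$, so this case for the pair $(x,y)$ is exactly the first case for the pair $(y,x)$, which has already been treated.

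The one genuine subtlety — and the step I expect to be the main obstacle — is ensuring that when $-M \le v(x)-v(y)\le M$ and $v'(x)-v'(y)\le v(x)-v(y)$, the value $v'(x)-v'(y)$ has not dropped below $-M$, i.e. that we stay inside the window where the \emph{second} clause of the definition applies rather than the first. For distinct non-zero clocks this follows by applying the argument to the reversed pair $(y,x)$: $v(y)-v(x) = -(v(x)-v(y)) \le M = U(y-x)$ and $v(y)-v(x)\ge -M = L(y-x)$, so the second clause of $v\md v'$ for $(y,x)$ gives $v'(y)-v'(x) \le v(y)-v(x)$, i.e. $v'(x)-v'(y) \ge v(x)-v(y) \ge -M$, pinning $v'(x)-v'(y)$ into $[-M,M]$ exactly. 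For pairs involving the $0$ clock one uses the constraints $L(x-0)=0$, $U(0-x)=0$ from the definition of $LU$-bounds together with the fact that all clock values are non-negative; the same reversed-pair trick still closes the argument. Once these inequalities are in place, putting the two directions together yields $v(x)-v(y) = v'(x)-v'(y)$ in the bounded case and the strict inequalities match up in the unbounded case, so $v'\md v$, and hence $\md$ is symmetric and therefore a bisimulation. Finiteness / the fact that it is a simulation is already granted by the earlier theorems, so nothing further is needed.
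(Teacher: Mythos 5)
Your proposal is correct and takes essentially the same approach as the paper: both reduce the bisimulation claim to showing symmetry, and both obtain it by applying the $\md$ clauses to the reversed pair $(y,x)$ alongside $(x,y)$ to pin the bounded differences to exact equality (the paper phrases this compactly as the three-way characterization ``both $<-M$, or $-M\le a'=a\le M$, or both $>M$'').
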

\begin{proof}
  Note that if $-M \le v(x) - v(y) \le M$, then $-M \le v(y) - v(x)
  \le M$. Therefore, from Definition~\ref{def:lu-d-preorder} and the
  description of $\md$ given above, we can infer that $v \md v'$ if
  for all distinct $x,y$ (denoting $a = v(x) - v(y)$ and $a'=v'(x) -
  v'(y)$):
  \begin{itemize}
  \item either both $a'$ and $a$ are $< -M$
  \item or $-M \le a' = a \le M$
  \item or both $a'$ and $a$ are $> M$.
  \end{itemize}
  By symmetry we get $v \md v'$ iff $v' \md v$, showing that $\md$ is
  a bisimulation.
\end{proof}

Thanks to the above lemma, the $\md$ relation is an equivalence over
valuations. We will write $\regeq_M$ for $\md$, and $\reg{v}$ for the
set of valuations $v'$ such that $v \regeq_M v'$. The relation $\md$
can be extended to zones as in Page~\ref{algo:reachability}. With this
definition, we get that $Z \not \md Z'$ iff there exists $v \in Z$
such that for all $v' \regeq_M v$, we have $v' \not \in Z'$. The goal
is to show that deciding $Z \not \md Z'$ is $\NP$-hard. We
describe some notation and technical results before 
proceeding to the hardness proof.

We make use of a notion of \emph{tightness} between clocks which gets
induced by the $\regeq_M$ equivalence. Let $v$ be a valuation. Two
clocks $x_i$ and $x_j$ are said to be \emph{tight in $v$} if $-M \le
v(x_j) - v(x_i) \le M$. We denote this by $x_i \tight x_{j}$ (can be
read as $x_i$ and $x_{j}$ are tied to each other). Let $\tight^*$ (can
again be read as the tight relation) denote the reflexive and
transitive closure of $\tight$. Note that $\tight^*$ is an equivalence
over clocks. Moreover, when $v \regeq_M v'$, the equivalence classes
of $\tight^*$ in $v$ and $v'$ are identical. We say that a zone $Z$ is
\emph{topologically closed} if every edge $x \xra{} y$ in the
canonical distance graph of $Z$ has weight of the form $(\le, c)$ with
$c \in \Int$, or $(<, \infty)$.  A valuation $v$ mapping each $x$ to
an integer is said to be an \emph{integral valuation}. The next
proposition says that for certain topologically closed zones $Z$ and
$Z'$, if $Z \not \md Z'$ then there is an integral valuation as a
witness to this non-simulation. 

\begin{proposition}
  \label{prop:topo-closed-closure-inclusion}
  Let $Z$ be a topologically closed zone such that the $\tight^*$
  equivalence classes of every valuation in $Z$ are the same. Let $Z'$
  be a zone with $Z \not \md Z'$. Then, there exists an
  integral valuation $u \in Z$ such that $\reg{u} \cap Z'$ is empty.
\end{proposition}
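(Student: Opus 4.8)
The plan is to start from any $v\in Z$ with $\reg{v}\cap Z'=\emptyset$ — such a $v$ exists because $Z\not\md Z'$ — and to manufacture from it an \emph{integral} such valuation, by recasting non‑inclusion as a negative cycle and then doing linear programming over difference constraints. First I would reformulate: writing $\gvm$ for the distance graph of $\reg{v}$ (Definition~\ref{def:gvlu-distance-graph} specialised to the $LU$‑bounds defining $\md$), we have $\sem{\gvm}=\reg{v}$, so $\sem{\min(\gvm,G_{Z'})}=\reg{v}\cap Z'=\emptyset$ and hence $\min(\gvm,G_{Z'})$ contains a negative cycle $N$. The edges of $\gvm$ come in two sorts: \emph{tight} edges $y\to x$ of weight $(\le,v(x)-v(y))$, from pairs with $-M\le v(x)-v(y)\le M$, and \emph{separated} edges $y\to x$ of weight $(<,-M)$, from pairs with $v(x)-v(y)<-M$. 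Since separated edges and all edges of $G_{Z'}$ carry integer weights (zones have integer bounds), the weight of $N$ is $(\lleq,\ \ell+W_0)$, where $\ell=\sum(v(x)-v(y))$ ranges over the tight edges of $N$, $W_0\in\Int$ collects the rest, and $\lleq$ is $<$ exactly when $N$ uses a separated edge or a strict $G_{Z'}$‑edge.

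Next I would normalise $N$ and extract a polyhedron. Whenever a separated edge $y\to x$ of $N$ joins two clocks in the same $\tight^*$‑class, I replace it by the path of tight edges along a $\tight$‑chain from $y$ to $x$; by telescoping that path has weight $(\le,v(x)-v(y))$, which is $<(<,-M)$ since $v(x)-v(y)<-M$, so $N$ (now a closed walk, still certifying emptiness) stays negative. Every separated edge of the normalised walk now joins two \emph{distinct} $\tight^*$‑classes; as the classes are constant over the convex set $Z$ we get $w(x)-w(y)<-M$ for all $w\in Z$, and as $Z$ is topologically closed with integer bounds this sharpens to $w(x)-w(y)\le -M-1$ on all of $Z$. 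Let $P\subseteq Z$ be cut out by additionally imposing $-M\le w(x)-w(y)\le M$ for every tight edge and $w(x)-w(y)\le -M-1$ for every separated edge of the normalised walk. By the previous sentence $v\in P$, so $P\neq\emptyset$; and since $P$ is defined by difference constraints with integer data (together with $w(0)=0$) its constraint matrix is totally unimodular, so $P$ is an integral polyhedron.

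Finally I would minimise over $P$ the linear functional $\ell(w)=\sum(w(x)-w(y))$, summed over the tight edges of the normalised walk. Each summand is $\ge -M$ on $P$, so $\ell$ is bounded below, and by integrality of $P$ the minimum is attained at an integral valuation $u\in P\subseteq Z$ with $\ell(u)\le\ell(v)$. The normalised walk is still a walk in $\min(G_{\langle u\rangle}^{M},G_{Z'})$: its tight edges survive because $u\in P$, its separated edges because $u(x)-u(y)\le -M-1<-M$, and its $G_{Z'}$‑edges are untouched; its strict edges (the separated edges and the strict $G_{Z'}$‑edges) are unchanged, and its total weight is $(\lleq,\ \ell(u)+W_0)\le(\lleq,\ \ell(v)+W_0)\le(<,0)$. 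Hence $\min(G_{\langle u\rangle}^{M},G_{Z'})$ has a negative cycle, i.e.\ $\reg{u}\cap Z'=\emptyset$, so $u$ is the required integral witness.

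The hard part will be the normalisation, and in particular seeing that both hypotheses are essential: constancy of the $\tight^*$‑classes together with topological closedness of $Z$ is exactly what forces the surviving inter‑class separated edges to obey $w(x)-w(y)\le -M-1$ throughout $Z$, which is what puts $v$ into $P$ and so makes $P$ nonempty. One also has to be slightly careful that rerouting same‑class separated edges keeps the object a negative witness (telescoping handles this). Everything after $P$ is set up — bounding $\ell$ below and invoking LP‑integrality of difference systems via total unimodularity — is routine.
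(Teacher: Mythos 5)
Your proof is correct and takes a genuinely different route from the paper's. Both arguments start from $v\in Z$ with $\reg{v}\cap Z'=\emptyset$, read this off as a negative cycle in $\min(\gvm,G_{Z'})$, and then pass to an integral valuation while keeping a negative cycle; but the normalisations run in opposite directions and integrality is extracted by different tools. The paper enlarges $\gvm$ to $\gv$ by adding an edge of weight $(\le,v(x)-v(y))$ for every $\tight^*$-related pair without one, \emph{collapses} maximal runs of consecutive tight edges so that the tight (``yellow'') edges of the cycle become vertex-disjoint, and then invokes the bespoke assignment-LP result of Lemma~\ref{lem:topo-closed-minsum-integral} to obtain an integral $u$ minimising the yellow-edge sum over all of $Z$. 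You instead \emph{expand}: a $(<,-M)$ edge with both endpoints in the same $\tight^*$-class is rerouted through a telescoping $\tight$-chain of tight edges, which only decreases the weight, so that every surviving $(<,-M)$ edge separates distinct $\tight^*$-classes; constancy of those classes on the convex set $Z$ plus topological closedness then pins those differences at $\le -M-1$ throughout $Z$, so the type-preserving sub-polyhedron $P$ contains $v$, and total unimodularity of difference-constraint systems gives the integral minimiser directly. Your route buys two things: it dispenses with the vertex-disjointness hypothesis of Lemma~\ref{lem:topo-closed-minsum-integral}, since TU handles overlapping difference constraints without special preparation; and by minimising over $P$ rather than over all of $Z$ it explicitly keeps the walk's tight edges tight at $u$, a point the paper's optimisation over all of $Z$ handles less transparently (the paper's yellow/blue colouring overlaps when $y\tight^* x$ but $v(x)-v(y)<-M$, and a yellow edge could in principle acquire weight $(<,-M)$ at the optimiser $u$). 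On that specific case your argument is a little more self-contained and robust.
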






\subsection{Reduction from 3-SAT}

Consider the decision problem which takes as inputs two zones $Z, Z'$
and outputs whether $Z \not \md Z'$.  We will give a
 polynomial time reduction from 3-SAT to this decision problem,
 showing that it is $\NP$-hard.

\textbf{Notation.} Let $\var$ be a finite set of propositional
variables. A \emph{literal} is either a variable $p$ or its negation
$\neg p$, and a 3-\emph{clause} is a disjunction of three literals
$(l_1 \lor l_2 \lor l_3)$. A 3-CNF formula is a conjunction of
3-clauses. For a literal $l$, we write $\var(l)$ for the variable
corresponding to $l$. For a 3-CNF formula $\phi$, we write
$\var(\phi)$ for the variables present in $\phi$. An
\textit{assignment} to a 3-CNF formula $\phi$ is a function from
$\var(\phi)$ to $\{\ttt,\fff\}$. 
For a clause $C$ and an assignment $\s$, we write $\s \models C$ if
substituting $\s(p)$ for each variable $p$ occurring in $C$ evaluates
the clause to true. For a formula $\phi$ and an assignment $\s$, we
write $\s \models \phi$ if all clauses of $\phi$ evaluate to true
under $\s$. A formula $\phi$ is said to be \emph{satisfiable} if there
exists an assignment such that $\s \models \phi$. For the rest of the
section, fix a 3-CNF formula $\varphi := C_1 \land C_2 \land \cdots
\land C_N$.  Let $\clauses(\varphi)$ be the set $\{C_i ~|~ i \in
\{1,\dots, N\}\}$. 

We start with the idea for the reduction. We know that $\varphi$ is
satisfiable iff there \emph{exists} an assignment $\sigma$ such that
\emph{for all} $C \in \clauses(\varphi): \sigma \models
C$. Correspondingly, we know that $Z \not \fleq_M Z'$ iff there
\emph{exists} a $v \in Z$ such that \emph{for all} $v' \regeq_M v: v'
\notin Z'$. Given $\varphi$, we want to construct two topologically
closed zones $Z, Z'$ such that $\varphi$ is satisfiable iff $Z
\not\fleq_M Z'$. We want the (potential) $v \in Z$ for which
every $v' \regeq_M v$ satisfies $v' \not\in Z'$ to encode the
(potential) satisfying assignment for $\varphi$. In essence:
valuations in $Z$ should encode assignments, the equivalent valuations
$v'$ should encode clauses and the fact that $v' \not \in Z'$ should
correspond to the chosen clause being true.  We now proceed with the
details of the construction. For each literal $l_i^j$ of $\varphi$, we
add three clocks $x_i^j, y_i^j, z_i^j$. There are $N+1$ additional
clocks $r_0, r_1, \dots, r_N$. We will assume an arbitrary constant $M
> 3$. Figure \ref{fig:hardness} illustrates the construction.


\textbf{Construction of $Z$.} Zone $Z$ is described by three sets of
constraints. The first set of constraints are between clocks of each
literal. For every $i \in \{1, \dots, N\}$ and $j \in \{1, 2,
3\}$: 
\begin{align}
  \label{eq:within-a-block}
  y_i^j - x_i^j \geq 1 \quad \text{ and } \quad z_i^j - y_i^j \geq 1
  \quad \text{ and } \quad z_i^j - x_i^j = 3
\end{align}
The second set of constraints relates the distance between clocks of
different literals. In addition, we use the $r_i$ clocks as separators
between clauses. For $i \in \{1, \dots, N\}$:
\begin{align}
  \label{eq:between-blocks}
  x_i^1 - r_{i-1} = 2M - 3 ~\text{ and }~ x_i^{j+1} - z_i^j = 2M - 3
  \text{ for } j \in \{1, 2\} ~\text{ and }~ r_{i} - z_i^3 = 2M
\end{align}

\begin{figure}[t]
\centering
\begin{tikzpicture}[vertex/.style={circle, fill, inner sep=1pt}]




\shade [rounded corners, inner color=gray!5, outer color=gray!20] (-0.2,
3) rectangle (14.2, 5.5);

\shade [rounded corners, inner color=gray!5, outer color=gray!20] (-0.2,
-0.5) rectangle (14.2, 2.6);

\node at (0.5, 5.3) {\scriptsize \textbf{Zone} $Z$};

\begin{scope}[yshift = 0cm]

\foreach \x in {0.5, 2.5, 4.5, 7.5, 9.5, 11.5}
\node at (\x, 4) {\tiny$2M - 3$};

\node at (6.5,4) {\tiny$2M$};
\node at (13.5,4) {\tiny$2M$};

\begin{scope}[thick]

\draw [->, bend left, color = red] (1.5,4) to (1.05, 4) {};
\draw [->, bend left, color = red] (2,4) to (1.55, 4) {};
\draw [->, bend left, color = red] (3.5,4) to (3.05, 4) {};
\draw [->, bend left, color = red] (4,4) to (3.55, 4) {};
\draw [->, bend left, color = red] (5.5,4) to (5.05, 4) {};
\draw [->, bend left, color = red] (6,4) to (5.55, 4) {};
\draw [->, bend left, color = red] (8.5,4) to (8.05, 4) {};
\draw [->, bend left, color = red] (9,4) to (8.55, 4) {};
\draw [->, bend left, color = red] (10.5,4) to (10.05, 4) {};
\draw [->, bend left, color = red] (11,4) to (10.55, 4) {};
\draw [->, bend left, color = red] (12.5,4) to (12.05, 4) {};
\draw [->, bend left, color = red] (13,4) to (12.55, 4) {};

\draw [rounded corners = 5] (5.5,4) to (5.5,4.5) to (8.5, 4.5) to (8.5,4) {};
\draw [rounded corners = 5] (1.5,4) to (1.5,5) to (12.5, 5) to (12.5,4) {};

\draw [rounded corners = 2, color = red] (2,4) to (2,3.5) to (1, 3.5) to (1,4) {};
\draw [rounded corners = 2, color = red] (4,4) to (4,3.5) to (3, 3.5) to (3,4) {};
\draw [rounded corners = 2, color = red] (6,4) to (6,3.5) to (5, 3.5) to (5,4) {};
\draw [rounded corners = 2, color = red] (6,4) to (6,3.5) to (5, 3.5) to (5,4) {};
\draw [rounded corners = 2, color = red] (9,4) to (9,3.5) to (8, 3.5) to (8,4) {};
\draw [rounded corners = 2, color = red] (11,4) to (11,3.5) to (10, 3.5) to (10,4) {};
\draw [rounded corners = 2, color = red] (13,4) to (13,3.5) to (12, 3.5) to (12,4) {};

\end{scope}

\node at (7, 4.6) {\tiny $ = 4M$};
\node at (7, 5.15) {\tiny $ = 12M$};

\node at (1.5, 3.3) {\tiny $= 3$};
\node at (3.5, 3.3) {\tiny $= 3$};
\node at (5.5, 3.3) {\tiny $= 3$};
\node at (8.5, 3.3) {\tiny $= 3$};
\node at (10.5, 3.3) {\tiny $= 3$};
\node at (12.5, 3.3) {\tiny $= 3$};

\foreach \x in {0, 1, ..., 14}
\node[vertex] at (\x, 4) {};

\foreach \x in {1.5, 3.5, 5.5}
\node[vertex] at (\x, 4) {};

\foreach \x in {8.5, 10.5, 12.5}
\node[vertex] at (\x, 4) {};

\foreach \x in {0, 7, 14} 
\node[ rectangle, inner sep=2pt, draw=black, fill=green!50] at (\x, 4) {};



\node at (1.25, 3.8) {\tiny $-1$};
\node at (1.75, 3.8) {\tiny$-1$};
\node at (3.25, 3.8) {\tiny$-1$};
\node at (3.75, 3.8) {\tiny$-1$};
\node at (5.25, 3.8) {\tiny$-1$};
\node at (5.75, 3.8) {\tiny$-1$};

\node at (8.25, 3.8) {\tiny$-1$};
\node at (8.75, 3.8) {\tiny$-1$};
\node at (10.25, 3.8) {\tiny$-1$};
\node at (10.75, 3.8) {\tiny$-1$};
\node at (12.25, 3.8) {\tiny$-1$};
\node at (12.75, 3.8) {\tiny$-1$};

\end{scope}


\node at (0.5, 2.3) {\scriptsize \textbf{Zone} $Z'$};
\foreach \x in {0.5, 2.5, 6.5, 7.5, 11.5, 13.5}
\node at (\x, 1) {\tiny$2M - 3$};


\begin{scope}[thick]

\draw [->, bend left, color = red] (1.5,1) to (1.05, 1) {};
\draw [->, bend left, color = red] (2,1) to (1.55, 1) {};
\draw [->, bend left, color = red] (3.5,1) to (3.05, 1) {};
\draw [->, bend left, color = red] (4,1) to (3.55, 1) {};
\draw [->, bend left, color = red] (5.5,1) to (5.05, 1) {};
\draw [->, bend left, color = red] (6,1) to (5.55, 1) {};
\draw [->, bend left, color = red] (8.5,1) to (8.05, 1) {};
\draw [->, bend left, color = red] (9,1) to (8.55, 1) {};
\draw [->, bend left, color = red] (10.5,1) to (10.05, 1) {};
\draw [->, bend left, color = red] (11,1) to (10.55, 1) {};
\draw [->, bend left, color = red] (12.5,1) to (12.05, 1) {};
\draw [->, bend left, color = red] (13,1) to (12.55, 1) {};

\draw [rounded corners = 2, color = red] (2,1) to (2,0.5) to (1, 0.5) to (1,1) {};
\draw [rounded corners = 2, color = red] (4,1) to (4,0.5) to (3, 0.5) to (3,1) {};
\draw [rounded corners = 2, color = red] (6,1) to (6,0.5) to (5, 0.5) to (5,1) {};
\draw [rounded corners = 2, color = red] (9,1) to (9,0.5) to (8, 0.5) to (8,1) {};
\draw [rounded corners = 2, color = red] (11,1) to (11,0.5) to (10, 0.5) to (10,1) {};
\draw [rounded corners = 2, color = red] (13,1) to (13,0.5) to (12, 0.5) to (12,1) {};

\draw [->, rounded corners = 5, color = blue!50] (1.5,1) to (1.5,2.3) to (5,2.3) to (5,1.05);
\draw [->, rounded corners = 3, color = blue!50] (3.5,1) to (3.5,2.05) to (5,2.05) to (5,1.05);
\draw [->, rounded corners = 3, color = blue!50] (4,1) to (4,1.8) to (5.5,1.8) to (5.5,1.05);
\draw [->, rounded corners = 3, color = blue!50] (8.5,1) to (8.5,2.3) to (10,2.3) to (10,1.05);
\draw [->, rounded corners = 3, color = blue!50] (9,1) to (9,1.8) to (10.5,1.8) to (10.5,1.05);
\draw [->, rounded corners = 3, color = blue!50] (9,1) to (9,2.05) to (12.5,2.05) to (12.5,1.05);

\draw [->, rounded corners = 5] (14,1) to (14,0) to (0,0) to (0,0.9);

\end{scope}

\draw [->, thick, bend left, gray] (4,1) to (4.95,1) {};
\draw [->, thick, bend left, gray] (5,1) to (4.05,1) {};
\draw [->, thick, bend left, gray] (9,1) to (9.95,1) {};
\draw [->, thick, bend left, gray] (10,1) to (9.05,1) {};

\node at (3,2.4) {\tiny $ 4M + 2$};
\node at (3,1.75) {\tiny $ 2M + 2$};
\node at (6,1.6) {\tiny $2M + 2$};

\node at (9.1,2.4) {\tiny $2M + 2$};
\node at (10.8,2.25) {\tiny $ 4M + 2$};
\node at (11,1.5) {\tiny $ 2M + 2$};

\node at (4.5, 1.25) {\tiny $2M + 1$};
\node at (4.5, 0.75) {\tiny $- 2M$};
\node at (9.5, 1.25) {\tiny $2M + 1$};
\node at (9.5, 0.75) {\tiny $-2M$};

\node at (1.5, 0.3) {\tiny $= 3$};
\node at (3.5, 0.3) {\tiny $= 3$};
\node at (5.5, 0.3) {\tiny $= 3$};
\node at (8.5, 0.3) {\tiny $= 3$};
\node at (10.5, 0.3) {\tiny $= 3$};
\node at (12.5, 0.3) {\tiny $= 3$};

\node at (1.25, 0.8) {\tiny $-1$};
\node at (1.75, 0.8) {\tiny$-1$};
\node at (3.25, 0.8) {\tiny$-1$};
\node at (3.75, 0.8) {\tiny$-1$};
\node at (5.25, 0.8) {\tiny$-1$};
\node at (5.75, 0.8) {\tiny$-1$};

\node at (8.25, 0.8) {\tiny$-1$};
\node at (8.75, 0.8) {\tiny$-1$};
\node at (10.25, 0.8) {\tiny$-1$};
\node at (10.75, 0.8) {\tiny$-1$};
\node at (12.25, 0.8) {\tiny$-1$};
\node at (12.75, 0.8) {\tiny$-1$};

\node at (7,-0.15) {\tiny $ - (16M + 1)$};

\foreach \x in {0, 1, ..., 14}
\node[vertex] at (\x, 1) {};

\foreach \x in {1.5, 3.5, 5.5}
\node[vertex] at (\x, 1) {};

\foreach \x in {8.5, 10.5, 12.5}
\node[vertex] at (\x, 1) {};

\foreach \x in {0, 7, 14} 
\node[ rectangle, inner sep=2pt, draw=black, fill=green!50] at (\x, 1)
{};











\end{tikzpicture}
\caption{Illustration of the zone $Z$ and $Z'$ for the formula $(p_1
  \lor p_2 \lor \neg p_3) \land (p_3 \lor \neg p_4 \lor \neg p_1)$. The
  separator clocks $r_0, r_1, r_2$ are shown by the green boxes
  (leftmost box is $r_0$, middle one is $r_1$ and the rightmost is
  $r_2$). The intermediate literal clocks are shown by the black dots:
  between $r_0$ and $r_1$ are $x_1^1, y_1^1, z_1^1, x_1^2,
  y_1^2,z_1^2, x_1^3, y_1^3, z_1^3$ in the same sequence. Similarly
  between $r_1$ and $r_2$ are the clocks $x_2^1, \dots, z_2^3$.  An edge of the
  form $x \xrightarrow{c} y$ simply denotes the constraint $y - x \leq c$, 
  whereas edges $x \xrightarrow{{}=~c} y$ mean that
  $y- x = c$.  When we write $c$ between two consecutive clocks, we mean that the
  difference between them equals $c$.}
\label{fig:hardness}
\end{figure}
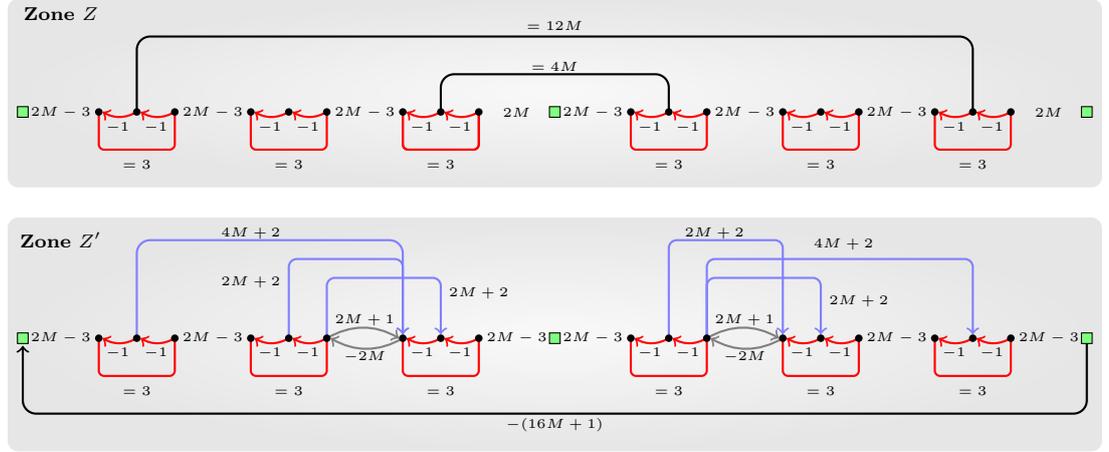

Constraints (\ref{eq:within-a-block}) and (\ref{eq:between-blocks})
ensure that for every valuation in $Z$ we have the following order of
clocks for each $i \in \{1, \dots, N\}$:
\begin{align}\label{eq:order-in-Z}
  r_{i-1} ~~< ~~x_i^1 < y_i^1 < z_i^1 ~~<~~ x_i^2 < y_i^2 < z_i^2
  ~~<~~ x_i^3 < y_i^3 < z_i^3 ~~<~~ r_i
\end{align}
In every valuation of $Z$, we have $x_i^j \tight y_i^j \tight z_i^j$
for every literal $l_{i}^j$. This is because we have assumed that $M >
3$ and we have restricted the gaps between $x_i^j, y_i^j$ and $y_i^j,
z_i^j$ to be in the interval $[1,2]$ (c.f.
(\ref{eq:within-a-block})).  We do not want any other pair of clocks
that are consecutive according to the above ordering to be
tight. Hence we choose the rest of the gaps to be strictly more than
$M$ (c.f.~(\ref{eq:between-blocks})). This gives a $\tight^*$ picture
in which each $\{x_i^j, y_i^j, z_i^j\}$ forms a block of ``length''
$3$ and the gaps between each such blocks, or between a block and a
separator is larger than $M$. Note that for each $v \in Z$, we also
have $v(r_i) - v(r_{i-1}) = 8M$ for $i \in \{1, \dots, N\}$.  We will
next enforce that literals in $\varphi$ involving the same variable
have the same $y - x$ and $z - y$ values for their corresponding
clocks. Without loss of generality, we assume that the three literals
corresponding to the same clause have different variables. Therefore
this condition is relevant for literals in different clauses, but with
the same variable. For every $l_i^j$ and $l_{i'}^{j'}$ such that
$\var(l_i^j) = \var(l_{i'}^{j'})$ and $i' > i$:
\begin{align}
  \label{eq:same-literal-same-difference}
  y_{i'}^{j'} - y_i^j = (i'-i)\cdot 8M + (j'-j)\cdot 2M
\end{align}
Note that from (\ref{eq:within-a-block}) and (\ref{eq:between-blocks})
we can infer that the values of $v(x_{i'}^{j'}) - v(x_i^j)$ and
$v(z_{i'}^{j'}) - v(z_i^j)$ are already equal to the right hand side of the
above equation, as the $x$ and $z$ clocks are ``fixed'' and $y$ is
``flexible''. Constraint (\ref{eq:same-literal-same-difference}) then
ensures that $v(y_i^j) - v(x_i^j) = v(y_{i'}^{j'}) - v(x_{i'}^{j'})$
and $v(z_i^j) - v(y_i^j) = v(z_{i'}^{j'}) - v(y_{i'}^{j'})$ whenever
$l_i^j$ and $l_{i'}^{j'}$ with $i' > i$, have the same variable.

\medskip \textbf{Encoding of assignments:} \label{encoding} We call a
valuation $v$ to be \emph{integer tight} if for every pair of clocks
$x,y$ such that $x \tight^* y$ in $v$, we have $v(y) - v(x)$ to be an
integer. By construction of $Z$, a valuation $ v \in Z$ will be
integer tight if $v(y_i^j) - v(x_i^j)$ is an integer for every $i,
j$. Moreover, by construction, this value can either be $1$ or $2$. We
will use such integer tight valuations to encode the variable
assignments. An integer tight valuation $v$ encodes the assignment
$\s_v$ given by: $\s_v(\var(l_i^j)) = \ttt$ if $v(y_i^j) - v(x_i^j) =
1$ and $\s_v(\var(l_i^j)) = \fff$ if $v(y_i^j) - v(x_i^j) = 2$.
By (\ref{eq:same-literal-same-difference}), the above assignment is
well defined. Moreover, the zone $Z$ contains an integer tight
valuation for every possible assignment.


We have encoded assignments to variables using integer tight
valuations. An assignment $\s$ satisfies $\varphi$ if every clause
evaluates to true under $\s$. From a valuation $v$ encoding this
assignment $\s$, we need a mechanism to check whether each clause is
true. This is where we will use the clock differences which are not
tight, that is the ones which are $>M$. Clauses will be identified by
certain kind of shifts to these unbounded differences in $v$. We will
introduce some more notation.  Let $L := \{ (x_i^j, y_i^j, z_i^j)~|~i
\in \{1, \dots, N\} \text{ and } j \in \{1, 2, 3\}\}$ be the triplets
of clocks associated with each literal. A literal is said to be
\emph{positive} if it is a variable $p$, and it is \emph{negative} if
it is the negation $\neg p$ of some variable $p$. We will assume that
in every clause of $\varphi$, the positive literals are written before
the negative literals: for example, we write $p_1 \lor p_3 \lor \neg
p_2$ instead of $p_1 \lor \neg p_2 \lor p_3$. For each clause $C_i$,
let $(e_i,f_i)$ be the pair of clocks corresponding to $C_i$ in the
border between positive and negative literals:
\begin{align}
  \label{eq:border-clocks}
  (e_i,f_i) := \begin{cases}
    (r_{i-1}, x_i^1) & \text{ if all literals in $C_i$ are negative } \\
    (z_i^j, x_i^{j+1}) & \text{ if for $j \in \{1, 2\}$, $l_i^j$ is
      positive and $l_i^{j+1}$ is negative } \\
    (z_i^3, r_i) & \text{ if all literals in $C_i$ are positive }
  \end{cases}
\end{align}
Given the formula $\varphi$, the above border clocks are fixed.  For a
valuation $v \in Z$ and $i\in\{1,\dots,N\}$, define $v_i$ to be the
valuation such that:
\begin{itemize}
\item $v_i(y) - v_i(x) = v(y) - v(x)$ and $v_i(z) - v_i(y) = v(z) -
  v(y)$ for all $(x,y,z) \in L$
\item $v_i(f_i) - v_i(e_i) = 2M + 1$ and $v_i(f_{i'}) - v_i(e_{i'}) =
  2M$ for all $i' \neq i$,
\item $v_i(r_0)=0$ and all other differences between consecutive
  clocks (according to order given by~(\ref{eq:order-in-Z})) is $2M -
  3$.
\end{itemize}
Valuation $v_i$ acts as a representative for the clause $C_i$, through
the choice of the difference $2M + 1$ in the border of $C_i$, and $2M$
in the other borders. We want to construct zone $Z'$ such that when
$C_i$ is true, the valuation $v_i$ forms a negative cycle with the
constraints of $Z'$, via the literal which is true in $C_i$.

\medskip

\textbf{Construction of $Z'$.}  Zone $Z'$ is described by five sets of
constraints. The first set of constraints are between the clocks of
the same literal, and are identical to that in $Z$:
\begin{align}
  \label{eq:within-a-block-Z'}
  y_i^j - x_i^j \geq 1 \quad \text{ and } \quad z_i^j - y_i^j \geq 1
  \quad \text{ and } \quad z_i^j - x_i^j = 3
\end{align}
The second set of constraints are for border clocks in each
clause. For each $i \in \{1, \dots, N\}$:
\begin{align}
  \label{eq:border-difference-Z'}
  2M \le f_i - e_i \le 2M + 1
\end{align}
where $e_i$ and $f_i$ are according to the definition in
(\ref{eq:border-clocks}).  The third set of constraints fix
differences between consecutive blocks not involving border clocks to
$2M - 3$.
\begin{align}
  \label{eq:non-border-Z'}
  x_i^1 - r_{i-1} & = 2M - 3 ~~ \text{ if } (r_{i-1}, x_i^1) \neq
  (e_i,
  f_i)  \text{ and } \\
  \nonumber x_i^{j+1} - z_i^j & = 2M - 3 ~~\text{ for } j \in \{1, 2\}
  \text{ when } (z_i^j, x_{i}^{j+1})
  \neq (e_i, f_i) ~~\text{ and } \\
  \nonumber r_{i} - z_i^3 & = 2M - 3 ~~\text{ when } (z_i^3, r_i) \neq
  (e_i, f_i)
\end{align}

From
(\ref{eq:within-a-block-Z'},\ref{eq:border-difference-Z'},\ref{eq:non-border-Z'}),
we see that for every valuation in $Z'$ the difference between
separators, that is $r_i - r_{i-1}$, is between $8M$ and $8M + 1$ with
the flexibility coming due to $f_i - e_i$. The fourth set of
constraints ensures that at least one of the $f_i - e_i$ should be
bigger than $2M$.
\begin{align}
  \label{eq:choose-clause-Z'}
  r_N - r_0 \ge (8M\cdot N) + 1
\end{align}

So far, the constraints that we have chosen for $Z'$ do not talk about
clauses being true or false. 
%
Recall that valuation $v_i$ where the border $v_i(f_i) - v_i(e_i) =
2M+1$ represents the choice of $C_i$ for evaluation.
The final set of constraints ensure that for every valuation $v'$ in
$Z'$ which has integer values for the $y - x$ values and has $v'(f_i)
- v'(e_i) = 2M + 1$, every literal in $C_i$ evaluates to false under
the encoding scheme given in Page~\pageref{encoding}: that is, if
$l_i^j$ is positive then $v'(y_i^j) - v'(x_i^j)$ cannot be $1$ and
when $l_i^j$ is negative, $v'(y_i^j) - v'(x_i^j)$ cannot be $2$. For a
positive literal $l_i^j$ let $d_i^j\in\{0,1,2\}$ be the number of
$(x,y,z)$ blocks corresponding to positive literals between $z_i^j$
and $e_i$ (does not include $j$). Similarly, for a negative literal,
let $d_i^j\in\{0,1,2\}$ be the number of blocks corresponding to
negative literals between $f_i$ and $x_i^j$ (again, excludes $j$). We
add the following constraints: 
\begin{align}
  \label{eq:true-implies-constraint-not-satisfied-in-Z'}
  f_i - y_i^j & \le d_i^j \cdot 2M + (2M + 2) \quad \text{ if } l_i^j
  \text{ is a
    positive literal } \\
  \nonumber y_i^j - e_i & \le d_i^j \cdot 2M + (2M + 2) \quad \text{
    if } l_i^j \text{ is a negative literal }
\end{align}

\begin{theorem}\label{thm:hardness}
  Formula $\varphi$ is satisfiable iff $Z \not \fleq_M Z'$. The
  decision problem $Z \not \fleq_M Z'$ is $\NP$-hard.
\end{theorem}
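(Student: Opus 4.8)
The plan is to prove the biconditional in two directions, and then the $\NP$-hardness follows immediately since the construction of $Z,Z'$ from $\varphi$ is clearly polynomial (there are $\Oo(N)$ clocks and $\Oo(N)$ constraints, all with constants polynomial in $M$ and $N$) and $M>3$ can be fixed to any constant such as $4$. Throughout I would rely on Proposition~\ref{prop:topo-closed-closure-inclusion}: $Z$ is topologically closed (all its defining constraints are non-strict, and constraints (\ref{eq:within-a-block}) force closed weights) and, by the $\tight^*$-analysis following (\ref{eq:order-in-Z}), every valuation of $Z$ has the same $\tight^*$ classes, namely the literal blocks $\{x_i^j,y_i^j,z_i^j\}$. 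Hence $Z\not\fleq_M Z'$ iff there is an \emph{integral} $u\in Z$ with $\reg{u}\cap Z'=\emptyset$; and since $u$ is integral and integer-tight, it encodes an assignment $\s_u$. So the real content is: $\s_u\models\varphi$ iff $\reg{u}\cap Z'=\emptyset$.

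For the direction ``$\varphi$ satisfiable $\Rightarrow Z\not\fleq_M Z'$'': given a satisfying assignment $\s$, take the integer-tight valuation $u\in Z$ encoding $\s$ (it exists by the remark after (\ref{eq:same-literal-same-difference})). I must show every $v'\in\reg{u}$ lies outside $Z'$. A valuation $v'\regeq_M u$ keeps all tight differences equal to those in $u$ — in particular the block-internal values $y_i^j-x_i^j\in\{1,2\}$ and $z_i^j-x_i^j=3$ are inherited, so $v'$ still encodes $\s$. The only freedom is in the non-tight ($>M$) differences between blocks/separators. If $v'$ violated any of (\ref{eq:within-a-block-Z'}), (\ref{eq:border-difference-Z'}), (\ref{eq:non-border-Z'}), (\ref{eq:choose-clause-Z'}), we are done; so suppose $v'$ satisfies all of those. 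From (\ref{eq:border-difference-Z'}) each border satisfies $2M\le v'(f_i)-v'(e_i)\le 2M+1$, and from (\ref{eq:choose-clause-Z'}) together with the fact that $r_N-r_0=\sum_i (r_i-r_{i-1})$ and each $r_i-r_{i-1}\in[8M,8M+1]$, at least one border, say of clause $C_k$, must have $v'(f_k)-v'(e_k)=2M+1$ (it cannot be strictly between $2M$ and $2M+1$ because that difference is an integer combination of tight values plus the single border edge — here I would spell out that $f_k-e_k$ minus the block-internal tight parts is an integer, forcing an integer value). Since $\s\models C_k$, some literal $l_k^j$ of $C_k$ is true under $\s$; I then compute, using the chain of clocks from $e_k$ to $f_k$ through $y_k^j$ (summing the fixed $2M-3$ gaps, the intervening blocks, and $y_k^j-x_k^j$), that the true-literal constraint (\ref{eq:true-implies-constraint-not-satisfied-in-Z'}) for $l_k^j$ is violated by $v'$. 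Hence $v'\notin Z'$, as required.

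For the converse ``$Z\not\fleq_M Z'\Rightarrow\varphi$ satisfiable'': take the integral witness $u\in Z$ from Proposition~\ref{prop:topo-closed-closure-inclusion} with $\reg{u}\cap Z'=\emptyset$, and let $\s:=\s_u$. I must show $\s\models\varphi$, i.e.\ each $C_k$ is true under $\s$. Suppose not — some $C_k$ is false, meaning every positive $l_k^j$ has $u(y_k^j)-u(x_k^j)=2$ and every negative one has $u(y_k^j)-u(x_k^j)=1$. I then exhibit a valuation $v'\in\reg{u}\cap Z'$, namely (essentially) the representative $v_k$ defined before the construction of $Z'$: it agrees with $u$ on all block-internal (tight) differences, hence $v_k\regeq_M u$, and sets $v_k(f_k)-v_k(e_k)=2M+1$, all other borders to $2M$, all other inter-block gaps to $2M-3$. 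One checks $v_k$ satisfies (\ref{eq:within-a-block-Z'})–(\ref{eq:choose-clause-Z'}) directly from its definition, and satisfies (\ref{eq:true-implies-constraint-not-satisfied-in-Z'}) precisely because $C_k$ being false makes each literal's $y-x$ value the ``large'' one, which is exactly what makes the $f_k-y_k^j$ or $y_k^j-e_k$ distance small enough to meet the $\le d_k^j\cdot 2M+(2M+2)$ bound (and the constraints (\ref{eq:true-implies-constraint-not-satisfied-in-Z'}) for clauses $i\neq k$ are slack because there $v_k(f_i)-v_i(e_i)=2M$, giving extra room). So $v_k\in\reg{u}\cap Z'$, contradicting emptiness; hence every $C_k$ is true and $\s\models\varphi$.

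The main obstacle I expect is the bookkeeping in the two ``compute the distance along the clock chain'' arguments: getting the constants $d_i^j\cdot 2M+(2M+2)$, $2M+1$ versus $2M$, and the block offsets $(i'-i)\cdot 8M+(j'-j)\cdot 2M$ to line up exactly so that a true literal forces a strict violation of (\ref{eq:true-implies-constraint-not-satisfied-in-Z'}) while a false literal leaves it satisfiable, with no off-by-one slack in either direction. A secondary subtlety is the ``integrality of border differences'' step — justifying that a $v'\regeq_M u$ satisfying all of $Z'$'s metric constraints cannot take a border value strictly between $2M$ and $2M+1$, which is what licenses the case split on which clause is ``selected''. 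Everything else (topological closedness of $Z$, uniformity of $\tight^*$ classes, polynomiality of the reduction) is routine given the earlier results.
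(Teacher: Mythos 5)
Your overall structure matches the paper's proof (which is carried out via Lemmas~\ref{lem:sat-diag} and~\ref{lem:diag-sat} together with Proposition~\ref{prop:topo-closed-closure-inclusion}): reduce to an integral witness $u\in Z$, read off the assignment $\s_u$ from the block-internal tight differences, and argue the two directions by showing that the representative valuations $v_i$ lie inside or outside $Z'$ according to the truth of $C_i$. The backward direction you describe is precisely the contrapositive of Lemma~\ref{lem:diag-sat}, and the bookkeeping you defer is exactly what the paper spells out, so that part is fine.

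However, one intermediate step in your forward direction is wrong as stated: you claim that if a $v'\regeq_M u$ satisfies all the metric constraints of $Z'$, then some border must satisfy $v'(f_k)-v'(e_k)=2M+1$ exactly, justified by an ``integrality'' argument. This is false. The border gaps $f_i-e_i$ are \emph{non-tight} differences (they are $\ge 2M>M$), so the equivalence $v'\regeq_M u$ places no equality constraint on them, and the only constraint from $Z'$ is $2M\le f_i-e_i\le 2M+1$; these values need not be integers (e.g.\ every border could equal $2M+1/N$, which satisfies (\ref{eq:border-difference-Z'}) and (\ref{eq:choose-clause-Z'})). Claiming ``$f_k-e_k$ minus the block-internal tight parts is an integer'' conflates the internal block structure with the free border gap. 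The correct observation — which is what Lemma~\ref{lem:sat-diag} in the paper actually uses, via Lemma~\ref{lem:gaps-between-r's} — is the strictly weaker fact that (\ref{eq:choose-clause-Z'}) together with $r_i-r_{i-1}\le 8M+1$ forces $v'(f_k)-v'(e_k)>2M$ for at least one $k$. That strict inequality already produces a strict violation of (\ref{eq:true-implies-constraint-not-satisfied-in-Z'}) when combined with the true literal's $y-x$ value, so your conclusion still follows; but as written, the integrality step you flagged as a ``secondary subtlety'' is in fact unprovable and should be replaced by the strict-inequality argument.
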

\begin{proof}(Sketch.)  Assume $\varphi$ is satisfiable. Consider the
  valuation $v \in Z$ corresponding to the satisfying assignment. Pick
  an arbitrary $v' \regeq_M v$. If $v'$ were to lie in $Z'$, by
  (\ref{eq:choose-clause-Z'}), at least one of the border differences
  should be $> 2M$. This forms a contradiction with the literal
  that is true in clause $C_i$ due to
  (\ref{eq:true-implies-constraint-not-satisfied-in-Z'}).

  Assume $Z \not \fleq_M Z'$. As $Z$ and $Z'$ are topologically
  closed, and the $\tight^*$ equivalence classes are same for every
  valuation in $Z$, by
  Proposition~\ref{prop:topo-closed-closure-inclusion} there is an
  integral valuation $v$ such that $\reg{v} \cap Z'$ is empty. This
  $v$  gives a satisfying assignment: mainly, each $v_i$ corresponding to $v$
  will form a negative cycle with some literal clocks of $C_i$, and
  this literal will be made true by the assignment corresponding to $v$.
\end{proof}

\begin{theorem}
  The decision problem $Z \not \lud Z'$ is NP-hard.

\end{theorem}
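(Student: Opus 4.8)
The plan is to observe that the relation $\md$ studied in Section~\ref{sec:hardness} is already a special case of $\lud$, so the $\NP$-hardness proved for $Z \not \md Z'$ transfers without any extra work. Concretely, $\md$ is exactly the relation $\lud$ associated with the $LU$ bounds function $LU_M$ defined by $L(x-y) = -M$ and $U(x-y) = M$ for all distinct non-zero clocks $x,y$, by $L(0-x) = -M$ and $U(x-0) = M$ for all non-zero $x$, and by $L(x-0) = 0$, $U(0-x) = 0$ (the last two being forced by the definition of an $LU$ bounds function). First I would check that $LU_M$ is a legitimate $LU$ bounds function: $L(x-y) \le U(x-y)$ holds everywhere since $-M \le M$ for $M \ge 0$, and the two boundary conditions hold by construction; the fact that $\md$ and $\lud$ (for $LU_M$) denote the same relation is precisely what the bisimulation lemma of Section~\ref{sec:hardness} spells out via Definition~\ref{def:lu-d-preorder}.

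Given this identification, the reduction is immediate: take the polynomial-time map $\varphi \mapsto (Z, Z')$ from Theorem~\ref{thm:hardness} and augment its output with the bounds function $LU_M$. Since every entry of $L$ and $U$ is one of $0$, $M$, $-M$ and $M$ may be taken to be a fixed small constant (any $M > 3$ works, as in the construction), $LU_M$ is trivially computable in time polynomial in $|\varphi|$. By Theorem~\ref{thm:hardness}, $\varphi$ is satisfiable iff $Z \not \md Z'$, i.e.\ iff $Z \not \lud Z'$ for the emitted bounds function; hence we obtain a polynomial-time many-one reduction from $3$-SAT to the problem ``given zones $Z, Z'$ and an $LU$ bounds function, decide $Z \not \lud Z'$'', establishing $\NP$-hardness.

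I do not expect any real obstacle: the only content is the bookkeeping that $\md$ is a bona fide instance of $\lud$ — both that $LU_M$ satisfies the side conditions on bounds functions and that the two relations coincide — together with the observation that $LU_M$ can be written down in polynomial time; all of this is routine and has already been prepared by the lemmas preceding Theorem~\ref{thm:hardness}. (If one prefers the instance size to be driven by the number of clocks rather than by the bounds themselves, note that $M$ stays constant while the clock set grows with $\varphi$, so the reduction is polynomial in either accounting.)
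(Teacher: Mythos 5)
Your proposal is correct and matches the paper's approach exactly: the paper also simply observes that $\md$ is a special case of $\lud$ and concludes NP-hardness directly from Theorem~\ref{thm:hardness}. You have merely spelled out the routine details (that $LU_M$ is a valid bounds function, that $\md$ coincides with $\lud$ for that choice, and that $LU_M$ is polynomial-time computable) which the paper leaves implicit.
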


Since $\fleq_M$ is just a special case of $\lud$, this result follows.


\section{Conclusion}
\label{sec:exper-concl}

In this paper, we have proposed a simulation $\lud$ and a simulation
test $Z \lud Z'$ that facilitates a forward analysis procedure for
timed automata with diagonal constraints. An abstraction function
  on based on $\md$ was already proposed
  in~\cite{Bouyer:2004:forwardanalysis} in the context of forward
  analysis using explicit abstractions, but it was not used as no
  efficient storage mechanisms for non-convex abstractions are known.
Moreover, no simulation test apart from a brute force check of
enumerating over all regions was known either. Here, we provide a more
refined simulation test, which in principle gives a more structured way
of performing this enumeration. In the diagonal free case, this turns
out to be $O(|X|^2)$~\cite{Herbreteau:IandC:2016}. But, as we show
here, in the presence of diagonal constraints, $Z \not \lud Z'$ is
$\NP$-complete. Nevertheless, having this forward analysis framework
creates the possibility to incorporate recent optimizations studied
for diagonal free automata which crucially depend on this inclusion
test, and have been indispensable in improving the performance
substantially~\cite{Herbreteau:2011:FSTTCS,
  Herbreteau:2013:CAV}. Moreover, we believe that this framework can
be extended to various other problems involving timed automata with
diagonal constraints, for instance liveness verification and cost
optimal reachability in priced timed automata.

\renewcommand{\arraystretch}{1.4}
\begin{table}[t]
\scriptsize
\centering
\begin{tabular}{|l|c|c|c|c|c|}
  \hline
  \multicolumn{2}{|c|}{Model} & \multicolumn{2}{|c|}{Diagonal
    constraints + $\alud$} & \multicolumn{2}{|c|}{Diagonal free + $\alu$}
 \\
 \hline
 Name & \# clocks & \# zones & time (in sec.) & \# zones &  time (in
 sec.) \\
 \hline
 Cex 1 & 4 & 8 & 0.07 & 22 & 0.05\\
 \hline 
 Cex 2 & 8 & 437 & 123 & 2051 & 0.11 \\
 \hline
 Fischer 4 & 8 & 2618 & 170 & 73677 & 2.1\\
 \hline
 Fischer 5 & 10 & 15947 & 2170 & 1926991 & 134 \\
 \hline
\end{tabular}
\caption{Experiments to compare forward analysis with diagonal
  constraints versus forward analysis on the equivalent diagonal free
  automaton. ``Fischer K''
  is a model of a communication protocol with K processes
  as described in \cite{Reynier:toolreport}. Cex 1 is the automaton in
  \cite{Bouyer:2004:forwardanalysis} which revealed the bug with 
  the explicit abstraction method. Cex 2 is a similar version
  with more states,
  given in \cite{Reynier:toolreport}. }
\label{tab:experiments}
\end{table}

We have implemented the simulation test $Z \lud Z'$ in a prototype tool
T-Checker~\cite{tchecker} which has been developed for diagonal free
timed automata. The simulation test constructs an SMT formula in linear
arithmetic and invokes the Z3 solver~\cite{z3}. Preliminary
experiments on models from \cite{Reynier:toolreport} are reported in
Table~\ref{tab:experiments}. For each model $\Aa$ (with diagonal
constraints), the table compares the performance of running the
forward analysis approach using $\lud$ on $\Aa$ (Columns 3 and 4)
versus the forward analysis using (diagonal free variant) $\alu$
\cite{Behrmann:STTT:2006} on the equivalent diagonal free automaton
$\Aa_{df}$ (Columns 5 and 6). We observe that there is a significant
decrease in the number of nodes explored while using $\lud$ on
$\Aa$. The problem with $\Aa_{df}$ is that each state $q$ of $\Aa$ has
$2^d$ copies in $\Aa_{df}$ if $d$ is the number of diagonal
constraints (essentially, the states of $\Aa_{df}$ maintain the
information about whether each diagonal is true or false when reaching
this state). Therefore a simulation of the form $ Z \lud  Z')$ arising from
$(q, Z)$ and $(q, Z')$ which
occurs in the analysis of $\Aa$ might not be possible while
analyzing $\Aa_{df}$ just because the corresponding paths reach
different copies of $q$, say $(q_1, Z)$ and $(q_2, Z')$. This prunes
the search faster in $\Aa$. Indeed, exploiting the conciseness of
diagonal constraints could be a valuable tool for modeling and
verifying real-time systems. On the other hand, we note that our
algorithm performs bad in terms of timing due to the costlier
simulation test, even while there is a good reduction in the number of
nodes. Given this decrease in the number of nodes, it is interesting
to investigate efficient methods for $Z \lud Z'$ by making best use of
the SMT solver. This, and comparing our method with other
approaches~\cite{Bouyer:2005:diagonal-refinement} is part of future
work.


\bibliographystyle{plainurl}
\bibliography{main}

\newpage

\appendix

\section{Appendix for Section~\ref{sec:hardness}}

\label{app:hardness}

\subsection{An observation about closed zones}

The aim of this subsection is to prove the following proposition.

\medskip

\Repeat{Proposition}{prop:topo-closed-closure-inclusion}
  Let $Z$ be a topologically closed zone such that the
  $\tight^*$ equivalence classes of every valuation in $Z$ are the
  same. Let $Z'$ be a zone with $Z \not\md Z'$. Then,
  there exists an integral valuation $u \in Z$ such that $\reg{u} \cap
  Z'$ is empty.

\medskip

We will state below some intermediate lemmas before proving this
proposition.

\begin{lemma}\label{lem:topo-closed-integral-valuation}
  Let $Z$ be a topologically closed zone. Then, $Z$ is non-empty iff
  it contains an integral valuation.
\end{lemma}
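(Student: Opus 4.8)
The plan is the standard greedy construction on the canonical distance graph. The direction ``contains an integral valuation $\Rightarrow$ non-empty'' is immediate, so the content is in the converse.

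Assume $Z \neq \es$ and let $G_Z$ be its canonical distance graph, writing $(\lleq_{ij}, c_{ij})$ for the weight of the edge $x_i \to x_j$. Since $Z$ is topologically closed every such weight is either $(\le, c)$ with $c \in \Int$ or $(<, \infty)$, and since $Z \neq \es$ there is no negative cycle (in particular no negative self-loop, so $v$ may safely take value $0$ on $x_0$). List the vertices as $x_0, x_1, \dots, x_n$, with $x_0$ the zero vertex. I would build an integral valuation $v$ by choosing integers $v(x_0) = 0, v(x_1), \dots, v(x_n)$ one at a time, maintaining the invariant: after step $k$, the partial valuation on $\{x_0, \dots, x_k\}$ satisfies every edge constraint of $G_Z$ between these vertices. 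The base case $k = 0$ is trivial.

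The heart of the argument is the inductive step. With integers $v(x_0), \dots, v(x_{k-1})$ already fixed, the edges between $x_k$ and the earlier vertices constrain $v(x_k)$ to the interval $[\ell, u]$, where $\ell = \max_{i < k}\bigl(v(x_i) - c_{ki}\bigr)$ and $u = \min_{i<k}\bigl(v(x_i) + c_{ik}\bigr)$, with the convention that terms carrying weight $\infty$ are omitted (so possibly $\ell = -\infty$ or $u = +\infty$). This interval is non-empty: for any $i, j < k$, canonicity of $G_Z$ gives $c_{jk} + c_{ki} \ge c_{ji}$ (routing through $x_k$ cannot beat the direct edge), while the induction hypothesis gives $v(x_i) - v(x_j) \le c_{ji}$; chaining these yields $v(x_i) - c_{ki} \le v(x_j) + c_{jk}$ for all $i,j < k$, i.e. $\ell \le u$. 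Moreover every finite $c_{ki}, c_{ik}$ is an integer and every $v(x_i)$ is an integer, so the finite endpoints of $[\ell, u]$ are integers, hence the interval contains an integer; pick $v(x_k)$ to be one (e.g. $v(x_k) = \ell$ if $\ell \neq -\infty$, else $v(x_k) = u$ if $u \neq +\infty$, else $v(x_k) = 0$). This restores the invariant. After step $n$, $v$ is integral and satisfies every edge of $G_Z$, hence $v \in Z$.

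I do not expect a real obstacle here: this is essentially the textbook fact that a feasible DBM with only non-strict, integer bounds has an integral point. The only care needed is bookkeeping around $\infty$-weighted edges — which impose no constraint and therefore can never shrink $[\ell, u]$ to the empty set — and the base case, which is covered by the absence of negative cycles.
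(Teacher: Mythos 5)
Your proof is correct, but it takes a genuinely different route from the paper's. The paper's argument is a one-shot observation: starting from any $v \in Z$, define $v'(x) = \lfloor v(x)\rfloor$ and check directly that every constraint $x - y \le c$ (with $c$ an integer) survives the rounding, since $v'(x) - v'(y) \le c + \{v(y)\} - \{v(x)\}$, the right-hand side is $< c+1$, and the left-hand side is an integer. Your argument instead builds the integral valuation coordinate by coordinate from the canonical DBM, using the shortest-path triangle inequality $c_{jk} + c_{ki} \ge c_{ji}$ to show the feasible interval $[\ell,u]$ stays non-empty. Both are valid, but the trade-offs differ: the floor trick is shorter, needs no induction, and does not rely on the DBM being in canonical form -- it works on any system of non-strict difference constraints with integer constants as long as a real solution exists; your greedy construction makes canonicity load-bearing (it is exactly what guarantees $\ell \le u$) and has to reprove feasibility step by step, which the floor argument inherits for free from the given valuation $v$. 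Your approach does have the minor virtue of being constructive in a more explicit, coordinate-local sense, which can be useful if one wants to reuse the machinery (e.g.\ for the related Lemma~\ref{lem:topo-closed-minsum-integral}), but as a proof of this lemma alone the paper's rounding argument is the more economical one.
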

\begin{proof}
  If $Z$ contains an integral valuation, then clearly $Z$ is
  non-empty. Let us prove the converse.  Assume $Z$ is non-empty. Pick
  a valuation $v \in Z$. Consider a new valuation $v'$ defined as
  $v'(x) = \lfloor v(x) \rfloor$, for all clocks $x$.  Note that $v'$ is an integral
  valuation. We will show that $v'$ satisfies each constraint $x - y
  \le c$ in $Z$.  We write $\{ v(x)
  \}$ for the fractional part of $v(x)$. We already know that $v(x) -
  v(y) \le c$. This implies that $(\lfloor v(x) \rfloor - \lfloor v(y)
  \rfloor) + (\{v(x)\} - \{v(y)\}) \le c$. Using the definition of
  $v'$, we get that $v'(x) - v'(y) \le c + \{v(y)\} - \{v(x)\}$. Since
  $\{v(y)\} - \{v(x)\} < 1$ and $v'(x) - v'(y)$ is an integer, we get
  that $v'(x) - v'(y) \le c$.
\end{proof}

Let $G_Z$ be the canonical distance graph of a zone, and let $E$ be a
set of its edges. Define:
\begin{align*}
  \MinSum_Z(E) ~:=~ \min_{v \in Z} \sum\limits_{y \rightarrow x \in E}
  v(x) - v(y)
\end{align*}

The following lemma claims that the above minimum sum is attained by
an integral valuation if $Z$ is topologically closed, and the set of
edges $E$ satisfy a particular property.

\begin{lemma}
  \label{lem:topo-closed-minsum-integral}
  Let $Z$ be a non-empty topologically closed zone. Let $E$ be a set
  of edges in the canonical distance graph $G_Z$ such that no two
  edges in $E$ have common vertices. Moreover, for every edge $y \to
  x$ in $E$, the weight of the reverse edge $x \to y$ is not $(<,
  \infty)$ in $G_Z$. Then, there exists an integral
  valuation $v$ such that $\MinSum_Z(E)$ equals $\sum\limits_{y \to x
    \in E} v(x) - v(y)$.
\end{lemma}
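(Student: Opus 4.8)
The plan is to view $\MinSum_Z(E)$ as the optimum of a linear program over the polyhedron $Z$ and then round an optimal point to an integral one without changing the objective. Write $f(v) = \sum_{y \to x \in E} \bigl(v(x) - v(y)\bigr)$, so that $\MinSum_Z(E) = \inf_{v \in Z} f(v)$. First I would check that this infimum is finite. For each edge $y \to x$ in $E$, the reverse edge $x \to y$ in the canonical graph $G_Z$ has, by topological closedness together with the hypothesis, a weight of the form $(\le, c_{xy})$ with $c_{xy} \in \Int$ (in particular not $(<,\infty)$); hence $v(x) - v(y) \ge -c_{xy}$ for every $v \in Z$. Summing over $E$ gives $f(v) \ge -\sum_{y \to x \in E} c_{xy} > -\infty$. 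Since $Z$ is a non-empty closed polyhedron and $f$ is linear and bounded below on it, the infimum is attained; fix a minimizer $v^* \in Z$ and let $\mu = f(v^*) = \MinSum_Z(E)$.

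Next I would round $v^*$ using the standard ``region corners''. Let $0 = \alpha_0 < \alpha_1 < \dots < \alpha_k$ be the distinct fractional parts occurring among the values $v^*(x)$, $x \in X$ (the value $0$ always occurs, since $v^*(x_0) = 0$). For $j \in \{0, \dots, k\}$ define an integral valuation $v^{(j)}$ by
\[
  v^{(j)}(x) = \begin{cases} \lceil v^*(x) \rceil & \text{if } \{v^*(x)\} > \alpha_j,\\[2pt] \lfloor v^*(x) \rfloor & \text{if } \{v^*(x)\} \le \alpha_j, \end{cases}
\]
noting that $v^{(j)}(x_0) = 0$ for every $j$. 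I claim (i) each $v^{(j)}$ lies in $Z$, and (ii) $v^*$ is a convex combination of the $v^{(j)}$. For (i), it suffices to check that every constraint $v(x) - v(y) \le c$ (with $c \in \Int$) of the canonical graph $G_Z$ that holds at $v^*$ also holds at $v^{(j)}$: the case where both $x$ and $y$ are floored is exactly the computation in Lemma~\ref{lem:topo-closed-integral-valuation}; the case where both are ceiled reduces to it, since each ceiling equals the corresponding floor plus $1$ (the fractional parts exceed $\alpha_j \ge 0$, hence are positive); and the two mixed cases are a short computation using $0 \le \{v^*(x)\}, \{v^*(y)\} < 1$ and the integrality of $c$. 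Here topological closedness is essential, as it guarantees the constraints have this non-strict integral form. For (ii), take $\lambda_j = \alpha_{j+1} - \alpha_j$ for $j < k$ and $\lambda_k = 1 - \alpha_k$; then $\lambda_j \ge 0$, $\sum_j \lambda_j = 1$, and a direct check (separating clocks according to which $\alpha_m$ equals their fractional part) shows $\sum_j \lambda_j v^{(j)}(x) = v^*(x)$ for every clock $x$. Since $f$ is linear, $\mu = f(v^*) = \sum_j \lambda_j f(v^{(j)})$, so $f(v^{(j)}) \le \mu$ for some $j$; by (i) this $v^{(j)} \in Z$, hence $f(v^{(j)}) \ge \mu$, and therefore $f(v^{(j)}) = \mu = \MinSum_Z(E)$. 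This $v^{(j)}$ is the required integral valuation.

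The main obstacle is step (i): showing that the corners $v^{(j)}$ of the region containing $v^*$ remain inside $Z$. This genuinely uses topological closedness (it fails for zones with open facets) and is the place where the case analysis has to be done carefully; everything else is routine. I would remark that the hypothesis that the edges of $E$ are vertex-disjoint is not needed for this particular argument (it is used only when the lemma is later applied), and that the finiteness of the reverse edges is used here solely to ensure $\mu > -\infty$. A more conceptual alternative to the explicit rounding is to observe that $Z$ is cut out by a difference-constraint system with integer right-hand sides, whose constraint matrix is totally unimodular; hence $Z$ has only integral vertices, and the bounded-below linear program $\min_{v \in Z} f(v)$ attains its optimum at an integral vertex.
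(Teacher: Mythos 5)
Your proof is correct and takes a genuinely different route from the paper's.

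The paper proves the lemma combinatorially: it bounds $\MinSum_Z(E)$ from below by taking minimums over permutations $\pi,\pi'$ of the edge endpoints, constructs a modified distance graph $G_{Z_1}$ that forces the optimal differences, shows $G_{Z_1}$ has no negative cycle by a permutation-exchange argument (this is where vertex-disjointness of $E$ is used), and then applies Lemma~\ref{lem:topo-closed-integral-valuation} to extract an integral valuation from the topologically closed zone $\sem{G_{Z_1}}$. Your argument instead treats $\MinSum_Z(E)$ as a linear program: you use the finiteness of the reverse edges to see that the objective is bounded below, invoke the standard fact that a linear functional bounded below on a non-empty polyhedron attains its infimum, and then round the optimizer to the corners $v^{(j)}$ of its region, showing they lie in $Z$ (this is where topological closedness enters) and that the optimizer is a convex combination of them, so one corner also achieves the minimum. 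Your observation that vertex-disjointness of $E$ is \emph{not} needed for your argument is correct -- it is needed only in the paper's permutation-based proof (to rule out consecutive red edges in a putative negative cycle) -- so your proof in fact establishes a slightly stronger statement. The total-unimodularity remark is also a valid shortcut; it is essentially what the region-corner rounding makes explicit. The trade-off is that the paper's proof is self-contained and constructive (it explicitly builds the witness), whereas yours is shorter and more conceptual but leans on general facts about LPs over polyhedra (attainment of the infimum, integrality of vertices of DBM/TU polyhedra). Both are sound; yours would be the cleaner presentation if the background facts are taken for granted.
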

\begin{proof}
  Let $G_Z$ be the canonical distance graph representing $Z$. Suppose
  $E$ is $\{y_1 \to x_1, y_2 \to x_2, \dots, y_k \to x_k \}$. Denote
  by $E_x$ the set $\{x_1, \dots, x_k\}$ and by $E_y$ the set $\{y_1,
  \dots, y_k\}$. By assumption on $E$ that no two edges intersect, we
  get that variables $x_1, \dots, x_k, y_1, \dots, y_k$ are pairwise distinct. Note that
  $\MinSum_Z(E)$ can be rewritten:
  \begin{align*}
    \MinSum_Z(E) = \min_{u \in Z}~\left[~\left( \sum_{x \in E_x} u(x)
      \right) - \left( \sum_{y \in E_y} u(y) \right)~\right]
  \end{align*}
  For $i, j \in \{1, \dots, k\}$ let the weight of the edge $x_i \xra{} y_j$ in
  $G_Z$ be $(\le, c_{ij})$ or $(<,\infty)$ (by assumption we know that
  the weight of $x_i \xra{} y_i$ is not $(<,\infty)$ and hence
  $c_{ii}\neq\infty$).  This implies the constraint $y_j - x_i \le c_{ij}$,
  or seen in a different way: $x_i - y_j \ge -c_{ij}$.

  Let $\mathfrak{S}_k$ denote the set of all permutations of $\{1,
  2, \dots, k\}$.  Each pair of permutations $\pi, \pi' \in \mathfrak{S}_k$
  gives a permutation of $E_x$ and $E_y$ and hence fixes
  a collection of $k$ edges of the form:
  \begin{align*}
    x_{\pi'(i)} \to y_{\pi(i)} \quad \text{ with weight } (\le,
    c_{\pi'(i)\pi(i)}) \text{ or } (<, \infty)
  \end{align*}
  Note that in the above, weights of the form $(<, \infty)$ are
  possible, since we have only guaranteed that for the identity
  permutation, the weights are finite. Call a pair of
  permutations $\pi, \pi'$ to be finite if none of its associated edges
  is $(<, \infty)$. Each finite pair of permutations gives the
  following constraint satisfied by every valuation in the zone:
  \begin{align}
    \label{eq:1}
    \left(~\sum\limits_{i=1}^{i=k} y_{\pi(i)} - x_{\pi'(i)}~ \right)
    ~~\le~~\sum\limits_{i=1}^{i=k} c_{\pi'(i)\pi(i)}
  \end{align}
  Call the sum on the right hand side as $c_{\pi',\pi}$. Rewriting the
  above equation gives the following constraint:
  \begin{align*}
    \left( \sum_{y \in E_y} y \right) - \left( \sum_{x \in E_x} x
    \right) ~~\le~~c_{\pi'\pi}
  \end{align*}
  Since this is true for every finite pair of permutations $\pi,\pi'$, we get
  the following constraint from $G_Z$:
  \begin{align*}
    \left( \sum_{y \in E_y} y \right) - \left( \sum_{x \in E_x} x
    \right) ~~\le~~\min_{\pi, \pi' \text{ finite }} c_{\pi'\pi}
  \end{align*}
  Let $c^*$ the minimum value given by the right hand side of the
  above equation. As the identity permutation is finite, we ensure
  that $c^*$ will be a finite value. Moreover, since this is a contraint obtained from $G_Z$,
  every valuation $u \in Z$ satisfies it. This gives:
  \begin{align*}
    \left( \sum\limits_{x\in E_x} u(x) \right) - \left( \sum\limits_{y
        \in E_y} u(y) \right) ~~ \ge~~ -c^*
  \end{align*}
  Hence we get that $\MinSum_Z(E) \ge -c^*$.  We claim that there
  exists an integral valuation $v \in Z$ for which the associated sum
  attains the value $-c^*$. This will prove the lemma.

  Assume that the minimum value $c^*$ is obtained with permutations
  $\pi, \pi'$. Consider a distance graph $G_{Z_1}$ obtained from $G_Z$
  by setting the $y_{\pi(i)} \to x_{\pi'(i)}$ edge to $(\le,
  -c_{\pi'(i)\pi(i)})$ for all $i \in \{1, \dots, k\}$ and keeping the
  rest of the edges same as in $G_Z$. This gives a zero cycle
  $y_{\pi(i)} \to x_{\pi'(i)} \to y_{\pi(i)}$ and amounts to saying
  that every valuation in $\sem{G_{Z_1}}$ has $x_{\pi'(i)} -
  y_{\pi(i)} = -c_{\pi'(i)\pi(i)}$ for every $i$. This means that for every
  valuation  $v_1 \in G_{Z_1}$, we have:
  \begin{align*}
     ~\sum\limits_{i=1}^{i=k} \left(~ v_1(x_{\pi'(i)}) -
      v_1(y_{\pi(i)}) ~\right) &  = -c^*
    \\
\imp  ~ \sum_{x \in E_x} v_1(x) - \sum_{y \in E_y} v_1(y) &  = - c^*
  \end{align*}
  Moreover, note that  $G_{Z_1}$ represents a topologically
  closed zone, since all weights are $(\le,c)$ or $(<, \infty)$
  (which are inherited from $G_Z$). If $\sem{G_{Z_1}}$ is non-empty, we can
  employ Lemma \ref{lem:topo-closed-integral-valuation} to say that
  there exists an integral valuation in $\sem{G_{Z_1}}$ which from the
  above discussion would attain the minimum sum.

  We will now show that $\sem{G_{Z_1}}$ is indeed non-empty, for which
  it is sufficient to show that there
  are no negative cycles in $G_{Z_1}$. Since we started with a
  non-empty zone $G_Z$, any negative cycle in $G_{Z_1}$ would be due
  to the modified edges. Colour all the modified edges $y_{\pi(i)} \to
  x_{\pi'(i)}$ by red, and make the rest of the edges green. Note that
  the sum of all the red edges gives $-c^*$.

  Suppose $G_{Z_1}$ has a negative cycle $C$.  Two consecutive edges in $C$
  cannot be coloured red as this would contradict the fact that no two edges in
  $E$ have common vertices (all clocks $x_1, \dots, x_k, y_1, \dots, y_k$ are
  distinct).  Since $G_Z$ is canonical, a maximal path of green edges in $C$ can
  be replaced with a single green edge $r \to t$ in $G_Z$ from its source $r$ to
  its target $t$ (notice that by maximality, $r\to t$ cannot be red).
  Therefore, we can assume that the negative cycle $C$ consists of
  alternating red and green edges and takes the following form (green edges are
  shown as $\to$ and red edges are shown using $\redarrow$).
  \begin{align*}
    C: \quad y_{\pi(i_1)} \redarrow x_{\pi'(i_1)} \to y_{\pi(i_2)}
    \redarrow x_{\pi'(i_2)} \to \cdots \to y_{\pi(i_m)} \redarrow
    x_{\pi'(i_m)} \to y_{\pi(i_1)}
  \end{align*}
  Recall that the weight of a red arrow $y_{\pi(i_p)} \redarrow
  x_{\pi'(i_p)}$ is $(\le, -c_{\pi'(i_p)\pi(i_p)})$ and the weight of
  a green arrow $x_{\pi'(i_p)} \to y_{\pi(i_{p+1})}$ is $(\le,
  c_{\pi'(i_p) \pi(i_{p+1})})$. The fact that $C$ is a negative cycle
  implies:
  \begin{align*}
    c_{\pi'(i_1) \pi(i_2)} + c_{\pi'(i_2)\pi(i_3)} + \dots +
    c_{\pi'(i_m)\pi(i_1)} < c_{\pi'(i_1)\pi(i_1)} + c_{\pi'(i_2)
      \pi(i_2)} + \dots + c_{\pi'(i_m) \pi(i_m)}
  \end{align*}
  Define permutation $\pi_1$ such that $\pi_1(i_1) = \pi(i_2),
  \pi_1(i_2)= \pi(i_3), \dots, \pi_1(i_m) = \pi(i_1)$ and $\pi_1(j) =
  \pi(j)$ for $j \notin \{i_1, \dots, i_m\}$. The above equation
  suggests that $\sum_{i=1}^{i=k} c_{\pi'(i)\pi_1(i)}$ is strictly
  smaller than $\sum_{i=1}^{i=k} c_{\pi'(i)\pi(i)}$. This contradicts
  that $\pi',\pi$ gave the minimum sum. Hence there cannot be a
  negative cycle in $G_{Z_1}$.
\end{proof}

Let $\gvm$ be the distance graph for $\reg{v}$ as defined in Definition \ref{def:gvlu-distance-graph}.
Then every edge $y \to x$ in $\gvm$ has weight either $(<,-M)$ (when $v(x) - v(y) < -M$) or $(\le, v(x) - v(y))$ (when $-M \le v(x) - v(y) \le M$, that is when $y \tight x$).
In $\gvm$ if for some pair of clocks $x,y$ there is no edge $y \to x$ and $y \tight^* x$, add an edge $y \to x$ with weight $(\le, v(x) - v(y))$.
Let us call this new graph $\gv$.

%
%
%
%
%
\begin{proof}[Proof of Proposition~\ref{prop:topo-closed-closure-inclusion}]
  Since $Z
  \not \md Z'$ there exists a $v \in Z$ such that $\reg{v}
  \cap Z'$ is empty.
  From Theorem \ref{thm:alu-inclusion} we have that 
  $\min(\gvm, G_{Z'})$ contains a negative cycle in which no two consecutive edges are from $G_{Z'}$.
  Since all the edges that are present in $\gvm$ are also present in $\gv$, this negative cycle in $\min(\gvm, G_{Z'})$ will also be present in $\min(\gv, G_{Z'})$, let us call this cycle $N^*$.
  We can replace every sequence of consecutive edges $\{ x_1 \to x_2, x_2 \to x_3, \dots, x_{k-1} \to x_k \}$ in $N^*$, where each edge $x_i \to x_j$ has weight $(\le, v(x_j) - v(x_i))$, with an edge $x_1 \to x_k$ with weight $(\le, v(x_k) - v(x_1))$.
  This new edge is also from $\gv$ since $x_1 \tight^* x_k$. Hence this modified cycle, call it $N$, would be a part of $\min(\gv, G_{Z'})$ and would still be negative.
  We can then assume that no two
  tight edges (that is, $y \to x$ such that
  $y \tight^* x$) coming from $\gv$ in $N$ are consecutive.
  We first colour the edges:
  \begin{description}
  \item[Red:] all edges from $G_{Z'}$,
  \item[Yellow:] all $y \to x$ edges from $\gv$ that have $y \tight^*
    x$, we will denote them by $x \dra y$,
  \item[Blue:] all $y \to x$ edges from $\gv$ with weight $(<, -M)$.
  \end{description}
  Let the sum of the weights of the red, yellow and blue edges be
  $W_{red}$, $W_{yellow}$ and $W_{blue}$ respectively. We know that
  $W_{red} + W_{yellow} + W_{blue}$ is strictly less than $(\le, 0)$.

  Let $E$ be the set of yellow edges.
  Since no two consecutive edges
  are yellow, we get that no two edges in $E$ have
  common vertices. For every edge $y \yellowedge x$ in $E$,
  as $y \tight^* x$, we can infer that $x \to y$ will be a finite edge
  in $G_Z$. We can then use Lemma
  \ref{lem:topo-closed-minsum-integral} to get an integral valuation
  $u$ in which the sum of weights of the yellow edges has a value
  $W'_{yellow} \le W_{yellow}$. Morever, as we have assumed that every
  valuation in $Z$ has the same $\tight^*$ equivalence classes, each
  yellow edge $y \dra x$ in $\gu$ will have weight $(\le, u(x) -
  u(y))$. Therefore, the sum of
  the weights of the yellow edges in $\gu$ will be
  $W'_{yellow}$. Similarly, each blue edge $y \to x$ will be $(<,
  -M)$ in $\gu$, and hence the
  sum of the weights of all blue edges in $\gu$ would be the same
  $W_{blue}$. This shows that the cycle in $\min(\gu, G_{Z'})$ given
  by the edges of $N$ will have value $W_{red} +
  W'_{yellow} + W_{blue}$ which will be negative. This proves that
  $\reg{u} \cap Z'$ is empty.
\end{proof}

\subsection{Reduction from SAT}

We elaborate the proof of the following theorem.

\medskip

\Repeat{Theorem}{thm:hardness} Formula $\varphi$ is satisfiable iff $Z
\not \md Z'$. The decision problem $Z \not \md Z'$
is $\NP$-hard.

\medskip

\begin{lemma}\label{lem:gaps-between-r's} For every $u'$ satisfying
(\ref{eq:within-a-block-Z'}, \ref{eq:border-difference-Z'},
\ref{eq:non-border-Z'}), we have $8M \leq u'(r_i) - u'(r_{i-1}) \leq
8M + 1$ for $i \in \{1, \dots, N\}$.
\end{lemma}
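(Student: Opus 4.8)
The plan is a one-shot telescoping computation. First I would write $u'(r_i) - u'(r_{i-1})$ as the sum of the consecutive-clock differences along the chain $r_{i-1} \to x_i^1 \to y_i^1 \to z_i^1 \to x_i^2 \to \cdots \to z_i^3 \to r_i$; since the intermediate clock values cancel, this is a purely algebraic identity that uses no hypothesis. Grouping the telescoping sum one literal block at a time, I would invoke $z_i^j - x_i^j = 3$ from (\ref{eq:within-a-block-Z'}), so the three blocks contribute exactly $9$, leaving the four ``between-block'' gap differences $x_i^1 - r_{i-1}$, $x_i^2 - z_i^1$, $x_i^3 - z_i^2$, and $r_i - z_i^3$ to account for.

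The one point that needs care is identifying which of these four gaps carries the border constraint. By the definition of $(e_i, f_i)$ in (\ref{eq:border-clocks}), together with the convention that in $C_i$ the positive literals are listed before the negative ones, exactly one of the four gaps equals $f_i - e_i$: it is the first gap when all literals of $C_i$ are negative, the last when all are positive, and one of the two middle gaps otherwise. For that distinguished gap, (\ref{eq:border-difference-Z'}) gives $2M \le u'(f_i) - u'(e_i) \le 2M + 1$; for each of the other three gaps, (\ref{eq:non-border-Z'}) forces the exact value $2M - 3$. Since these four gaps are genuinely distinct clock pairs, the hypotheses impose no conflicting requirements.

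Putting this together,
\[
u'(r_i) - u'(r_{i-1}) = 9 + 3(2M - 3) + \bigl(u'(f_i) - u'(e_i)\bigr) = 6M + \bigl(u'(f_i) - u'(e_i)\bigr),
\]
and the bound $u'(f_i) - u'(e_i) \in [2M, 2M+1]$ immediately yields $8M \le u'(r_i) - u'(r_{i-1}) \le 8M + 1$. I do not anticipate any real obstacle here: the only non-mechanical ingredient is the small case analysis for the border position, and everything else is telescoping plus substitution of the constrained values.
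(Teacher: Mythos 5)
Your proof is correct and follows essentially the same telescoping decomposition as the paper: split $u'(r_i)-u'(r_{i-1})$ into the seven consecutive gaps, read off $3$ for each literal block from (\ref{eq:within-a-block-Z'}), identify the single border gap via (\ref{eq:border-clocks}) and bound it by (\ref{eq:border-difference-Z'}), and fix the other three gaps at $2M-3$ via (\ref{eq:non-border-Z'}). The arithmetic $9 + 3(2M-3) + (f_i-e_i) = 6M + (f_i-e_i)$ checks out and gives exactly the stated bounds.
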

\begin{proof}
  Pick a $u'$ satisfying (\ref{eq:within-a-block-Z'},
  \ref{eq:border-difference-Z'}, \ref{eq:non-border-Z'}). We have
  $u'(r_i) - u'(r_{i-1})$ to be equal to:
  \begin{align*}
    & \qquad  u'(r_i) - u'(z_i^3) ~+~ u'(z_i^3) - u'(x_i^3) ~+~ u'(x_i^3) - u'(z_i^2) ~+ \\
    & u'(z_i^2) - u'(x_i^2) ~+~ u'(x_i^2) - u'(z_i^1) ~+~ u'(z_i^1) -
    u'(x_i^1) ~+~ u'(x_i^1) - u'(r_{i-1})
  \end{align*}
  From (\ref{eq:within-a-block-Z'}), we know that $v'(z_i^j) -
  v'(x_i^{j})$ is $3$. Among the other $4$ clock differences, we know
  that exactly one difference indicates the border, and from
  (\ref{eq:border-difference-Z'}) it lies between $2M$ and $2M +
  1$. The rest are $2M - 3$ due to (\ref{eq:non-border-Z'}). This
  shows that $8M \leq v(r_i) - v(r_{i-1}) \leq 8M + 1$.
\end{proof}

\begin{lemma}\label{lem:sat-diag} Let $v \in Z$ be an integer tight
valuation such that $\s_v \models \varphi$. Then, $v' \not \in Z'$ for
every $v'$ satisfying $v' \regeq_M v$.
\end{lemma}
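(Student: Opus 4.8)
The plan is to argue by contradiction: assume there is a valuation $v'$ with $v' \regeq_M v$ and $v' \in Z'$, and derive a violation of one of the constraints defining $Z'$. The starting observation is that $\regeq_M$ preserves all tight clock differences exactly (this is exactly how $\md$ was characterized as a bisimulation), so for every literal block $(x_i^j, y_i^j, z_i^j)$ we have $v'(y_i^j) - v'(x_i^j) = v(y_i^j) - v(x_i^j)$ and $v'(z_i^j) - v'(x_i^j) = 3$. Since $v$ is integer tight and lies in $Z$, the number $v(y_i^j) - v(x_i^j)$ equals $1$ or $2$ according to whether $\s_v$ maps $\var(l_i^j)$ to $\ttt$ or $\fff$; hence the same holds for $v'$, i.e. $v'$ records the assignment $\s_v$ on its $y-x$ differences. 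Moreover $v'$ satisfies (\ref{eq:within-a-block-Z'}, \ref{eq:border-difference-Z'}, \ref{eq:non-border-Z'}), so Lemma~\ref{lem:gaps-between-r's} applies and $8M \le v'(r_i) - v'(r_{i-1}) \le 8M+1$ for each $i$.

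Next I would locate the clause whose ``slot'' is activated in $v'$. Summing the inequalities of Lemma~\ref{lem:gaps-between-r's} over $i = 1, \dots, N$ and comparing with (\ref{eq:choose-clause-Z'}), i.e. $v'(r_N) - v'(r_0) \ge 8MN + 1$, forces $v'(r_i) - v'(r_{i-1}) > 8M$ for some index $i$. Decomposing the stretch from $r_{i-1}$ to $r_i$ into the three blocks (total width $3$ each), the three non-border gaps (each exactly $2M-3$ by (\ref{eq:non-border-Z'})) and the single border gap $f_i - e_i$ gives $v'(r_i) - v'(r_{i-1}) = 6M + (v'(f_i) - v'(e_i))$, whence $v'(f_i) - v'(e_i) > 2M$ strictly.

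Now I would use satisfiability. Since $\s_v \models \varphi$, in particular $\s_v \models C_i$, so some literal $l_i^j$ of $C_i$ is true under $\s_v$. If $l_i^j$ is positive then $v'(y_i^j) - v'(x_i^j) = 1$, so $v'(z_i^j) - v'(y_i^j) = 2$; walking from $y_i^j$ to $e_i$ through the $d_i^j$ intervening positive blocks (each preceded by a non-border gap, so contributing $3 + (2M-3) = 2M$) yields $v'(e_i) - v'(y_i^j) = d_i^j \cdot 2M + 2$, and therefore $v'(f_i) - v'(y_i^j) = (v'(f_i) - v'(e_i)) + d_i^j\cdot 2M + 2 > d_i^j \cdot 2M + (2M + 2)$, contradicting (\ref{eq:true-implies-constraint-not-satisfied-in-Z'}). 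The negative case is symmetric: if $l_i^j$ is negative and true then $v'(y_i^j) - v'(x_i^j) = 2$, walking from $f_i$ to $y_i^j$ through the $d_i^j$ intervening negative blocks gives $v'(y_i^j) - v'(f_i) = d_i^j \cdot 2M + 2$, so $v'(y_i^j) - v'(e_i) > d_i^j \cdot 2M + (2M + 2)$, again contradicting (\ref{eq:true-implies-constraint-not-satisfied-in-Z'}).

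The delicate point — and where the constants in the construction are pinned down — is the ``$+2$'' appearing in $v'(e_i) - v'(y_i^j)$ (resp. $v'(y_i^j) - v'(f_i)$): this margin is exactly $2$ \emph{because} the literal is true, and it would drop to $1$ if the literal were false, in which case the combined bound $d_i^j\cdot 2M + (2M+1)$ is compatible with $v'(f_i) - v'(e_i) \le 2M+1$ and no contradiction arises. So the main effort in a full proof is this careful bookkeeping of the walk: verifying that every gap crossed is indeed a non-border gap of width $2M-3$, that $d_i^j$ counts precisely the blocks traversed, and that the strict inequality $v'(f_i) - v'(e_i) > 2M$ is what tips the total past the allowed $d_i^j \cdot 2M + (2M+2)$. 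I do not expect any conceptual obstacle beyond getting these case analyses (last positive literal, first negative literal, all-positive / all-negative clauses) to line up with the definition of the border clocks $(e_i,f_i)$.
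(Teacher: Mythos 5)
Your proposal is correct and follows essentially the same route as the paper's proof: use that $\regeq_M$ preserves tight differences, apply Lemma~\ref{lem:gaps-between-r's} together with (\ref{eq:choose-clause-Z'}) to locate an index $i$ with $v'(f_i)-v'(e_i)>2M$, and then walk from $y_i^j$ (for a true literal $l_i^j$) to the border across $d_i^j$ blocks of width $2M$ each to contradict (\ref{eq:true-implies-constraint-not-satisfied-in-Z'}). Your explicit decomposition $v'(r_i)-v'(r_{i-1})=6M+(v'(f_i)-v'(e_i))$ and the remark on why the margin is $2$ rather than $1$ when the literal is true are just slightly more detailed versions of steps the paper leaves implicit.
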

\begin{proof}
  Pick a $v'$ that is $\regeq_M$ equivalent to $v$. Note that by
  definition of $\regeq_M$, the tight differences in $v$ remain the
  same in $v'$, and the non-tight differences are $> M$. Therefore
  $v'$ satisfies (\ref{eq:within-a-block-Z'}). If $v'$ does not
  satisfy either (\ref{eq:border-difference-Z'}),
  (\ref{eq:non-border-Z'}) or (\ref{eq:choose-clause-Z'}), then $v'
  \not \in Z'$ and we are done.  Otherwise, we have a $v'$ satisfying
  all these constraints.
  By Lemma~\ref{lem:gaps-between-r's}, we have $8M \le v'(r_i) - v'(r_{i-1})
  \le 8M + 1$ for all $i\in\{1,\dots,N\}$.

  From the previous assumption that $v'$ satisfies
  (\ref{eq:choose-clause-Z'}), we get that there is some $i$ for which
  $v'(r_i) - v'(r_{i-1}) > 8M$.  Let us now fix this $i$.  As $v'$
  satisfies (\ref{eq:within-a-block-Z'},\ref{eq:border-difference-Z'},\ref{eq:non-border-Z'}),
  we will have that
  $v'(f_i) - v'(e_{i}) > 2M$.

  Since $v$ is an integer tight valuation such that $\s_v \models
  \varphi$, some literal $l_i^j$ should evaluate to true: by encoding
  scheme in Page~\pageref{encoding}, this means that $v'(y_i^j) - v'(x_i^j)$
  is $1$ if $l_i^j$ is positive and $2$ otherwise. Recall the
  definition of $d_i^j$ as used in
  (\ref{eq:true-implies-constraint-not-satisfied-in-Z'}).

  Finally, from the above discussion and assumptions on $v'$, we get
  that:
  \begin{align*}
    v'(f_i) - v'(y_i^j) & > 2M + d_i^j \cdot (2M) + 2 \quad \text{ if
      $l_i^j$
      is positive} \\
    v'(y_i^j) - v'(e_i) & > 2 + d_i^j\cdot (2M) + 2M \quad \text{ if
      $l_i^j$ is negative}
  \end{align*}
  This contradicts constraint
  (\ref{eq:true-implies-constraint-not-satisfied-in-Z'}) of $Z'$,
  thereby proving that $v' \not \in Z'$.
\end{proof}

\begin{lemma}\label{lem:diag-sat} Let $v \in Z$ be an integer tight
valuation such that for all valuations $v'$ satisfying $v' \regeq_M
v$, we have $v' \not \in Z'$. Then, $\s_v \models \varphi$.
\end{lemma}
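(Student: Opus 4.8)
The plan is to prove the contrapositive: assuming $\s_v\not\models\varphi$, I will exhibit a valuation $v'$ with $v'\regeq_M v$ and $v'\in Z'$, which contradicts the hypothesis. If $\s_v\not\models\varphi$, then some clause $C_{i_0}$ is falsified by $\s_v$, so all three literals $l_{i_0}^1,l_{i_0}^2,l_{i_0}^3$ are false under $\s_v$. The candidate witness is the representative valuation $v_{i_0}$ of $C_{i_0}$ (the one built from $v$ just before ``Construction of $Z'$''): it carries the extra unit $f_{i_0}-e_{i_0}=2M+1$ on the border of $C_{i_0}$, keeps every other border at $2M$, copies all within-block differences of $v$, and sets $v_{i_0}(r_0)=0$. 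I would first check $v_{i_0}\regeq_M v$: by construction $v_{i_0}$ preserves exactly the differences $y_i^j-x_i^j$ and $z_i^j-y_i^j$, which (since $M>3$ and within-block gaps lie in $[1,3]$) are precisely the $\tight$ pairs of $v$; any other pair of clocks is separated by at least one between-block gap, which exceeds $M$ both in $v$ (where such gaps are $\ge 2M-3$) and in $v_{i_0}$ (where they are $2M-3$, $2M$ or $2M+1$), and as the two valuations order the clocks identically (c.f.~(\ref{eq:order-in-Z})) these differences agree in sign. Hence $v_{i_0}\regeq_M v$.

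The heart of the argument is showing $v_{i_0}\in Z'$, i.e., that $v_{i_0}$ satisfies the five families~(\ref{eq:within-a-block-Z'}), (\ref{eq:border-difference-Z'}), (\ref{eq:non-border-Z'}), (\ref{eq:choose-clause-Z'}), (\ref{eq:true-implies-constraint-not-satisfied-in-Z'}). The first three are immediate from the definition of $v_{i_0}$: within-block differences are inherited from $v\in Z$, the borders are $2M$ or $2M+1$, and all non-border consecutive gaps equal $2M-3$. For~(\ref{eq:choose-clause-Z'}) I would sum the gaps inside each clause block to get $v_{i_0}(r_i)-v_{i_0}(r_{i-1})=8M$ for $i\neq i_0$ and $=8M+1$ for $i=i_0$, so $v_{i_0}(r_N)-v_{i_0}(r_0)=8MN+1$. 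The delicate family is~(\ref{eq:true-implies-constraint-not-satisfied-in-Z'}). For a positive literal $l_i^j$ I would expand $v_{i_0}(f_i)-v_{i_0}(y_i^j)$ along the clock order from $y_i^j$ up to $e_i$ and then across the border to $f_i$, obtaining $(z_i^j-y_i^j)+d_i^j\cdot 2M+(f_i-e_i)$, where $z_i^j-y_i^j=3-(y_i^j-x_i^j)\in\{1,2\}$ since $v$ is integer tight, and $d_i^j$ is exactly the count of intermediate positive blocks. If $i\neq i_0$ the border term is $2M$, so this quantity is at most $d_i^j\cdot 2M+(2M+2)$ irrespective of the literal's value; if $i=i_0$ the border term is $2M+1$ and the false literal $l_{i_0}^j$ forces $y_{i_0}^j-x_{i_0}^j=2$, hence $z_{i_0}^j-y_{i_0}^j=1$, so the sum equals exactly $d_{i_0}^j\cdot 2M+(2M+2)$. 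The negative-literal constraints $v_{i_0}(y_i^j)-v_{i_0}(e_i)$ are symmetric, now using $y_{i_0}^j-x_{i_0}^j=1$ when the false literal $l_{i_0}^j$ is negative. In every case the $\le$ of~(\ref{eq:true-implies-constraint-not-satisfied-in-Z'}) holds, so $v_{i_0}\in Z'$; together with $v_{i_0}\regeq_M v$ this contradicts the hypothesis, and therefore $\s_v\models\varphi$.

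I expect the main obstacle to be the arithmetic bookkeeping in verifying~(\ref{eq:true-implies-constraint-not-satisfied-in-Z'}): one must handle the degenerate forms of the border pair $(e_i,f_i)$ in~(\ref{eq:border-clocks}) (when $e_i=r_{i-1}$ or $f_i=r_i$), count precisely $d_i^j$ blocks, and track the two ``extra'' units — the one from $z-y$ versus $y-x$ being $1$ or $2$, and the one from the border being $2M+1$ rather than $2M$. This is exactly where the dichotomy driving the reduction lives: a false literal of the distinguished clause lands the sum precisely at the bound, so the constraint is met and $v_{i_0}\in Z'$, whereas a true literal would overshoot it, which is the content of Lemma~\ref{lem:sat-diag} in the converse direction.
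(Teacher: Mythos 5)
Your proposal is correct and amounts to the contrapositive of the paper's own argument. The paper fixes an arbitrary clause index $i$, observes $v_i \regeq_M v$ and hence $v_i \notin Z'$, checks that $v_i$ already satisfies (\ref{eq:within-a-block-Z'})--(\ref{eq:choose-clause-Z'}) by construction, localizes the violated constraint to some instance of (\ref{eq:true-implies-constraint-not-satisfied-in-Z'}) with clause index $i$ itself (ruling out $i' \neq i$ exactly as you do, using the $2M$ border), and reads off that the corresponding literal is true under $\s_v$. You instead assume a clause $C_{i_0}$ is falsified, and show $v_{i_0}$ lies in $Z'$ to contradict the hypothesis — same representative valuation, same case split on $i = i_0$ versus $i \neq i_0$, same arithmetic on $d_i^j \cdot 2M + (2M+2)$ with the border and within-block contributions tracked identically. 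The only difference is the logical direction; the decomposition and key computations coincide.
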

\begin{proof}
  Suppose $v \in Z$ is an integer tight valuation such that $\reg{v}$
  does not intersect $Z'$. Let $i\in\{1,\dots,N\}$ and $v_i$ be the
  valuation
  defined below \eqref{eq:border-clocks}. By definition, all tight
  differences $v_i$ are the same as in $v$. Hence $v_i \regeq_M v$,
  and by hypothesis $v_i \not \in Z'$, implying that $v_i$ does not
  satisfy some constraint of $Z'$.

  Valuation $v_i$ satisfies constraints given by
  (\ref{eq:within-a-block-Z'}), (\ref{eq:border-difference-Z'}),
  (\ref{eq:non-border-Z'}) and (\ref{eq:choose-clause-Z'}) by
  construction. The reason that $v_i \not \in Z'$ is therefore due to
  violation of some constraint given by
  (\ref{eq:true-implies-constraint-not-satisfied-in-Z'}).

  For each clause $i' \neq i$, by definition of $v_i$ we have
  $v_i(f_{i'}) - v_i(e_{i'}) = 2M$ and we deduce:
  \begin{align*}
    v_i(f_{i'}) - v_i(y_{i'}^j) & = d_{i'}^j \cdot 2M + 2M +
    v_i(z_{i'}^j) -
    v_i(y_{i'}^j) \quad \text{ if  $l_{i'}^j$ is positive} \\
    v_i(y_{i'}^j) - v_i(e_{i'}) & = v_i(y_{i'}^j) - v_i(x_{i'}^j) +
    d_{i'}^j \cdot 2M + 2M \quad \text{ if $l_{i'}^j$ is negative}
  \end{align*}
  Since $v_i(z_{i'}^j) - v_i(y_{i'}^j) \le 2$ and $v_i(y_{i'}^j) -
  v_i(x_{i'}^j) \le 2$, we get that
  (\ref{eq:true-implies-constraint-not-satisfied-in-Z'}) will be
  satisfied for all clauses $i' \neq i$ and $j\in\{1,2,3\}$.
  Therefore the only possible violation can occur in clause $i$. By
  definition of $v_i$ we have:
  \begin{align*}
    v_i(f_{i}) - v_i(y_{i}^j) & = d_{i}^j \cdot 2M + 2M + 1 +
    v_i(z_{i}^j) -
    v_i(y_{i}^j) \quad \text{ if  $l_{i}^j$ is positive} \\
    v_i(y_{i}^j) - v_i(e_{i}) & = v_i(y_{i}^j) - v_i(x_{i}^j) +
    d_{i}^j \cdot 2M + 2M + 1 \quad \text{ if $l_{i}^j$ is negative}
  \end{align*}
  Since this constraint should be false for some $l_i^j$ we get that
  $v_i(z_i^j) - v_i(y_{i}^j) > 1$ if $l_{i}^j$ is positive and
  $v_i(y_{i}^j) - v_i(x_{i}^j) > 1$ if $l_{i}^j$ is negative. Recall
  that $v$ and $v_i$ are integer tight. Therefore $v_i(y_i^j) -
  v_i(x_i^j) = 1$ when $l_i^j$ is positive and $v_i(y_i^j) -
  v_i(x_i^j) = 2$ if $l_i^j$ is negative.  As valuation $v$ has the
  same value for these differences, by the encoding scheme in Page
  \pageref{encoding}, we get that the literal $l_i^j$
  evaluates to true in $\s_v$. As $i$ was arbitrary, we get that some
  literal in each clause evaluates to true, and hence $\s_v \models
  \varphi$.
\end{proof}

\emph{Proof of Theorem~\ref{thm:hardness}}.
Suppose $\varphi$ is satisfied by assignment $\s$. Pick valuation $v
  \in Z$ that encodes $\s$, that is $\s_v = \s$. Lemma
  \ref{lem:sat-diag} shows that $\reg{v} \cap Z'$ is empty. Hence $Z
  \not \md Z'$.

  Suppose $Z \not \md Z'$. Observe that by the constraints
  that define $Z$, every valuation in $Z$ has the same $\tight^*$
  equivalence classes. Moreover $Z$ is topologically closed. Hence by
  Proposition \ref{prop:topo-closed-closure-inclusion}, there exists
  an integer tight valuation $v \in Z$ such that $\reg{v} \cap Z'$ is
  empty. By Lemma \ref{lem:diag-sat}, assignment $\s_v$ satisfies
  $\varphi$.

  NP-hardness follows since we have given a polynomial time reduction from
  3-SAT.


\end{document}